\newcolumntype{H}{>{\setbox0=\hbox\bgroup}c<{\egroup}@{}}
\DeclareMathOperator*{\argmin}{arg\,min}
\DeclarePairedDelimiter\floor{\lfloor}{\rfloor}
\theoremstyle{plain}
\newtheorem{lemma}{Lemma}
\newtheorem{thm}{Theorem}
\newtheorem{asp}{Assumption}
\newtheorem{prop}{Proposition}
\newtheorem{cor}{Corollary}
\newcommand{\mP}{\mathbb{P}}
\algnewcommand\INPUT{\item[\algorithmicinput]}
\algnewcommand\OUTPUT{\item[\algorithmicoutput]}
\theoremstyle{definition}
\newtheorem{defn}{Definition}
\title{Adversarially Robust Change Point Detection}
\author{%
  Mengchu Li \\
  Department of Statistics\\
  University of Warwick\\
  \texttt{mengchu.li@warwick.ac.uk} \\
  \And
  Yi Yu \\
  Department of Statistics \\
  University of Warwick \\
  \texttt{yi.yu.2@warwick.ac.uk}
}
\begin{document}

\maketitle

\begin{abstract}
Change point detection is becoming increasingly popular in many application areas.  On one hand, most of the theoretically-justified methods are investigated in an ideal setting without model violations, or merely robust against identical heavy-tailed noise distribution across time and/or against isolate outliers; on the other hand, we are aware that there have been exponentially growing attacks from adversaries, who may pose systematic contamination on data to purposely create spurious change points or disguise true change points.   In light of the timely need of a change point detection method that is robust against adversaries, we start with, arguably, the simplest univariate mean change point detection problem.  The adversarial attacks are formulated through the Huber $\varepsilon$-contamination framework, which in particular allows the contamination distributions to be different at each time point. In this paper, we demonstrate a phase transition phenomenon in change point detection. This detection boundary is a function of the contamination proportion~$\varepsilon$ and is the first time shown in the literature.  In addition, we derive the minimax-rate optimal localisation error rate, quantifying the cost of accuracy in terms of the contamination proportion.  We propose a computationally-feasible method, matching the minimax lower bound under certain conditions, saving for logarithmic factors.  Extensive numerical experiments are conducted with comparisons to existing robust change point detection methods.
\end{abstract}

\section{Introduction}\label{sec-intro}

Change point detection is attracting tremendous attention due to the demand from various application areas, including bioinformatics \citep[e.g.][]{olshen2004circular,futschik2014multiscale}, climatology \citep[e.g.][]{reeves2007review,gallagher2013changepoint} and finance \citep[e.g.][]{pepelyshev2015real,pastor2001equity}, among many others.  In the last few decades, a vast body of methods and theory on change point analysis have been studied based on different data types \citep[e.g.][]{page1954continuous,aue2009break,killick2012optimal,wang2018high,wang2018optimal,wang2021optimal}.  The majority of the methods are studied in a model-specific way, in the sense that the noise distributions are sub-Gaussian/sub-Exponential and the between change points data are independent and identically distributed.  The few robust change point detection results \citep[e.g.][]{huvskova1991recursive,fisch2018linear,fearnhead2019} are designed against isolated outliers and/or heavy-tailed noise. 

In recent years, we are aware of the risk of adversary attacks in emerging application areas, ranging from image classification \cite{kurakin2017adversarial}, object detection \cite{song2018physical}, to natural language processing \cite{jia2017adversarial} and beyond. In this paper, we consider a change point analysis setting possibly attacked by adversaries. As a teaser, we consider a concrete climate data set from \cite{airquality} containing the daily average PM2.5 index data in Beijing from 15-Apr-2017 to 15-Feb-2021.  The original and adversarially contaminated  data (with a spurious change point created on 17-Jan-2020) are shown in Figure \ref{teaser}, where the orange points in the right panel denote the adversarially chosen contamination points.  As we will see more details in Sections \ref{realdata} and \ref{realdatasupp}, applying the \textsc{Biweight}(3) method \citep{fearnhead2019} to the original data set leads to two change point estimators, while only the spurious change point is detected in the presence of contamination. In contrast, our proposed method {a\textsc{arc}} is less affected by the adversarial attack and detects the same two change points with and without the presence of the contamination.

\begin{figure}[hbt]
    \centering
    \includegraphics[width=1\textwidth,height = 1.6in]{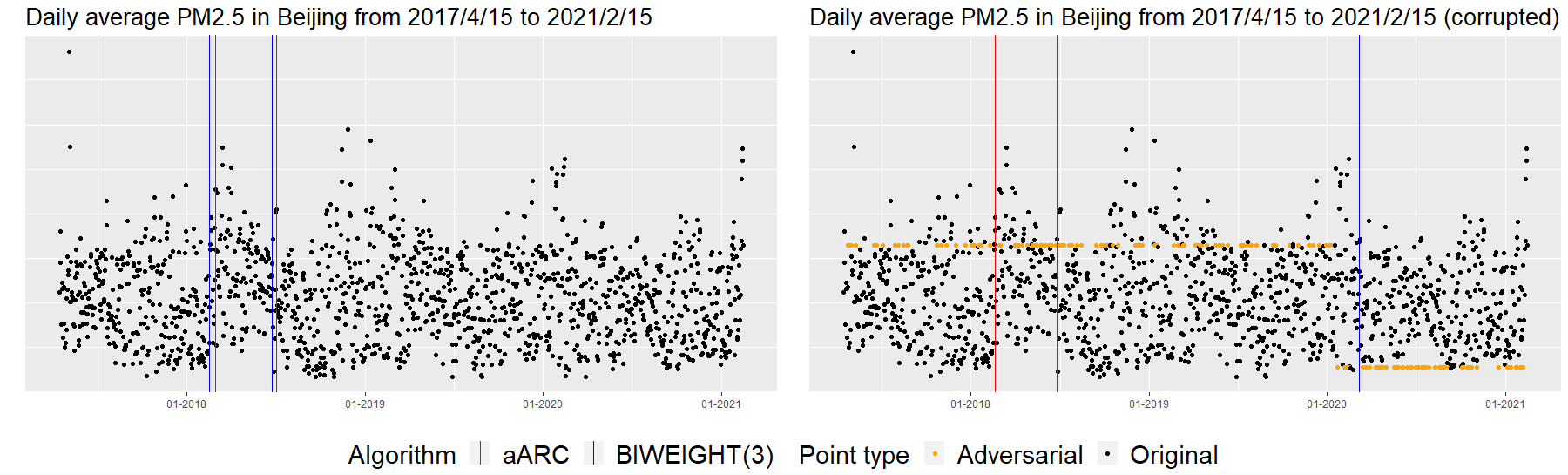}
\caption{Adversarial attacks on a PM2.5 index data set.  Red and blue lines indicate the change points estimators of  {a\textsc{arc}} and \textsc{Biweight}(3); lines in the left panel are jittered to be visible.} 
\label{teaser}
\end{figure}

From this example, we see that the presence of adversarial contamination can significantly affect the performance of even the {state-of-the-art} robust change point detection algorithm and lead to detecting spurious change points, while missing out the ones that can be discovered without contamination.  
In general, three effects of the adversarial attacks are of interest: \textbf{(i)} hiding true change points; \textbf{(ii)} creating spurious change points; and \textbf{(iii)} increasing the localisation error rate.

To armour the change point detection procedure against potential adversarial attacks, in this paper, we establish and investigate a univariate mean change point detection framework, under a dynamic extension of the Huber $\varepsilon$-contamination model \eqref{huber0}. Our contributions in this paper are threefold.
 
$\bullet$  To the best of our knowledge, this is the first paper formalising the change point detection framework with the dynamic Huber $\varepsilon$-contamination model (\ref{huber}).  To be specific, the contamination distributions are allowed to be distinct at every time point.   Most if not all of the robust change point detection papers, despite that they have shown their methods are robust against outliers and/or heavy-tail noise, the theoretical framework they study are still within the i.i.d.~territory.  This has handicapped the existing work to study the adversary attacks, where adversaries may design specific contamination based on their knowledge of the underlying models.  
     
$\bullet$  In Section \ref{lowerboundsection}, we propose a signal-to-noise ratio quantity $\kappa/\sigma$ and show that in the regime $ \kappa/\sigma \lesssim \sqrt{\max\left\{\varepsilon, \log(n)/L \right\}},$ no algorithm is guaranteed to estimate the change points consistently in the sense of \eqref{consistency}; and we show in Section \ref{section3} that in the regime $ \kappa/\sigma \gtrsim \sqrt{\max\left\{\varepsilon, \log(n)/L \right\}}$ and $L \gtrsim \log(n)$, our proposed algorithm can localise change points consistently with properly chosen tuning parameters. The localisation error rate can also be nearly minimax optimal, off by a logarithmic factor, for a certain range of model parameters. The detection boundary matches that in the standard change point literature (i.e.~$\varepsilon = 0$) and that in the robust statistics literature (i.e.~$\kappa = 0$). Compared to the standard literature in the univariate mean change point detection~\citep{verzelen2020optimal,wang2020}, our results can quantify the cost of the contamination in terms of $\varepsilon$, and shed light on both the cost of robustness and designs of adversarial attacks.
    
$\bullet$  The Adversarially Robust Change point detection algorithm (\textsc{arc}) that we propose in Algorithm~\ref{alg-main} is a combination of \cite{prasad2019unified} and a simple scanning idea.  \citet{prasad2019unified} showed that their robust univariate mean estimator (\textsc{rume}) can provide optimal estimation without the presence of change points.  The scanning idea has been widely used with numerous variants, but none of which is studied in an adversarial setting.  In our paper, we exploit the potential of these two areas and investigate both the theoretical and numerical performances of \textsc{arc}.  In addition, we also propose a variant of \textsc{arc}, namely automatic \textsc{arc} (a\textsc{arc}), which adapts to the contamination proportion $\varepsilon$.

\subsection{Related literature}

Without the concern of robustness, the theoretical framework of univariate mean change point analysis problem is well established \citep[e.g.][]{wang2020, verzelen2020optimal}, with a host of algorithms available to practitioners, including penalised least square methods \citep[e.g.][]{killick2012optimal} and \textsc{cusum}-based methods \citep[e.g.][]{fryzlewicz2014wild}, among many others. 

When the robustness comes into play, a line of attack has been deployed recently.  \citet{fearnhead2019} considered a penalised $M$-estimation procedure, which is designed particularly against outliers with large variances.  \citet{yu2019robust} studied a testing problem, utilising a $U$-statistic-type test statistics and showing that it is robust against i.i.d.~heavy-tailed noise distributions.  \citet{huvskova2013robust} and \citet{huvskova1991recursive} adapted Huber's theory on robust statistics to study a regression change point detection problem.  There has also been work on the online version of the robust change point detection problem \citep[e.g.][]{knoblauch2018doubly}.

In the robust statistics literature, without the presence of change points, the heavy-tailed model and the Huber $\varepsilon$-contamination model are the two in the spotlight.  A fundamental problem therein is to estimate the mean of the underlying distribution or the decontaminated distribution accurately.  Efficient and optimal algorithms have been developed for both models separately \citep[e.g.][]{lugosi2019sub,diakonikolas2017being}.  Some work tackles the two models simultaneously.  For instance, \citet{prasad2019unified} exploited the connection between these two models and developed a computationally-efficient univariate mean estimator that is optimal under both models.  \citet{hopkins2020robust} and \citet{diakonikolas2020outlier} considered a filter-type algorithm, which was developed under the high-dimensional contamination model, and showed that it achieves optimal error guarantees under the heavy-tailed model as well.  It is worth noting that the theoretical results derived in this paper is not a straightforward adaptation of the existing literature due to the presence of potentially multiple change points. 

\subsection{Problem setup}

We kick off the formalisation of the problem with the Huber $\varepsilon$-contamination model \cite{huber1992robust}
\begin{equation}\label{huber0}
    F_\varepsilon = (1-\varepsilon)F + \varepsilon H,
\end{equation}
where $F$ and $H$ are the distributions of interest and arbitrary contamination, respectively, and $\varepsilon$ measures the strength of contamination. This model is widely used in the robust statistics literature, but usually if not always, it is assumed that $F$ is \textit{sub-Gaussian} or has some form of symmetry, and the data are \textit{i.i.d.}~from $F_\varepsilon$ \citep[e.g.][]{hampel2011robust,prasad2018robust,prasad2019unified}. In our model assumption below, we relax both restrictions by considering $F$ to be potentially heavy-tailed and allowing $H$ to vary across time, which we refer to as the \emph{dynamic Huber $\varepsilon$-contamination model}, and which prompts it to model the three types of adversarial attack we mentioned previously, as well as a wide range of less adversarial attacks studied already in the literature.

\begin{asp}\label{modelasmmp}
Let $\{Y_i\}_{i = 1}^n \in \mathbb{R}$ be a sequence of random variables with distributions
    \begin{equation}\label{huber}
        (1-\varepsilon_i)F_i+\varepsilon_i H_i, \quad i \in \{1, \ldots, n\},
    \end{equation}
    where $F_i$'s are distributions with means $f_i$'s and variances upper bounded by $\sigma^2<  \infty$, $H_i$'s are the distributions of arbitrary contamination and $\varepsilon_i$'s are the proportions of contamination upper bounded by $\varepsilon \in (0, 1/2)$. Let $\{\eta_k\}_{k = 0}^{K+1} \subset [0, n]$ be a strictly increasing integer sequence with $\eta_0 = 0$, $\eta_{K+1} = n$, satisfying that $f_{t+1} \neq f_{t}$, if and only if~ $t \in \{\eta_k\}_{k = 1}^K$. Further assume that $F_{\eta_{k}+1} = \dotsc = F_{\eta_{k+1}}$ for $k = 0,\dotsc,K$. Let the minimal spacing $L$ and jump size $\kappa$ be $L = \min_{k = 0}^{K} \{\eta_{k+1} - \eta_{k}\}$ and $\kappa = \min_{k=1}^K \kappa_{k} = \min_{k=1}^K |f_{\eta_k+1}-f_{\eta_k}|$. 
\end{asp}

With Assumption~\ref{modelasmmp}, our goal can be formalised as detecting any change on $f_i$'s, provided that $\kappa >0$, in the presence of adversarial noise $H_i$'s. In some applications, adversaries may have access to the generating process \eqref{huber}, and in particular have control over $H_i$'s.  As a consequence, adversaries may be able to design attacks creating spurious change points or cancelling out the change point patterns in $F_i$'s.  We are the first attempt in discussing the robust change point detection, allowing for such form of structural attacks.

We aim to obtain consistent change point estimators $\{\hat{\eta}_k\}_{k=1}^{\hat{K}}$ such that with high probability it holds that 
\begin{equation}\label{consistency}
    \widehat{K} = K \quad  \text{and} \quad n^{-1}d_\mathrm{H}\left(\{\eta_i\}_{i = 1}^K, \, \{\widehat{\eta}_i\}_{i = 1}^{\widehat{K}}\right) \to 0,
\end{equation}
where $d_\mathrm{H}(\cdot,\cdot)$ is the two-sided Hausdorff distance (see Definition~\ref{hausdorff} in the supplementary material).

In the case when $\kappa = 0$, i.e.~there is no change point on the signal $f_1,\dotsc,f_n$, regardless of the situations of $H_i$'s, we would like to have 
\[
\mP(\widehat{K} = 0) \rightarrow 1.
\]

To highlight that $F_i$'s are the distributions of interest, in Assumption~\ref{modelasmmp}, we restrict the power of an adversary by assuming $\varepsilon < 1/2$.  It is apparent that when $\varepsilon \geq 1/2$, detecting the change points in~$F_i$'s may be impossible regardless of any other model parameters.

\section{Phase Transition and Minimax Lower bounds}\label{lowerboundsection}

Without contamination, it is well established that, if $\kappa \sqrt{L}\sigma^{-1} \lesssim \sqrt{\log(n)}$, then in the minimax sense, no algorithm is guaranteed to produce consistent estimators; if $\kappa \sqrt{L}\sigma^{-1} \gtrsim \sqrt{\log(n)}$, then the optimal localisation rate is of order $\sigma^2\kappa^{-2}$  \citep[e.g.][]{wang2020, verzelen2020optimal}.  In this section, we are to present the counterparts of such results in the model described in Assumption~\ref{modelasmmp}.  

In detail, Lemma~\ref{lemma1} shows that if 
    \begin{equation}\label{stn}
        \kappa/\sigma \lesssim \sqrt{\max\left\{\varepsilon, \log(n)/L \right\}},
    \end{equation}
    then no consistent estimator exists.  In particular, by considering the two regimes inherited in \eqref{stn}, we can identify two sources of difficulties in detecting change points in an adversarial setting.
    
\textbf{Small $\varepsilon$ regime}. When $\varepsilon \lesssim \log(n)/L$, condition \eqref{stn} is reduced to $\kappa \sigma^{-1} \lesssim \sqrt{\log(n)/L}$, which is essentially the boundary without the presence of contamination (cf.~Section~2 in \cite{wang2020}).  An interesting interpretation of the result is that it quantifies how much contamination will affect the fundamental difficulty of the problem and this threshold is of order $\log(n)/L$. Similar phenomena have been observed in the robust mean estimation \citep[e.g.][]{Chen_2018} and testing \citep[e.g.][]{chen2016general} problems.
        
\textbf{Large $\varepsilon$ regime}. When $\varepsilon \gtrsim \log(n)/L$, condition \eqref{stn} is reduced to $\kappa/\sigma \lesssim \sqrt{\varepsilon}$. The term $\sigma\sqrt{\varepsilon}$ is essentially the minimal asymptotic bias that any estimator of the means $f_i$'s must suffer under the Huber $\varepsilon$-contamination model~\eqref{huber0} with finite variance $\sigma^2$ \cite{Lai_2016,prasad2019unified}. Here, we assert that if the jump size $\kappa$ is no larger than the asymptotic bias $\sigma\sqrt{\varepsilon}$, then no consistent estimator exists. Referring to the three attack strategies we mentioned in Section~\ref{sec-intro}, this is the situation where the signal of a change point can be completely hidden by contamination. 
    
\begin{lemma}\label{lemma1}  
Let $\{Y_i\}_{i = 1}^n$ satisfy Assumption~\ref{modelasmmp}.  Suppose that $\varepsilon_i = \varepsilon$, $i = 1, \dotsc, n$. Let $P^n_{\kappa,L,\sigma,\varepsilon}$ denote the corresponding joint distribution. Consider the class of distributions 
    \[
        \mathcal{P} = \left\{P^n_{\kappa,L,\sigma,\varepsilon}:\,   \kappa^2L\sigma^{-2} <  \max\left\{8\varepsilon/(1-2\varepsilon), \, \log(n), \, 4 \varepsilon L \right\}, \, L \leq \floor*{n/4} \right\}.
    \]
    For all $n \in \mathbb{N}$ large enough, it holds that $\inf_{\hat{\eta}} \sup_{P\in \mathcal{P}} \mathbb{E}_P\{d_\mathrm{H} (\widehat{\eta}, \, \eta(P))\} \geq n/8$, where the infimum is over all possible measurable functions of the data and $\eta(P)$ is the set of true change points of $P \in \mathcal{P}$.
\end{lemma}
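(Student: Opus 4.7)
The plan is to prove the lower bound via Le Cam's two-point method: for any estimator $\widehat\eta$ and any $P_0, P_1 \in \mathcal{P}$,
\[
\sup_{P \in \mathcal{P}} \mathbb{E}_P\bigl[d_{\mathrm H}(\widehat{\eta}, \eta(P))\bigr] \ge \tfrac{1}{2}\, d_{\mathrm H}\bigl(\eta(P_0), \eta(P_1)\bigr)\bigl(1 - \mathrm{TV}(P_0, P_1)\bigr).
\]
Since $\mathcal{P}$ is the union of the regions defined by the three branches of the max, it suffices to exhibit, for each branch, a pair $P_0, P_1 \in \mathcal{P}$ with $d_{\mathrm H}(\eta(P_0), \eta(P_1)) \ge n/2$ and $\mathrm{TV}(P_0, P_1) \le 1/2$, which yields the target bound $n/8$. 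In every construction I would place the sole change point of $P_0$ at $n/4$ and that of $P_1$ at $3n/4$, with signal $0$ before and $\kappa$ after, so $L = \lfloor n/4\rfloor$ in both cases and the two scenarios differ only on the ``disputed segment'' $(n/4, 3n/4]$ of length $n/2$.

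For the classical branch $\kappa^2 L/\sigma^2 < \log n$, take Gaussian $F_i$ of variance $\sigma^2$ with $\varepsilon = 0$ and choose $\kappa^2 = c\sigma^2/n$ for a small absolute constant $c$; this is in $\mathcal{P}$ for $n$ large and the joint KL equals $c/4$, so Pinsker bounds $\mathrm{TV}$ by $1/2$. For the masking branch $\kappa/\sigma < 2\sqrt\varepsilon$, take two-point distributions $F_0 = \delta_0$ and $F_1 = (1-q)\delta_0 + q\delta_{\kappa/q}$ with $q = \kappa^2/\sigma^2$; both have variance $\le \sigma^2$ and means $0, \kappa$, and $\kappa < 2\sigma\sqrt\varepsilon$ forces $\mathrm{TV}(F_0, F_1) = q \le \varepsilon/(1-\varepsilon)$. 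By the modulus-of-continuity characterisation of the Huber model there then exist $H_0, H_1$ with $(1-\varepsilon)F_0 + \varepsilon H_0 = (1-\varepsilon)F_1 + \varepsilon H_1$, and matching these pointwise on the disputed segment makes the joint observation distributions identical, so $\mathrm{TV}(P_0, P_1) = 0$.

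For the intermediate branch $\kappa^2 L/\sigma^2 < 8\varepsilon/(1-2\varepsilon)$, I would reuse these two-point $F_0, F_1$ but ``swap'' the contamination: $H_0 = F_1$ and $H_1 = F_0$, so the disputed segment has observations $(1-\varepsilon)F_0 + \varepsilon F_1$ under $P_0$ and the mirror mixture under $P_1$. A direct two-atom $\chi^2$ computation gives per-sample $\mathrm{KL} = O((1-2\varepsilon)\kappa^2/\sigma^2)$; choosing $\kappa^2 = \varepsilon\sigma^2/n$ keeps the joint KL bounded by an absolute constant and yields $\mathrm{TV} \le 1/2$, while the inequality $\kappa^2 L/\sigma^2 \le \varepsilon/4 < 8\varepsilon/(1-2\varepsilon)$ places $P_0, P_1$ in $\mathcal{P}$.

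The main obstacle I foresee is the intermediate branch: both the per-sample $\chi^2$ estimate and Pinsker's inequality lose constant factors, so matching the precise threshold $8\varepsilon/(1-2\varepsilon)$ stated in the definition of $\mathcal{P}$ may require Bretagnolle--Huber in place of Pinsker, or a tighter Hellinger-based tensorisation using $H^2(P_0^{\otimes m}, P_1^{\otimes m}) = 2 - 2(1 - H^2(P_0, P_1)/2)^m$ together with $\mathrm{TV} \le H\sqrt{2}$. The two extremes provide sanity checks -- $\varepsilon \to 0$ reduces to the classical branch, while $\varepsilon \to 1/2$ collapses the mirror-symmetric mixture components, rendering $\mathrm{TV}$ automatically small.
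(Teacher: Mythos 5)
Your high-level strategy --- Le Cam's two-point method with a case split according to which branch of the maximum is active, and a single disputed segment of length $n/2$ between change points at $n/4$ and $3n/4$ --- matches the paper's proof, and your classical branch is essentially the paper's appeal to Lemma~\ref{B1}. However, there is one genuine gap and one place where you take a different route.

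The masking branch contains an error. You set $q = \kappa^2/\sigma^2$ and assert that $\kappa < 2\sigma\sqrt{\varepsilon}$ forces $\mathrm{TV}(F_0,F_1) = q \le \varepsilon/(1-\varepsilon)$, but the hypothesis only gives $q < 4\varepsilon$, and $4\varepsilon > \varepsilon/(1-\varepsilon)$ for every $\varepsilon < 3/4$, in particular throughout the allowed range $\varepsilon < 1/2$. Concretely, with $\varepsilon = 0.1$ and $\kappa/\sigma = 0.5$ you have $q = 0.25$ but $\varepsilon/(1-\varepsilon) \approx 0.11$, so the Huber-modulus condition needed to produce $H_0, H_1$ with $(1-\varepsilon)F_0 + \varepsilon H_0 = (1-\varepsilon)F_1 + \varepsilon H_1$ is simply not met. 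The paper's Lemma~\ref{lowerbound1} avoids this by putting atom weight $\varepsilon$ (not $\kappa^2/\sigma^2$) on the second atom, i.e.\ $F_1 = (1-\varepsilon)\delta_0 + \varepsilon\delta_{\kappa/\varepsilon}$, so that $\mathrm{TV}(F_0,F_1) = \varepsilon \le \varepsilon/(1-\varepsilon)$ automatically; the variance bound then imposes $\kappa/\sigma \le \sqrt{\varepsilon/(1-\varepsilon)} < 2\sqrt{\varepsilon}$, which is a strict subset of the masking branch but suffices because you only need one hard pair inside $\mathcal{P}$.

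For the intermediate branch you take a genuinely different route from the paper. The paper (Lemma~\ref{sublemma}) keeps $F_i$ Gaussian and chooses the contaminations through the pointwise identity $\phi_0 + (\phi_\kappa - \phi_0)\mathds{1}_{\phi_\kappa > \phi_0} = \phi_\kappa + (\phi_0 - \phi_\kappa)\mathds{1}_{\phi_0 > \phi_\kappa}$ so that $\widetilde P = \widetilde Q$ \emph{exactly}; the Hellinger lower bound on total variation is then used only to verify that the tuned $\varepsilon$ lands inside $\mathcal{P}$, which is where the sharp $8\varepsilon/(1-2\varepsilon)$ threshold comes from. You instead swap two-atom contaminations and bound the joint KL via Pinsker; this is looser (it places a pair with $\kappa^2 L/\sigma^2 \asymp \varepsilon$ rather than hitting the threshold), but it is a valid way to exhibit a pair in $\mathcal{P}$ with $\mathrm{TV} \le 1/2$ and so yields the same $n/8$. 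Your anticipated worry about needing Bretagnolle--Huber or Hellinger tensorisation to ``match the precise threshold'' is therefore unnecessary for the stated conclusion; it would only matter if you were trying to prove a matching upper bound at the phase boundary.
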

   
Note that in many interesting scenarios, $\log(n) \geq 8\varepsilon/(1-2\varepsilon)$, which is equivalent to $ \varepsilon < \log(n)/\{2\log(n)+8\}$.  Therefore, for simplicity, in the following, we only consider the condition $\kappa/\sigma \lesssim \max\{\sqrt{\varepsilon}$, $\sqrt{\log(n)/L}\}$, under which, no algorithm is ensured to output a consistent estimator.  As we will show later in Theorem~\ref{scp2}, \textsc{arc} can produce consistent estimation under nearly optimal conditions.
   
Our second task is to demonstrate the optimal localisation error, under a higher signal-to-noise ratio condition.  To be specific, we consider
    \begin{equation}\label{lowercon}
        \min\left\{\kappa^2 L \sigma^{-2}, \, (1-2\varepsilon)\log\left\{(1-\varepsilon)/\varepsilon\right\} L \right\} \geq \zeta_n, 
    \end{equation}
    with $\{\zeta_n\}$ being an arbitrarily diverging sequence.  Noting that $x \mapsto (1-2x) \log\{(1-x)/x\}$ is a decreasing function on $(0,\,0.5]$, to see how \eqref{lowercon} complements \eqref{stn}, we consider the following two regimes. 
    
\textbf{Small $\varepsilon$ regime}. If $\kappa^2\sigma^{-2} < (1-2\varepsilon)\log\left\{(1-\varepsilon)/\varepsilon\right\}$, then \eqref{lowercon} is $\kappa^2L\sigma^{-2} \geq \zeta_n$, which complements the small $\varepsilon$ regime implied by \eqref{stn}, and under which, as implied by Lemma~\ref{lowerbound3}, the lower bound on the localisation rate is of order $\sigma^2\kappa^{-2}$.  This is the same rate without the presence of contamination and this again quantifies the cost of contamination, that is to say, if $\varepsilon$ is small enough, one can hope for a localisation error as if there is no contamination.
        
\textbf{Large $\varepsilon$ regime}. If $\kappa^2\sigma^{-2} \geq (1-2\varepsilon)\log\left\{(1-\varepsilon)/\varepsilon\right\}$, then \eqref{lowercon} is $(1-2\varepsilon)\log\left\{(1-\varepsilon)/\varepsilon\right\} L \geq \zeta_n$.  This complements the large $\varepsilon$ regime implied by \eqref{stn}. The corresponding lower bound, as implied by Lemma~\ref{lowerbound3}, will diverge if $\varepsilon$ tends to $1/2$ at an arbitrary rate. If $\varepsilon$ is bounded away from $1/2$, then regardless of the strength of $\kappa$, the lower bound is of constant order, which is in fact trivial due to the discrete nature of the change points.

\begin{lemma}\label{lowerbound3}
Let $\{Y_i\}_{i = 1}^n$ satisfy Assumption~\ref{modelasmmp} with only one change point and let $P^n_{\kappa,L,\sigma,\varepsilon}$ denote the corresponding joint distribution. Suppose that $\varepsilon_i = \varepsilon$, $i = 1,\dotsc, n$. Consider the class of distribution 
    \begin{gather*}
        \mathcal{P} = \left\{P^n_{\kappa,L,\sigma,\varepsilon}:\,  \min\left\{\kappa^2L\sigma^{-2}, \, (1-2\varepsilon)\log\left\{(1-\varepsilon)/\varepsilon\right\}L \right\} \geq \zeta_n,\, L < n/2\right\},
    \end{gather*}
    where $\{\zeta_n\}$ is any arbitrarily diverging sequence. Then for all $n \in \mathbb{N}$ large enough, it holds that
    \begin{equation*}
       \inf_{\hat{\eta}} \sup_{P\in \mathcal{P}} \mathbb{E}_P\{d_\mathrm{H} (\widehat{\eta}, \, \eta(P))\}  \geq \max\left\{ e^{-1}(1 - \varepsilon)^{-1}\sigma^2\kappa^{-2}, \,\{2(1-2\varepsilon) e \log((1-\varepsilon)/\varepsilon)\}^{-1}\right\},
   \end{equation*}
    where the infimum is over all possible measurable functions of the data and $\eta(P)$ is the set of true change points of $P \in \mathcal{P}$.
\end{lemma}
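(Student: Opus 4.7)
The plan is to apply Le Cam's two-point method, separately establishing each of the two terms in the maximum and then combining them. For any hypotheses $P_0, P_1 \in \mathcal{P}$ with single change points at $\eta_0$ and $\eta_1 = \eta_0 + r$, the events $\{d_\mathrm{H}(\widehat{\eta}, \eta(P_i)) < r/2\}$ are disjoint across $i \in \{0, 1\}$. Combining Markov's inequality with the Bretagnolle--Huber bound $1 - \mathrm{TV}(P_0, P_1) \geq \tfrac{1}{2}\exp\{-\mathrm{KL}(P_0, P_1)\}$ yields
\begin{equation*}
\inf_{\widehat{\eta}}\,\max_{i \in \{0, 1\}} \mathbb{E}_{P_i}[d_\mathrm{H}(\widehat{\eta}, \eta(P_i))] \;\geq\; \frac{r}{8}\exp\{-\mathrm{KL}(P_0, P_1)\}.
\end{equation*}
Since observations at distinct times are independent and $P_0, P_1$ agree outside the $r$ coordinates $\{\eta_0 + 1, \ldots, \eta_1\}$, $\mathrm{KL}(P_0, P_1) = r \cdot \mathrm{KL}(G_0, G_1)$ for the single-coordinate marginals $G_0, G_1$ on that interval, and the task reduces to producing, for each term, a construction whose per-coordinate KL is small enough that $r \cdot \mathrm{KL} \lesssim 1$ at the claimed value of $r$.

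\textbf{First term.}
For $e^{-1}(1-\varepsilon)^{-1}\sigma^2\kappa^{-2}$, I would fix $\eta_0 = \lfloor n/4 \rfloor$, let $P_0 \equiv P_1$ outside $(\eta_0, \eta_1]$, and on that interval set $F_i = N(\kappa, \sigma^2), H_i = N(\kappa, \sigma^2)$ under $P_0$---so that the contamination coincides with the signal and $G_0 = N(\kappa, \sigma^2)$---and $F_i = N(0, \sigma^2), H_i = N(\kappa, \sigma^2)$ under $P_1$, giving $G_1 = (1-\varepsilon)N(0, \sigma^2) + \varepsilon N(\kappa, \sigma^2)$. Writing $\phi_\mu$ for the $N(\mu, \sigma^2)$ density, one has $g_0 - g_1 = (1-\varepsilon)(\phi_\kappa - \phi_0)$ and $g_1 \geq (1-\varepsilon)\phi_0$ pointwise, so
\begin{equation*}
\mathrm{KL}(G_0, G_1) \leq \chi^2(G_0, G_1) \leq (1-\varepsilon)\int \frac{(\phi_\kappa - \phi_0)^2}{\phi_0}\,dy = (1-\varepsilon)\bigl\{e^{\kappa^2/\sigma^2} - 1\bigr\}.
\end{equation*}
In the regime $\kappa^2/\sigma^2 \lesssim 1$ (the only case needed, since $\kappa^2/\sigma^2 \gtrsim 1$ is dominated by the second term), this is $\lesssim (1-\varepsilon)\kappa^2/\sigma^2$; choosing $r \asymp \sigma^2/\{(1-\varepsilon)\kappa^2\}$ makes $r\cdot\mathrm{KL} \lesssim 1$ and yields the first term through Le Cam.

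\textbf{Second term.}
For $\{2(1-2\varepsilon)e\log((1-\varepsilon)/\varepsilon)\}^{-1}$, use the swapped-mixture construction: on $(\eta_0, \eta_1]$ set $F_i = N(\kappa, \sigma^2), H_i = N(0, \sigma^2)$ under $P_0$ and $F_i = N(0, \sigma^2), H_i = N(\kappa, \sigma^2)$ under $P_1$, producing
\begin{equation*}
G_0 = (1-\varepsilon)N(\kappa, \sigma^2) + \varepsilon N(0, \sigma^2), \qquad G_1 = \varepsilon N(\kappa, \sigma^2) + (1-\varepsilon)N(0, \sigma^2).
\end{equation*}
Introduce a latent indicator $Z \in \{1, 2\}$ with $Y \mid Z$ identically distributed across the two hypotheses; then $Z \sim \mathrm{Bern}(1-\varepsilon)$ under $G_0$ versus $\mathrm{Bern}(\varepsilon)$ under $G_1$, and the data processing inequality yields
\begin{equation*}
\mathrm{KL}(G_0, G_1) \leq \mathrm{KL}\bigl\{\mathrm{Bern}(1-\varepsilon),\, \mathrm{Bern}(\varepsilon)\bigr\} = (1-2\varepsilon)\log\{(1-\varepsilon)/\varepsilon\}.
\end{equation*}
Taking $r \asymp 1/\{(1-2\varepsilon)\log((1-\varepsilon)/\varepsilon)\}$ gives $r\cdot\mathrm{KL} \lesssim 1$ and produces the second term.

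\textbf{Main obstacle.}
The delicate step is extracting the specific factor $(1-\varepsilon)^{-1}$ in the first term, rather than the weaker $(1-2\varepsilon)^{-2}$ or $(1-\varepsilon)^{-2}$ that naive $\chi^2$ computations on the swapped mixture would produce; it is the specific adversarial choice $H_i = N(\kappa, \sigma^2)$ in $P_0$---which trivialises the contamination on one side---that enables the tight pointwise bound $g_1 \geq (1-\varepsilon)\phi_0$. A secondary but necessary check is that both $(P_0, P_1) \in \mathcal{P}$: each has a single change point of jump $\kappa$, Gaussian $F_i$ of variance $\sigma^2$, and $\varepsilon_i = \varepsilon$, and the minimum spacing is preserved as long as $r \leq L/2$, which holds for each chosen $r$ under the condition $\min\{\kappa^2 L \sigma^{-2},\, (1-2\varepsilon)\log((1-\varepsilon)/\varepsilon)L\} \geq \zeta_n \to \infty$.
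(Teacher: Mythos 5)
Your structure -- a two-point Le Cam argument with a shifted change point, one construction per term in the maximum -- matches the paper's proof (which decomposes into Lemmas~\ref{B3} and \ref{B4}), and your second-term argument is sound: the swapped-mixture construction with a data-processing bound gives $\mathrm{KL}(G_0,G_1)\leq(1-2\varepsilon)\log\{(1-\varepsilon)/\varepsilon\}$, matching the per-coordinate quantity the paper obtains (the paper uses Dirac masses, making this an equality; both reach the same place).

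The first-term argument, however, has a genuine gap. Your $\chi^2$ bound yields $\mathrm{KL}(G_0,G_1)\leq(1-\varepsilon)\bigl\{e^{\kappa^2/\sigma^2}-1\bigr\}$, which is exponentially loose once $\kappa^2/\sigma^2\gtrsim 1$. You dismiss that regime by claiming it is dominated by the second term, but this is false: the first term $e^{-1}(1-\varepsilon)^{-1}\sigma^2\kappa^{-2}$ is the larger of the two exactly when $\kappa^2/\sigma^2 \leq 2(1-\varepsilon)(1-2\varepsilon)\log\{(1-\varepsilon)/\varepsilon\}$, and the right-hand side is of order $\log(1/\varepsilon)$, which is $\gg 1$ for small $\varepsilon$. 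So the first term must be proved well into the regime $\kappa^2/\sigma^2 \gg 1$, where $e^{\kappa^2/\sigma^2}-1$ is useless and your argument produces a lower bound that decays exponentially in $\kappa^2/\sigma^2$ rather than like $\sigma^2/\kappa^2$. The paper (Lemma~\ref{B3}) avoids this entirely by invoking joint convexity of the KL divergence in both arguments: with the contamination kept identical on both sides, $\mathrm{KL}(G_0,G_1)\leq(1-\varepsilon)\mathrm{KL}(\phi_\kappa,\phi_0)+\varepsilon\,\mathrm{KL}(H,H)=(1-\varepsilon)\kappa^2/(2\sigma^2)$, linear in $\kappa^2/\sigma^2$ with no exponential term. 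Your construction ($H=N(\kappa,\sigma^2)$ on both hypotheses) is precisely of the paper's same-contamination form, so the convexity step applies verbatim and would close the gap; the $\chi^2$ route was simply the wrong tool. As an aside, the ``main obstacle'' paragraph is off target: a factor $(1-2\varepsilon)^{-2}$ exceeds $(1-\varepsilon)^{-1}$ on $(0,1/2)$, so producing it would give a \emph{larger}, not weaker, lower bound, and in any case the genuine obstacle to your approach sits in the $\kappa^2/\sigma^2\gg1$ regime you set aside, not in the dependence on $\varepsilon$.
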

   
\subsection{The cost of the contamination} \label{compar}

To quantify the cost of the contamination, we compare the results derived above with their counterparts when no contamination presents.

\noindent \textbf{The difficulty of the problem}.  Intuitively speaking, the existence of contamination $H_i$'s increases the difficulty level of detecting the change points in $F_i$'s, and the larger the proportion $\varepsilon$ is the more difficult the problem becomes.  Lemma~\ref{lemma1} details this cost.  When the contamination proportion is small enough to fit in the small $\varepsilon$ regime, no matter how dramatic each contamination distribution $H_i$ is, we are facing a problem with the same difficulty level as if contamination does not exist.  When the contamination proportion is large enough, the difficulty is dominated by the difficulty of a one-sample robust mean estimation problem, no matter how large the minimal spacing is. 

This is indeed interesting, if not surprising, that the difficulty of this robust change point detection problem has a phase transition between the difficulties of change point without contamination and robust estimation without change points.

\noindent \textbf{The accuracy of the localisation}. The localisation error will be affected in the presence of adversarial contamination especially when $\varepsilon$ is large relative to $\kappa/\sigma$ as evidenced by Lemma~\ref{lowerbound3}. Nevertheless, in terms of order, we should aim to achieve the same localisation accuracy as if no contamination exists.

\section{The Adversarially Robust Change Point Detection Algorithm}\label{section3}

In this section, we propose the adversarially robust change point detection method (\textsc{arc}), which borrows the strength from robust estimation and standard change point analysis areas.  

\subsection{Methodology}

\textsc{arc} is a very intuitive algorithm but can achieve nearly optimal results in certain regimes as discussed in Section \ref{optimality}. We scan through the whole time course using the scan statistic $D_h(\cdot)$, which is the absolute difference between two \textsc{rume} estimators \cite{prasad2019unified}. The \textsc{rume} is proposed in the context of one sample robust mean estimation problem based on the idea of shorth estimators \cite{andrews2015robust,Lai_2016} and is shown to be simultaneously optimal under both the heavy tailed and Huber $\varepsilon$-contamination model~\eqref{huber0}.  For completeness, we detail the \textsc{rume} in Algorithm \ref{rumeb1} in the supplementary material. Note that the sample splitting procedure in the \textsc{rume} helps to avoid statistical dependency during the theoretical analysis but may increase the variance of the estimator. 

The scan statistic $D_h(\cdot)$ is a robust variant of the renowned \textsc{cusum} statistic \cite{page1954continuous}, except for two differences.  First, instead of just using the sample average, with the robustness in mind, a robust mean estimator \textsc{rume} is deployed. Second, the \textsc{cusum} statistic takes the difference between two normalised sample means, which can be of two heavily unbalanced samples. However, due to contamination, it is difficult to track the performance of robust mean estimator based on arbitrary sample sizes. Similar concern also arises in the robust clustering problem where each sub-population needs to be represented sufficiently to derive theoretical guarantees~\citep[e.g.][]{cuesta2008robust}.  Despite the ubiquity of such scan statistics in the change point literature, arguably, the most closely-related one would be \cite{niu2012screening}, where $\varepsilon$ is set to be zero and $F_i$'s are assumed to be sub-Gaussian. 

\begin{algorithm}[!h]
	\begin{algorithmic}
		\INPUT $\{Y_i\}_{i=1}^n \subset \mathbb{R}$, $\lambda, h > 0$.
		\State $\mathcal{B} \leftarrow \emptyset$, $\mathcal{C} \leftarrow \emptyset$;
		\For{$j \in \{2h, 2h+1, \ldots, n - 2h\}$}{
		    \State $D_h(j) \leftarrow \Big|\mathrm{RUME} \left(\{Y_i\}_{i=j+1}^{j+2h}\right) - \mathrm{RUME}\left(\{Y_i\}_{i=j-2h+1}^{j}\right)\Big|$;  \Comment{See \cite{prasad2019unified} or Algorithm~\ref{rumeb1}}
		    \If{$j$ is a $4h$-local maximiser of $D_h(j)$} \Comment{See Definition~\ref{def-local-maximisers}}
		        \State $\mathcal{B} \leftarrow \mathcal{B} \cup \{j\}$;
		    \EndIf
		}
		\EndFor
		\For{$l \in \mathcal{B}$}{
		    \If{$|D_{h}(l)| > \lambda$}
		        \State $\mathcal{C} \leftarrow \mathcal{C} \cup \{l\}$;
		    \EndIf
		}
		\EndFor
		\OUTPUT $\mathcal{C}$.
		\caption{Adversarially robust change point detection (\textsc{arc})} \label{alg-main}
	\end{algorithmic}
\end{algorithm}

With $\{D_h(j)\}_{j}$ in hand, we first focus on all the $4h$-local maximisers -- an idea seen in \cite{niu2012screening} when tackling uncontaminated change point detection problems -- defined below.  All $4h$-local maximisers are then thresholded by $\lambda > 0$ to avoid overestimating the number of change points. 

\begin{defn} \label{def-local-maximisers}
For any $h \geq 0$ and $x\in \mathbb{R}$, the interval $(x-h,x+h)$ is called the $h$-neighbourhood of $x$.  We call $x$ an $h$-local maximiser of a function $f(\cdot)$, if $f(x)\geq f(x')$, for any $x'\in (x-h, x+h)$.
\end{defn}

A probably unsatisfactory feature of using the \textsc{rume} is that the proportion of contamination $\varepsilon$ is required as an input.  Unfortunately, this seems to be the bottleneck observed in the majority of optimal robust procedures, including $M$-estimators \citep[e.g.][]{huber1992robust}, truncated means \cite{jerryli19}, and more recently developed high-dimensional robust procedures \citep[e.g.][]{Lai_2016, diakonikolas2017being}. Recently, \citet{chen2016general} proposed a tournament-based procedure for one sample robust mean estimation in the Huber $\varepsilon$-contamination model~(\ref{huber0}) that is adaptive in $\varepsilon$ and it has been used for tuning parameter selection in \cite{prasad2018robust}. We also investigate the empirical performance of our algorithm when using this procedure to select $\varepsilon$ in Section \ref{numerical}. In terms of theory, applying the tournament procedure \cite{chen2016general} requires the knowledge of the density function of $F_i$'s and the i.i.d.~assumption, which are more restrictive than our framework.
Another key tuning parameter in \textsc{arc} is the window width $h$, the theoretical guidance of which is discussed in Sections~\ref{theogurate} and \ref{optimality}, and the practical guidance can be found in Section~\ref{numerical}.%

Lastly, we note that the computational complexity of \textsc{arc} is of order $\mathcal{O}(nh\log(h))$, where the term~$h\log(h)$ is from ranking the data to find the shortest interval involved in \textsc{rume}. Even though the worst case complexity is $\mathcal{O}(n^2\log(n))$, it is still more efficient than the existing methods such as the penalised biweight loss approach with computational complexity $O(n^3)$ (cf.~Corollary 2 in~\cite{fearnhead2019}).

\subsection{Theoretical guarantees}\label{theogurate}

\begin{thm}\label{scp2}
Let $\{Y_i\}_{i = 1}^n$ be an independent sequence satisfying Assumption~\ref{modelasmmp} with $\kappa>0$.  Let~$\{\widehat{\eta}_k\}_{k = 1}^{\widehat{K}}$ be the output of Algorithm~\ref{alg-main}.  

Assume that (i) there exists a sufficiently large absolute constant $C_{\lambda} > 0$ such that $\kappa/\sigma > C_{\lambda} \sqrt{\max\{\varepsilon, \, \log(n)/h\}}$; (ii) there exists an absolute constant $C' > 1$ such that $2\varepsilon'+ 2\sqrt{C'\varepsilon' \log(n)/h} + C'\log(n)/h < 1/2$, where $\varepsilon' = \max\left\{\varepsilon, C'\log(n)/h\right\}$; (iii) the window width satisfies that $h < L/8$; and (iv) the threshold satisfies that $\lambda = C_{\lambda}\sigma\sqrt{\varepsilon'}$.

We then have that there exists an absolute constant $c > 0$ such that 
\[
    \mP\left\{\widehat{K}=K \quad \mbox{and} \quad \max_{k=1}^{\widehat{K}} |\widehat{\eta}_k-\eta_k|\leq 2h \right\} \geq 1 - n^{-c}.
\]
\end{thm}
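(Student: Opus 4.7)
The plan is to control the scan statistic $D_h(\cdot)$ uniformly over all candidate locations $j$, and then let the $4h$-local-maximiser mechanism combined with the threshold $\lambda$ force every detected $\widehat{\eta}_k$ to lie within $2h$ of a true $\eta_k$ and to match the true count $K$.

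\textbf{Step 1: a uniform RUME bound.} The black-box ingredient is the one-sample concentration of \textsc{rume} from \cite{prasad2019unified}: on a window of size $2h$ drawn from a Huber $\varepsilon_*$-contamination model with base variance at most $\sigma^2$ and base mean $\mu$, the estimator satisfies $|\mathrm{RUME} - \mu| \leq C_1 \sigma \sqrt{\max\{\varepsilon_*, \log(1/\delta)/h\}}$ with probability at least $1-\delta$, provided $\varepsilon_*$ stays in the Huber regime that assumption (ii) is designed to preserve. I would choose $\delta = n^{-c_1}$ and union-bound over the $\mathcal{O}(n)$ \textsc{rume} calls that appear in Algorithm~\ref{alg-main}, producing an event $\mathcal{E}$ of probability at least $1 - n^{-c}$ on which every \textsc{rume} value is simultaneously within this tolerance of its intended target.

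\textbf{Step 2: point-wise behaviour of $D_h(j)$.} Work on $\mathcal{E}$ and split by the relationship between $j$ and the true change points. If $|j - \eta_k| > 2h$ for every $k$, both half-windows lie in the same stationary segment, both \textsc{rume}s target the same mean, and the triangle inequality gives $D_h(j) \leq 2 C_1 \sigma \sqrt{\max\{\varepsilon, \log(n)/h\}} < \lambda$, where the last step uses $\lambda = C_\lambda \sigma \sqrt{\varepsilon'}$ with $C_\lambda$ chosen large enough. If $j = \eta_k$, both half-windows are clean and sit in adjacent segments, yielding $D_h(\eta_k) \geq \kappa_k - 2 C_1 \sigma \sqrt{\max\{\varepsilon, \log(n)/h\}} > \lambda$ by assumption (i) with the same large $C_\lambda$. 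For an intermediate $j$ with $0 < |j - \eta_k| \leq 2h$, one window straddles $\eta_k$; treating the minority-segment samples as extra contamination, the effective contamination proportion stays below $1/2$ (this is precisely what the $\varepsilon' = \max\{\varepsilon, C' \log(n)/h\}$ condition in assumption (ii) buys), so the \textsc{rume} remains well defined and $D_h(j)$ is some admissible intermediate value that I do not need to pin down.

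\textbf{Step 3: converting to the claim.} Let $I_k = [\eta_k - 2h, \eta_k + 2h]$. Step 2 shows $\{j : D_h(j) > \lambda\} \subseteq \bigcup_k I_k$, and each $I_k$ contains at least one index, namely $\eta_k$, at which $D_h > \lambda$, so the argmax $j_k^\star$ of $D_h$ on $I_k$ also exceeds $\lambda$. Because $h < L/8$, the intervals $I_k$ are pairwise separated by more than $4h$, so the $4h$-neighbourhood of $j_k^\star$ meets the complement of $I_k$ only at points where $D_h \leq \lambda < D_h(j_k^\star)$; hence $j_k^\star$ is a $4h$-local maximiser that survives the threshold. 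Conversely, any $4h$-local maximiser above $\lambda$ must live in some $I_k$ by the first case of Step 2, and the $4h$-maximiser rule inside the length-$4h$ window $I_k$ leaves room for at most one survivor per $I_k$. The set $\mathcal{C}$ therefore has exactly $K$ elements, one in each $I_k$, giving $\widehat{K} = K$ and $|\widehat{\eta}_k - \eta_k| \leq 2h$.

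\textbf{Expected main obstacle.} The hard part is Step 1 together with the straddling-window portion of Step 2: one must verify that the RUME concentration survives when the window contains samples from two distinct segments, by absorbing the minority-side samples into the contamination budget and checking via assumption (ii) that the resulting effective $\varepsilon_*$ stays inside the Huber regime at the required $1 - n^{-c}$ confidence. A minor secondary issue is tie-breaking between two $4h$-local maximisers sitting at distance exactly $4h$ inside the same $I_k$; this can be ruled out on $\mathcal{E}$ by noting that the \textsc{rume}-driven scan profile has a strict interior maximum near $\eta_k$ whenever the concentration bound from Step 1 is strictly tighter than $\kappa/2$, which is guaranteed by assumption (i).
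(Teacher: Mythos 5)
Your argument follows essentially the same route as the paper's proof: a union-bounded application of the RUME concentration result (Proposition~\ref{rume}) with $\delta\asymp n^{-C'}$, pointwise control of $D_h$ only on the set $\mathcal{F}$ of indices more than $2h$ from any change point together with the lower bound at $j=\eta_k$, and then the combinatorial $4h$-local-maximiser counting step which uses $h<L/8$ exactly as you describe. One clarification on your self-diagnosed ``expected main obstacle'': the paper never applies RUME concentration to a straddling window, and indeed it could not be salvaged the way you sketch---for $j=\eta_k+h$ the minority side contributes $h$ of the $2h$ samples, which is an effective contamination fraction of $1/2$ and cannot be absorbed into any $\varepsilon'<1/2$ budget, and assumption~(ii) is solely the admissibility condition for Proposition~\ref{rume} on homogeneous windows, not a device for absorbing extra segments; fortunately, as you yourself note, $D_h(j)$ on $0<|j-\eta_k|\le 2h$ never needs to be pinned down, so this does not affect the validity of your Steps~1--3.
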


Theorem~\ref{scp2} shows that under certain conditions, we have consistent estimation on the number of~$f_i$'s change points with $h = o(n)$, in the presence of adversarial attacks.  The localisation error rate is essentially the window width~$h$ and the required signal-to-noise ratio condition is also a function of the window width~$h$. A similar result is obtained in \cite{niu2012screening} with $h\asymp L$ without contamination. Even though we allow $h = O(L)$ in Theorem \ref{scp2}, this is still somewhat unsatisfactory, but we would like to point out that requiring some knowledge of $L$ is a widely observed phenomenon in the change point analysis literature, even when the contamination is absent.  For example, in the wild binary segmentation algorithm \citep[e.g.][]{fryzlewicz2014wild, wang2020}, random intervals are deployed to localise change points.  However, the ideal theoretical performances rely heavily on knowing that the length of these random intervals are of the same order of the minimal spacing $L$~\cite{yu2020review}.

 Given the conditions, we conduct a sanity check on the feasibility of choosing a window width~$h$ implying consistency. Based on Proposition~\ref{prop2} in the supplementary material, we see that the conditions (i)-(iv) in Theorem \ref{scp2} hold if
    \begin{equation*}
        h > \begin{cases}
            \max\left\{10, \,4C_\lambda^2 \sigma^2/\kappa^2 \right\}C'\log(n), & \varepsilon \leq 0.1, \\
        w(\varepsilon)C'\log(n), & 0.1<\varepsilon < 1/4 \min\left\{1, \,\kappa^2/(C_\lambda^2 \sigma^2)\right\},
        \end{cases}
    \end{equation*}
    where $w(\theta) = 1/(1/2 - \sqrt{2\theta(1-2\theta)})$.  The lower bounds on $h$ in both cases are of order $o(n)$.  This means that there exist regimes of $h$ such that $n^{-1}\max_k |\widehat{\eta}_k-\eta_k| \to 0$, which implies the consistency.  

Theorem~\ref{scp2} provides the guarantees of \textsc{arc} when there exists at least one change point, i.e.~$\kappa > 0$.  When $\kappa = 0$, i.e.~there is no change point of $f_i$'s, Algorithm~\ref{alg-main} is still consistent in the following sense.
\begin{cor}\label{nsp}
Let $\{Y_i\}_{i = 1}^n$ be an independent sequence satisfying Assumption~\ref{modelasmmp} with $\kappa = 0$.  Let $\{\widehat{\eta}_k\}_{k = 1}^{\widehat{K}}$ be the output of Algorithm~\ref{alg-main}.  Assume that there exists an absolute constant $C' > 1$ such that $2\varepsilon'+ 2\sqrt{C'\varepsilon' \log(n)/h} + C'\log(n)/h < 1/2$, where $\varepsilon' = \max\left\{\varepsilon, C'\log(n)/h\right\}$; and the thresholding tuning parameter satisfies that $\lambda = C_{\lambda}\sigma\sqrt{\varepsilon'}$.  We then have that there exists an absolute constant $c > 0$ such that $\mP\{\widehat{K} = 0 \} \geq 1 - n^{-c}$.
\end{cor}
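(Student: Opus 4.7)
The plan is to reduce Corollary \ref{nsp} to a one-sided use of the same \textsc{rume} concentration inequality that drives the proof of Theorem \ref{scp2}. Under $\kappa = 0$, every $F_i$ has a common mean, call it $f$. For any window of length $2h$ drawn from consecutive indices, the samples form an i.i.d.\ Huber $\varepsilon$-contaminated batch with true mean $f$ and variance bounded by $\sigma^2$, and by the \textsc{rume} guarantee of \cite{prasad2019unified} (as invoked in the proof of Theorem \ref{scp2}), provided that $2\varepsilon'+ 2\sqrt{C'\varepsilon' \log(n)/h} + C'\log(n)/h < 1/2$ holds, each such \textsc{rume} output satisfies
\[
    \bigl|\mathrm{RUME}(\{Y_i\}) - f\bigr| \;\leq\; C_0 \sigma \sqrt{\varepsilon'}
\]
with probability at least $1 - n^{-c_0}$ for some absolute constants $C_0, c_0 > 0$, where $c_0$ can be made as large as we wish by enlarging the constant $C'$ baked into $\varepsilon' \geq C'\log(n)/h$.

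Next I would apply a union bound over the two windows attached to each $j \in \{2h, \dots, n - 2h\}$, which is a union over at most $2n$ events. Taking $C'$ large enough, the uniform good event
\[
    \mathcal{E} \;=\; \Bigl\{ \bigl|\mathrm{RUME}(\{Y_i\}_{i=j-2h+1}^j) - f\bigr| \vee \bigl|\mathrm{RUME}(\{Y_i\}_{i=j+1}^{j+2h}) - f\bigr| \leq C_0 \sigma \sqrt{\varepsilon'},\; \forall j \Bigr\}
\]
has probability at least $1 - n^{-c}$ for some absolute constant $c > 0$. On $\mathcal{E}$, the triangle inequality gives $D_h(j) \leq 2 C_0 \sigma \sqrt{\varepsilon'}$ simultaneously for every $j$ in the scanning range. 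Since the threshold is $\lambda = C_\lambda \sigma \sqrt{\varepsilon'}$ with $C_\lambda$ chosen sufficiently large, in particular $C_\lambda > 2 C_0$ (the same calibration used in Theorem \ref{scp2}), we conclude that no $l \in \mathcal{B}$ can satisfy $|D_h(l)| > \lambda$, so the candidate set $\mathcal{C}$ returned by the algorithm is empty and $\widehat{K} = 0$ on $\mathcal{E}$.

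The main obstacle is essentially bookkeeping: one must verify that the constant $C_\lambda$ selected to make detection go through in Theorem \ref{scp2} (where $\lambda$ must sit below the true jump size) simultaneously dominates the null-side noise floor $2 C_0 \sigma \sqrt{\varepsilon'}$. This is automatic because the calibration $C_\lambda$ ``sufficiently large'' required in Theorem \ref{scp2} is a statement about $C_\lambda$ being larger than the absolute constant $2 C_0$ coming from the \textsc{rume} deviation bound, so the same choice works here without modification. No additional structural argument beyond the union bound over windows and the triangle inequality is required, which is what makes the corollary a clean companion to Theorem \ref{scp2} rather than an independent result.
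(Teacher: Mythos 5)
Your proposal matches the paper's proof of Corollary~\ref{nsp} in both structure and substance: the paper likewise observes that $\kappa = 0$ makes every $F_i$ share the common mean $f_1$, invokes the \textsc{rume} concentration bound from the proof of Theorem~\ref{scp2} to get $D_h(x) < \lambda$ for each $x$, takes a union bound over all scanning indices, and sets $\delta = n^{-C'}$. Your bookkeeping about the calibration of $C_\lambda$ relative to the \textsc{rume} deviation constant is implicit in the paper's reference to equation (\ref{thm21}), so the two arguments are essentially identical.
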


\subsection{Optimality of the \textsc{arc} algorithm} \label{optimality}

Recalling the fundamental limits of the adversarially robust change point detection problem, in view of Theorem~\ref{scp2}, \textsc{arc} can be nearly optimal in terms of both the signal-to-noise ratio condition and the localisation rate in certain regimes, with a properly chosen width $h$ depending on the model parameters.  This is indeed restrictive, but generally speaking, robust learning problems suffer from the same restriction. For instance, as we mentioned before, in robust mean estimation problems under the Huber $\varepsilon$-contamination model, the value $\varepsilon$ should be known in order to achieve optimal results in most algorithms and it is impossible to estimate $\varepsilon$ when the contamination distribution is not specified~\cite{chen2016general}.  \citet{padilla2019optimal} studied the nonparametric change point detection problems, which is also a type of robust change point detection problem, and the optimality results thereof rely on the kernel bandwidth~$h$ to be the same order as the minimal signal strength.  In line with our discussions in Section \ref{lowerboundsection}, we consider the following two cases.

\noindent \textbf{Small $\varepsilon$ regime}.  If $\varepsilon \lesssim \log(n)/L$, then with the choice that $h  \asymp L$, the conditions in Theorem~\ref{scp2} become $\kappa\sqrt{L}/\sigma \gtrsim \sqrt{\log(n)}$, which by comparing with the small $\varepsilon$ regime in (\ref{stn}) shows that \textsc{arc} is consistent under the minimal signal-to-noise condition. In this regime,  \textsc{arc} also enjoys the nearly-optimal localisation rate, provided that the minimal segment length satisfies $L \asymp \max\left\{\sigma^2/\kappa^2, 1 \right\}\log(n)$.

\textbf{Large $\varepsilon$ regime}. Consider two cases in this regime. If $\log(n)/L \lesssim \varepsilon \leq 0.1$, then with the choice $h \asymp \log(n)/\varepsilon$, the conditions in Theorem~\ref{scp2} reduce to $\kappa \gtrsim \sigma\sqrt{\varepsilon}$ and $L \gtrsim \log(n)/\varepsilon$, which is the minimal signal-to-noise condition required by comparing to the large $\varepsilon$ regime in (\ref{stn}). In this regime, \textsc{arc} achieves a nearly-optimal localisation rate of $\log(n)/\varepsilon$ if $\varepsilon$ is of constant order. If $0.1< \varepsilon < 1/4 \min\left\{1, \,\kappa^2/(C_\lambda^2 \sigma^2)\right\}$, then with the choice $h \asymp \omega(\varepsilon)\log(n)$, conditions in Theorem~\ref{scp2} reduce to $\kappa \gtrsim \sigma\sqrt{\varepsilon}$ and $L \gtrsim \omega(\varepsilon)\log(n)$, which is the minimal signal-to-noise condition since $\varepsilon$ is of constant order. In this regime, \textsc{arc} achieves a nearly-optimal localisation rate of $\omega(\varepsilon)\log(n)$.

\section{Numerical Results}\label{numerical}

We consider the following competitors to \textsc{arc}: \textsc{pelt}, the Pruned Exact Linear Time method \citep{killick2012optimal}; \textsc{Biweight}, the penalised cost approach based on biweight loss \citep{fearnhead2019}; \textsc{R\_cusum}, recursive application of a Wald-type testing procedure \citep{huvskova1989nonparametric, fearnhead2019} that is shown to be consistent for single change point detection in a robust regression context; and \textsc{R\_UStat}, a $U$-statistic-type robust bootstrap change point test \citep{yu2019robust}, the theory of which is developed for testing change points but not localisation.  \textsc{pelt} serves as a non-robust baseline. As for the other competitors, their theoretical guarantees, if exist, are established against identical heavy-tail contamination at each time point, but not against the potentially, systematic adversarial attacks.

For \textsc{Biweight}, we adopt its default setting denoted as \textsc{Biweight(2)} and a stronger penalty setting \textsc{Biweight}(5). For \textsc{R\_cusum}, we combine it with the wild binary segmentation \cite{fryzlewicz2014wild}, with 500 random intervals and a BIC-type threshold as in \cite{fearnhead2019}. For \textsc{R\_UStat}, we choose the kernel $\nu(x,\,y) = \mathrm{sign}(x-y)$, the bootstrap sample size 100 and the initial block size 250.  

For \textsc{arc}, the choice of $h$ should depend on the applications. In general, we recommend $h = C\log(n)$ where $10\leq C\leq 30$. As for simulation purpose, we fix $h = 170$ and $\lambda = \max\{0.6\sigma,\,8\sigma\varepsilon\}$.  In the simulations, we vary $L$ which serves the purpose of examining the sensitivity of $h$'s choice.  More sensitivity results can be found in Section~\ref{sensitivity} in the supplementary material. Note that the true value of $\sigma$ is used as the input for \emph{all} algorithms in simulation. For real data, due to the suspected short segment length, we consider a range of smaller $h$ and adapt $\lambda$ accordingly to account for the larger estimation error incurred. Different tuning parameter choices for the competitors are also considered. The standard deviations are estimated via the median absolute deviation of the data. For the choice of~$\varepsilon$, we use the true $\varepsilon$ as the input in simulations, and a data-driven automated method based on the tournament procedure considered in \cite{chen2016general}, which views the uncontaminated distributions as Gaussian.  We call \textsc{arc} with automatically-chosen $\varepsilon$ as automated \textsc{arc} (a\textsc{arc}). Due to the variability of \textsc{arc} and a\textsc{arc} inherited from the \textsc{rume} procedure, we run the algorithms over 1000 times on the real data sets and report the mode. See Section~\ref{detailnum} in the supplementary material for further details.

\subsection{Simulations}\label{adversetting}

Recall that the goal is to detect the changes in the means of $F_i$'s with the attacks in the form of $H_i$'s.  We design two settings mimicking two types of adversarial attacks: (i) \textbf{creating spurious change points} and (ii) \textbf{hiding change points}.   Less adversarial settings where the contamination does not rely on the knowledge of change point locations are considered in Section~\ref{lessadv} in the supplementary material.  Throughout, we let $n = 5000$ and $\delta(\cdot)$ be the Dirac measure. 

\textbf{(i)~Creating spurious change points}.  Let $Y_i \sim (1-\varepsilon) \mathcal{N}(0, \sigma^2) + \varepsilon \delta(-3)$, $i \in \{(j-1)M + 1, \ldots, (j-1/2)M\}$, and $Y_i \sim (1-\varepsilon) \mathcal{N}(0,\,\sigma^2) + \varepsilon \delta(3)$, $i \in \{(j-1/2)M + 1, \ldots, jM\}$, $j \in \{1, \ldots, \Delta\}$ and $M = n/\Delta$. An illustration is the left panel in Figure \ref{adverfigure}.   There is no change point on $F_i$'s, but $2\Delta-1$ changes on $\mathbb{E}[Y_i]$ are created through the contamination distributions $H_i$'s.  

\textbf{(ii)~Hiding change points}.  Let $Y_i \sim (1-\varepsilon) \mathcal{N}(0,\,1) + \varepsilon \delta\left(\kappa/(2\varepsilon)\right)$,  $i \in \{(j-1)M + 1, \ldots, (j-1/2)M\}$, and $Y_i \stackrel{i.i.d}{\sim} (1-\varepsilon) \mathcal{N}(\kappa,\,1) + \varepsilon \delta\left\{\kappa (1 - 1/(2\varepsilon))\right\}$, $i \in \{(j-1/2)M + 1, \ldots, jM\}$, $j \in \{1, \ldots, \Delta\}$ and $M = n/\Delta$.  An illustration is the right panel in Figure \ref{adverfigure}, which shows that the change points in $F_i$'s are all offset, in terms of $\mathbb{E}(Y_i)$'s, by the contamination.

\begin{figure}[hbt]
  \centering
        \includegraphics[height=1.3in, width = \textwidth]{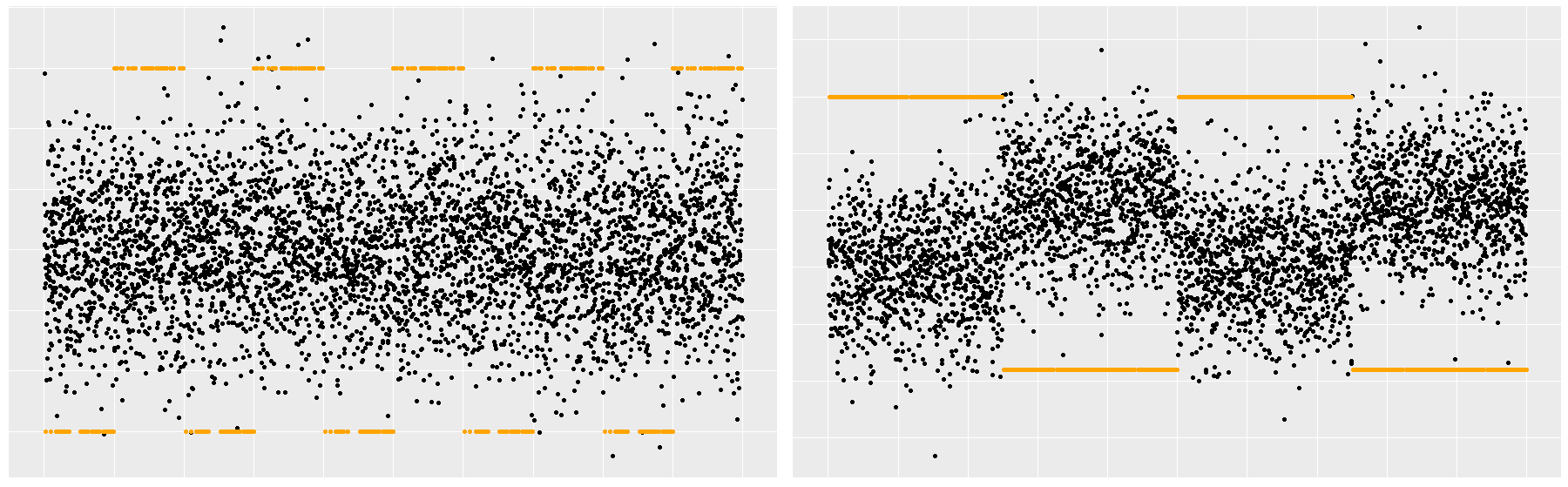}
\caption{The left panel: an illustration of attack setting (i) with  $\varepsilon = 0.1, \Delta = 5$ and $\sigma = 1$. The right panel: an illustration of attack setting (ii) with $\varepsilon = 0.2, \Delta = 2$ and $\kappa = 1.5$.  Black dots are realisations of Gaussian distributions and the orange dots are from the contamination distributions.} 
\label{adverfigure}
\end{figure}

Note that each scenario in \textbf{(i)} is specified by the tuple $\{\varepsilon,\,\Delta,\, \sigma\}$ and we consider 19 different combinations.  The measurement is $\widehat{K} - K$, with $K=0$.  Each scenario in \textbf{(ii)} is specified by the tuple $\{\varepsilon,\,\Delta,\, \kappa\}$ and we consider 12 different combinations.  The measurements are $|\hat{K} - K|$ and $n^{-1}d_H(\hat{\bm{\eta}},\,\bm{\eta})$.  Detailed results can be found in Section~\ref{furthurnumerical} in the supplementary material.  Some representative settings are depicted in Figure~\ref{simulationresult}. 

Overall, if the adversarial attacks are creating spurious change points scenario, when the strength of adversarial attack is strong, e.g.~when $\varepsilon$ is large and when $\Delta$ and $\sigma$ are small, \textsc{arc} and a\textsc{arc} outperform other competitors by detecting fewer spurious change points.  If the adversarial attacks are hiding change points, when $\varepsilon$ is large and the signal $\kappa$ is small, the adversarial noise can fool the competitors such that they consistently output incorrect estimated numbers of change points, while \textsc{arc} and a\textsc{arc} maintain a reasonable performance across all settings.

In terms of localisation errors, in general, we under-perform with respect to the competitors when they can correctly detect the change points, which can be seen as the cost of accuracy when preserving robustness. The observed variability of our results is mainly due to the sample splitting step in the \textsc{rume} procedure and can be improved with a larger sample size.  Moreover, 
we notice that a\textsc{arc} performs competitively comparing with \textsc{arc} with the true value $\varepsilon$ as an input.

\begin{figure}[htb]
   \centering
    \includegraphics[width = \linewidth, height = 1.5in]{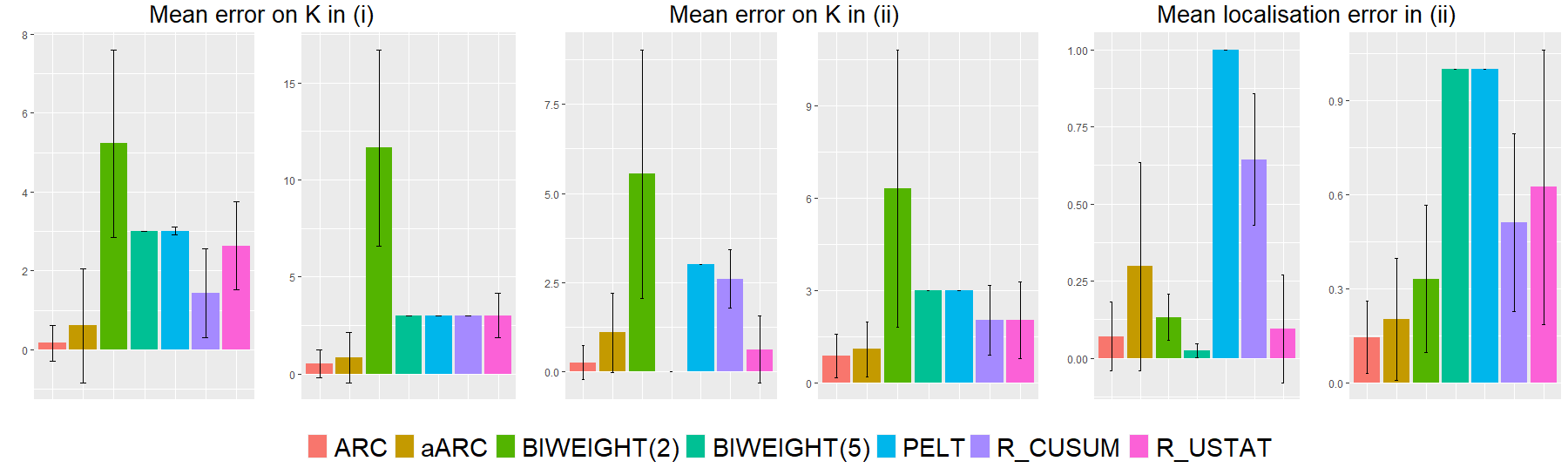}
    \caption{Representative simulation results.  From left to right: $|\widehat{K} - K|$ in setting (i) with $\{\varepsilon,~ \Delta,~\sigma\} = \{0.1,~2,~1\}$; $|\widehat{K}-K|$ in setting (i) with $\{\varepsilon,~ \Delta,~\sigma\} = \{0.2,~2,~1\}$;  $|\widehat{K}-K|$ in setting (ii) with $\{\varepsilon,~ \Delta,~\kappa\} = \{0.1,~2,~0.66\}$; $|\widehat{K}-K|$ in setting (ii) with $\{\varepsilon,~ \Delta,~\kappa\} = \{0.2,~2,~1.2\}$;  $d_H$ in setting (ii) with $\{\varepsilon,~ \Delta,~\kappa\} = \{0.1,~2,~0.66\}$; and $d_H$ in setting (ii) with $\{\varepsilon,~ \Delta,~\kappa\} = \{0.2,~2,~1.2\}$.}
    \label{simulationresult} 
\end{figure}

\subsection{Real data analysis} \label{realdata}

We consider three real data sets in this subsection: the \textbf{well-log data set} that has been extensively studied in the existing literature and two \textbf{PM2.5 index data sets} that are additions to the literature. Section \ref{realdatasupp} in the supplementary material contains additional details on the evaluation method and choices of tuning parameters.

The \textbf{well-log data set} \citep[e.g.][]{wellloglink, fearnhead2019,burg2020evaluation,fukushima2020online} contains 4049 measurements of nuclear magnetic response during the drilling of a well.  We depict the data set and the outputs of a\textsc{arc} in Figure \ref{realdataplot}. With a few isolated observations, the majority seem to behave well. This falls into the regime of the existing robust change point detection methods.  As a result, all competitors considered output very similar results which are omitted in Figure \ref{realdataplot}. 

Consider two \textbf{PM2.5 index data sets} \cite{airquality}: the Beijing PM2.5 index from 15-Apr-2017 to 15-Feb-2021 considered in Section~\ref{sec-intro} and London PM2.5 index from 1-Jan-2014 to 17-Mar-2021.  

In the \textbf{Beijing} data set, \textsc{Biweight}(3), \textsc{R\_cusum} and a\textsc{arc} all detect two change points based on the \textbf{original} data set as shown in the left panel of Figure \ref{teaser}. We then randomly sample 100 points uniformly from the first 1000 data points and change their value to $3\hat{\sigma}$ where $\hat{\sigma}$ is the median absolute deviation of the original data set.  Similarly, we sample 50 points from the remaining data and change their value to $0.5\hat{\sigma}$. After this perturbation, only a\textsc{arc} still detects the previous two points and is robust against the spurious one, while the competitors both wrongly miss the previous ones while picking up the spurious one.

The \textbf{London} data set, compared to the \textbf{well-log data set}, contains way more `outliers' -- in terms of the model~\eqref{huber} -- which can be viewed as either that the distributions of interest $F_i$'s are heavy-tailed while $\varepsilon$ is small or $F_i$'s are well-behaved while $\varepsilon$ is large.   The result of a\textsc{arc} is similar to that of \textsc{R\_cusum}, as they treat the data as if the underlying $\varepsilon$ is large. The \textsc{arc} with an input $\varepsilon = 0.01$ more frequently detects one additional point in the first quarter of the data compared to a\textsc{arc}, which is similar to the result obtained by \textsc{Biweight}(5), as they treat the data as if the underlying $\varepsilon$ is small.

\begin{figure}[htb]
    \centering
    \includegraphics[width = \textwidth,height = 1.5in]{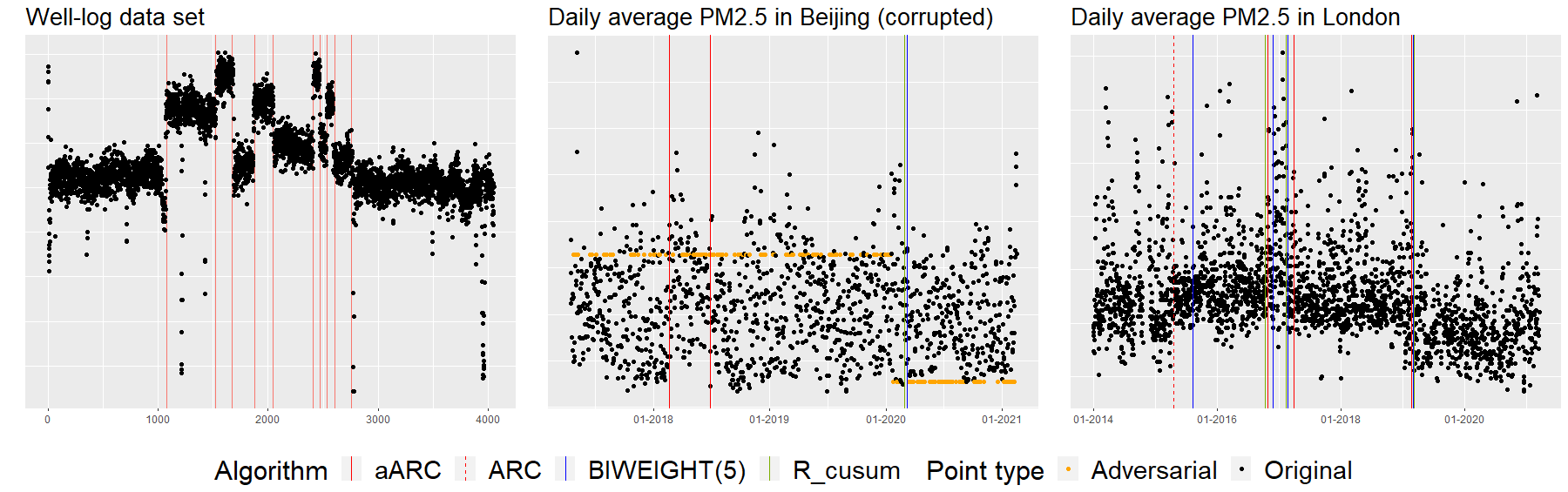}
    \caption{Real data analysis. From left to right: The well-log, Beijing PM2.5 index and London PM2.5 index data sets.  Lines are jittered in the middle and right panels for visualisation purposes.}
    \label{realdataplot}
\end{figure}

\section{Conclusion}\label{sec-conclu}

In this paper, we analysed the change point detection problem under a dynamic extension of the Huber $\varepsilon$-contamination model by allowing the contamination distributions to be different at each point. Under our framework, the adversary can deploy certain attacking strategies that are of interest for the change point detection problem but cannot be modelled within the existing literature. A computationally-efficient algorithm \textsc{arc} that combines the ideas in robust statistics and change point detection literature is shown to be nearly-optimal in terms of both the signal-to-noise ratio condition and the localisation rate when the minimal spacing $L$ is small or when $\varepsilon$ is of constant order. The optimality results developed in this paper are first time shown in the literature, but are still somewhat restrictive.  In order to achieve optimality in the whole parameter space, novel robust estimation and testing techniques are necessary and are on our agenda.

We note that in our framework, even if $\varepsilon$ is of constant order, the optimal localisation error is of the same order compared to the situation when no contamination exists. This is in contrast to the strong contamination model \citep[e.g.][]{pensia2020robust}, where a cluster of outliers are allowed to be created by the adversary, which would unavoidably have a larger impact on the localisation error rate. For example, placing $\varepsilon n$ contaminated points adjacent to a change point would incur a localisation error of $\varepsilon n$ which lead to inconsistent localisation in the sense of (\ref{consistency}). Our future plan includes both considering different contamination models and extending the methodology to more challenging data types.

We conjecture that using optimal and efficient robust estimators for high dimensional data or other complex data types combining with the scanning window idea could lead to similar results as in Theorem \ref{scp2} and Corollary \ref{nsp}.  This is left to our future work.

\begin{ack}

Funding in direct support of this work: DMS-EPSRC EP/V013432/1.

\end{ack}

\newpage
\bibliography{ref}
\newpage
\section*{Supplementary material}
\appendix

\counterwithin{lemma}{section}
\counterwithin{defn}{section}
\counterwithin{prop}{section}
\counterwithin{algorithm}{section}
\counterwithin{remark}{section}
\section{Additional concepts}

We use the two-sided Hausdorff distance to measure the distance between the estimated change points and the true change points.
\begin{defn}[Hausdorff distance]\label{hausdorff}
For any subset $S_1$, $S_2 \subset \mathbb{Z}$, the Hausdorff distance $d_H(S_1,S_2)$ between $S_1$ and $S_2$ is defined to be 
    \[
     \max\left\{\max_{s_1 \in S_1}\min_{s_2\in S_2}|s_1-s_2|, \max_{s_2\in S_2}\min_{s_1\in S_1} |s_1-s_2|\right\},
    \]
    with the convention that 
    \begin{equation*}
    d_H(S_1,S_2) = 
    \begin{cases}
       \infty, &\qquad S_1 = \emptyset \neq S_2 \quad \text{or} \quad S_2 = \emptyset \neq S_1, \\  0, &\qquad S_1 = S_2 = \emptyset.
    \end{cases}
    \end{equation*}
\end{defn}

To compare the performance of our method to a range of other methods studied in \cite{burg2020evaluation}, we consider the following covering metric in Section \ref{detailnum}.
\begin{defn}[Covering metric]\label{covermetric}
For any two partitions $\mathcal{G}$ and $\mathcal{G}'$ of the set $\{1,\ldots,n\}$, the covering metric of partition $\mathcal{G}$ by partition $G'$ is defined as 
\[
C(\mathcal{G}', \mathcal{G}) = \frac{1}{n}\sum_{A \in \mathcal{G}} |A|\max_{A' \in \mathcal{G}'}J(A, A'),
\]

where $|A|$ denotes the cardinality of the set $A$ and $J(A, A')$ is the Jaccard index defined as 
\[
J(A, A') = \frac{|A \cap A'|}{|A\cup A'|}.
\]
\end{defn}

\section{Technical details regarding the Robust Univarite Mean Estimator (RUME)}\label{appendix2}

The \textsc{rume} is proposed and studied in \cite{prasad2019unified}, which is an optimal one-sample robust mean estimator, without contamination.  Our \textsc{arc} algorithm relies on the analysis of \textsc{rume}, with adaptations to allow for different $H_i$'s at each observation.  For completeness, we include all the detailed analysis of \textsc{rume} in this section, with adaptations to the dynamic Huber contamination model studied in this paper.

\begin{algorithm}[htbp]
    \begin{algorithmic}
        \INPUT $\{Z_i\}_{i=1}^{2h} \subset \mathbb{R}$, $0<\varepsilon<1$, $0<\delta<1$
        \State Randomly split $\{Z_i\}_{i=1}^{2h}$ into $\mathcal{Z}$ and $\mathcal{Z}'$ each containing $h$ points;
        \State $\varepsilon \leftarrow \max\left\{\varepsilon, \frac{\log(1/\delta)}{h}\right\}$; 
        \State $D \leftarrow \left\lfloor h\left(1-2\varepsilon-2\sqrt{\varepsilon\frac{\log(1/\delta)}{h}}-\frac{\log(1/\delta)}{h}\right)\right\rfloor$;
        \State $I \leftarrow \emptyset$;
        \For{$j \in \{1,\ldots, h-D\}$}
                \State $I_j \leftarrow Z_{(j+D)}-Z_{(j)}$ \Comment{$Z_{(i)}$ denotes the $i$-th smallest value in $\mathcal{Z}$}
                \State $I \leftarrow I \cup I_j$
        \EndFor
        \State $j^* \leftarrow $ the index of the smallest value in $I$; 
        \State $\hat{I} \leftarrow [Z_{(j^*)},\,Z_{(j^*+D)}]$;
        \State \textsc{RUME}$ \leftarrow \frac{1}{\sum_{i=1}^{h}\mathds{1}\{Z'_i  \in \hat{I}\}}\sum_{i=1}^{h}Z'_i\mathds{1}\{Z'_i\in \hat{I}\}$.
        \OUTPUT{\textsc{RUME}}
        \caption{Robust Univariate Mean Estimation (\textsc{RUME})} \label{rumeb1}
    \end{algorithmic}
\end{algorithm}

Proposition \ref{rume} relies largely on Lemma 3 in \cite{pmlr-v108-prasad20a}, except that we consider model \eqref{huber2}, which allows the contamination distributions to be different for each $Z_i$.  The proof is a minor adaptation from that of Lemma~3 in~\cite{pmlr-v108-prasad20a}.

\begin{prop}[Lemma 3 in \cite{pmlr-v108-prasad20a}]\label{rume}
Suppose $Z_1,\dotsc,Z_{2h}$ are independent random variables with $Z_i$ generated from the distribution
\begin{equation}\label{huber2}
    (1-\varepsilon_i)F_0+\varepsilon_i H_i, \quad, i \in \{1, \ldots, 2h\},
\end{equation}
where $\varepsilon_i \leq \varepsilon$, $F_0$ is any distribution in $\mathbb{R}$ with mean $\mu$ and variance upper bounded by $\sigma^2$ and $H_i$'s are any distributions.  Let 
\[
\varepsilon' = \max\left\{\varepsilon,\,\frac{\log(1/\delta)}{h}\right\}.
\]
Then, if
\begin{equation}\label{A1}
    2\varepsilon'+2\sqrt{\varepsilon'\frac{\log(1/\delta)}{h}}+\frac{\log(1/\delta)}{h} < \frac{1}{2}\;\; \mathrm{and} \;\;
     \delta \leq C'1/h,
\end{equation}
where $C' > 0$ is an absolute constant, then it holds that with probability at least $1-5\delta$, 
\begin{equation}\label{b1}
    |\mathrm{RUME}(\{Z_i\}_{i=1}^{2h})-\mu| \leq C_1\sigma\sqrt{\varepsilon'},
\end{equation}
for some absolute positive constant $C_1$.
\end{prop}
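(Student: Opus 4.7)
The plan is to adapt the proof of Lemma~3 in \cite{pmlr-v108-prasad20a} to the heterogeneous contamination model~\eqref{huber2}. I would start with the latent representation $Z_i = B_i G_i + (1-B_i) W_i$, where $B_i \sim \mathrm{Bern}(1 - \varepsilon_i)$ independently, $G_i \overset{\mathrm{i.i.d.}}{\sim} F_0$, and $W_i \sim H_i$, so that the clean index set $\mathcal{G} = \{i : B_i = 1\}$ consists of i.i.d.~draws from $F_0$ conditional on its size. The only new ingredient relative to the i.i.d.~contamination setting is controlling the clean-sample counts $|\mathcal{G} \cap \mathcal{Z}|$ and $|\mathcal{G} \cap \mathcal{Z}'|$, which are now sums of independent but non-identical Bernoullis whose expectations are each at least $h(1-\varepsilon)$. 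A multiplicative Chernoff (or Bernstein) bound yields that, with probability at least $1-2\delta$, each half contains at least $h\{1 - \varepsilon' - \sqrt{\varepsilon'\log(1/\delta)/h}\}$ clean samples and at most $h\{\varepsilon' + \sqrt{\varepsilon'\log(1/\delta)/h} + \log(1/\delta)/h\}$ contaminated samples. The threshold $D$ is precisely calibrated so that $|\mathcal{G} \cap \mathcal{Z}| \geq D+1$ on this event.

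Next, I would establish a uniform concentration of the empirical measures of the clean subsamples over the class of intervals in $\mathbb{R}$, via either a DKW-type bound or a Vapnik--Chervonenkis argument for intervals (which have finite VC dimension). This gives, simultaneously over all intervals $I$ and with probability at least $1-2\delta$, deviations of order $\sqrt{\log(1/\delta)/h}$ between the empirical probability of being in $I$ and $F_0(I)$, both on $\mathcal{Z}$ and on $\mathcal{Z}'$. Applying Chebyshev's inequality to $F_0$ with variance at most $\sigma^2$, the interval $I^* = [\mu - \sigma/\sqrt{c\varepsilon'}, \mu + \sigma/\sqrt{c\varepsilon'}]$ has mass $F_0(I^*) \geq 1 - c\varepsilon'$ for a suitable small constant $c$, and hence empirically contains at least $D+1$ points of $\mathcal{Z}$. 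Since $\hat I$ is by construction the \emph{shortest} interval covering $D+1$ points of $\mathcal{Z}$, this forces $|\hat I| \lesssim \sigma/\sqrt{\varepsilon'}$; an analogous argument using Chebyshev shows that $\hat I$ must also contain a clean sample within $O(\sigma)$ of $\mu$, so $\hat I \subset [\mu - C\sigma/\sqrt{\varepsilon'}, \mu + C\sigma/\sqrt{\varepsilon'}]$.

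Finally, I would decompose $\mathrm{RUME} - \mu$ into contributions from clean and from contaminated samples in $\mathcal{Z}' \cap \hat I$. For the clean part, the sample-splitting in Algorithm~\ref{rumeb1} makes $\hat I$ a function of $\mathcal{Z}$ alone and hence independent of the clean observations in $\mathcal{Z}'$; combining the uniform concentration of Step~2 with the Cauchy--Schwarz bound $|\mathbb{E}[(G - \mu)\mathds{1}\{G \notin \hat I\}]| \leq \sigma \sqrt{F_0(\hat I^c)} \leq \sigma\sqrt{c\varepsilon'}$ yields a contribution of order $\sigma\sqrt{\varepsilon'}$. For the contaminated part, each summand satisfies $|W_i - \mu| \leq C\sigma/\sqrt{\varepsilon'}$ since $W_i \in \hat I$, and the ratio of contaminated to total samples in $\mathcal{Z}' \cap \hat I$ is $O(\varepsilon')$ by the Step~1 counts combined with the lower bound $|\mathcal{Z}' \cap \hat I| \gtrsim h$, producing another contribution of order $\varepsilon' \cdot \sigma/\sqrt{\varepsilon'} = \sigma\sqrt{\varepsilon'}$. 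Union bounding over the (at most five) high-probability events produces the claimed inequality with probability at least $1-5\delta$. The main obstacle is the data-dependence of $\hat I$ on the data used to evaluate the mean, which is circumvented by the sample splitting combined with uniform-over-intervals concentration; the heterogeneity of the $H_i$'s enters only through the counting step of Paragraph~1 and leaves the remainder of the argument unchanged from the i.i.d.~case.
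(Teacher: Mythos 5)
Your overall architecture mirrors the paper's proof closely: sample splitting to decouple $\hat I$ from the evaluation sample, a shortest-interval comparison with a Chebyshev interval to localise $\hat I$, a clean/contaminated decomposition of the weighted mean, and a Cauchy--Schwarz bound for the conditioning bias. However, there is a genuine gap in Step~2/3 where you control $F_0(\hat I^c)$. You invoke a uniform (DKW- or VC-type) deviation bound of order $\sqrt{\log(1/\delta)/h}$ and then conclude $F_0(\hat I^c)\leq c\varepsilon'$. These two statements are incompatible in the regime where $\varepsilon$ is small: if $\varepsilon'=\log(1/\delta)/h$, the additive uniform bound only yields $F_0(\hat I^c)\lesssim \varepsilon' + \sqrt{\log(1/\delta)/h}=\varepsilon'+\sqrt{\varepsilon'}\asymp\sqrt{\varepsilon'}$, so the Cauchy--Schwarz term becomes $\sigma\sqrt{F_0(\hat I^c)}\lesssim\sigma(\varepsilon')^{1/4}$, which is strictly weaker than the required $\sigma\sqrt{\varepsilon'}$. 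An additive bound simply cannot convert a small empirical probability into a comparably small population probability, because its fluctuation scale does not shrink with the target probability.

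The paper circumvents this by invoking a \emph{relative} (self-normalised) VC deviation inequality (Theorem~7 of \cite{bousquet2003introduction}), which bounds $F_0(\hat I^c)\leq F_0^h(\hat I^c) + 2\sqrt{F_0^h(\hat I^c)\,\{\log S_{\mathcal{F}}(2h)+\log(4/\delta)\}/h} + 4\{\log S_{\mathcal{F}}(2h)+\log(4/\delta)\}/h$. Once one has the empirical bound $F_0^h(\hat I^c)\lesssim\varepsilon'$ and notes that $\{\log h + \log(1/\delta)\}/h\lesssim\varepsilon'$ under the stated conditions, the relative bound yields $F_0(\hat I^c)\lesssim\varepsilon'$, which is exactly what the Cauchy--Schwarz step needs. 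So you should replace your additive uniform concentration with this relative VC inequality; everything else in your outline — the latent Bernoulli representation handling the heterogeneous $\varepsilon_i$, the Bernstein counting, the shortest-interval argument and the $T_1$/$T_2$ split — tracks the paper's proof and is sound.
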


\begin{proof}
Without loss of generality, we can take $\mu = 0$. Let $I^*$ be the interval $(-\sigma/\sqrt{\varepsilon},\,\sigma/\sqrt{\varepsilon})$ and $F_0(I^*)$ denotes the probability that one sample drawn from ($\ref{huber2}$) is distributed according to $F_0$ and lies in $I^*$. If $X \sim F_0$, then by Chebyshev's inequality we have 
\[
\mathbb{P}(|X|>\sigma/\sqrt{\varepsilon}) \leq \varepsilon.
\]
Therefore we have $F_0(I^*) = \mathbb{P}(Z_i \in I^*\, \text{and}\, Z_i \sim F_0) = \mathbb{P}(Z_i \in I^* |Z_i \sim F_0) \mathbb{P}(Z_i \sim F_0) \geq (1-\varepsilon)(1-\varepsilon)\geq 1-2\varepsilon$.

Now let $X_i = \mathds{1}\{Z_i \sim F_0 \,\,\text{and}\,\,Z_i \in  I^* \}$ and $F_0^h(I^*) = \sum_{i=1}^h X_i/h$. Note that $X_i$ is a Bernoulli random variable with success probability $F_0(I^*)$. Therefore, using the Bernstein inequality for bounded random variables (e.g. Theorem 2.8.4 in \cite{vershynin2018high}), we have with probability at least $1-\delta$
\begin{align}
    F_0^h(I^*) - F_0(I^*) &\geq -\sqrt{F_0(I^*)(1-F_0(I^*))}\sqrt{\frac{2\log(1/\delta)}{h}}-\frac{2\log(1/\delta)}{3h} \\
    F_0^h(I^*) &\geq 1-2\varepsilon-\sqrt{2\varepsilon(1-2\varepsilon)\frac{2\log(1/\delta)}{h}}-\frac{2\log(1/\delta)}{3h}\label{I*num},
\end{align}
since $F_0(I^*)(1-F_0(I^*))$ is a decreasing function of $F_0(I^*)$ when $F_0(I^*)>1/2$. Also, note that the Bernstein bound is used here since it improves the Hoeffding bound when the variance of $X_i$ is small.

Let $g_h(2\varepsilon,\delta) = 2\varepsilon+\sqrt{2\varepsilon(1-2\varepsilon)\frac{2\log(1/\delta)}{h}}+\frac{2\log(1/\delta)}{3h}$ and $\hat{I} = [a,b]$ be the shortest interval containing $h(1-g_h(2\varepsilon,\delta))$ points in $\mathcal{Z}$. Since $I^*$ also contains at least $h(1-g_h(2\varepsilon,\delta))$ points due to (\ref{I*num}), we must have
\[
\mathrm{length}(\hat{I})\leq \mathrm{length}(I^*) = 2\sigma/\sqrt{\varepsilon}.
\]
Further, if $g_h(2\varepsilon,\delta) < 1/2$, then both $\hat{I}$ and $I^*$ contain more than half of the data in $\mathcal{Z}$. As a result, these two intervals must intersect and we have 
\begin{equation}\label{eq1}
    |z - \mu |\leq 4\sigma/\sqrt{\varepsilon} \quad \forall z \in \hat{I}.
\end{equation}

Next, we control the error of the final estimator. Let $|\hat{I}| = \sum_{Z_i\in \mathcal{Z}'}\mathds{1}\{Z_i \in \hat{I}\}$ be the number of points from the second sample and lie in $\hat{I}$. Similarly, let $|\hat{I}_{H}|$ and $|\hat{I}_{F_0}|$ denote the number of points that lie in $\hat{I}$ and are \textbf{not} distributed according to $F_0$ (i.e.\ adversarial point) and according to $F_0$ respectively. Note that
\begin{equation*}
    \left|\frac{1}{|\hat{I}|} \sum_{Z_i \in \hat{I}}Z_i  \right| \leq T_1+T_2
\end{equation*}
where 
\begin{equation*}
    T_1 = \left|\frac{1}{|\hat{I}|} \sum_{\substack{Z_i \in \hat{I} \\ Z_i \not\sim F_0}}Z_i \right| \quad \text{and} \quad T_2 = \left|\frac{1}{|\hat{I}|} \sum_{\substack{Z_i \in \hat{I} \\ Z_i \sim F_0}}Z_i  \right|.
\end{equation*}

\textbf{Control of $T_1$}: To control $T_1$, we use (\ref{eq1}) to get 
\[
T_1 \leq \frac{|\hat{I}_{H}|}{|\hat{I}|}\max_{\substack{Z_i \in \hat{I}\\Z_i\not\sim F_0 }}|Z_i| \leq  \frac{|\hat{I}_{H}|}{|\hat{I}|} \frac{4\sigma}{\sqrt{\varepsilon}}.
\]
To bound the ratio $|\hat{I}_{H}|/|\hat{I}|$, notice that the total number of points that are drawn from the adversarial distributions can be controlled by the Bernstein inequality. Therefore, we have with probability $1-\delta$
\begin{equation}\label{H0bound}
    \left|\frac{\hat{I}_{H}}{h}\right| \leq \varepsilon+\sqrt{\varepsilon(1-\varepsilon)}\sqrt{\frac{2\log(1/\delta)}{h}}+\frac{2\log(1/\delta)}{3h} = g_h(\varepsilon,\delta),
\end{equation}
since $|\hat{I}_{H}|$ is less than the total number of points that are drawn from the adversarial distributions. Together, we have with probability at least $1-2\delta$
\begin{equation}\label{t1}
    T_1 \leq \frac{g_h(\varepsilon,\delta)}{1-g_h(2\varepsilon,\delta)}\frac{4\sigma}{\sqrt{\varepsilon}} = C_1 \sigma\sqrt{\varepsilon}
\end{equation}
for some absolute constant $C_1$, where we require $\varepsilon \gtrsim \log(1/\delta)/h$.

\textbf{Control of $T_2$}: To control $T_2$, we write it as 

\[
T_2 = \left|\frac{|\hat{I}_{F_0}|}{|\hat{I}|}\left[\frac{1}{|\hat{I}_{F_0}|}\sum_{\substack{Z_i\in \hat{I}\\Z_i \sim F_0}}Z_i \right]\right| \leq T_{2a} +T_{2b}
\]
where 
\[
T_{2a}=\frac{|\hat{I}_{F_0}|}{|\hat{I}|}\left[\frac{1}{|\hat{I}_{F_0}|}\sum_{\substack{Z_i\in \hat{I}\\Z_i \sim F_0}}Z_i-\mathbb{E}[Z|Z \in \hat{I}, Z \sim F_0]\right] \quad \text{and} \quad T_{2b} = \frac{|\hat{I}_{F_0}|}{|\hat{I}|}\left|\mathbb{E}[Z|Z \in \hat{I}, Z \sim F_0]\right|.
\]
In $T_{2a}$, since conditional on $Z_i \in \hat{I}$, each $Z_i$ is a bounded random variable with $|Z_i-\mathbb{E}(Z_i)|\leq \text{length}(\hat{I}) = 2\sigma/\sqrt{\varepsilon}$ and they are independent of each other, we can again use Bernstein inequality. Using Lemma 15 in \cite{prasad2019unified}, which says for any event $E$ that occurs with probability at least $P(E)$,
\[
\mathbb{E}_{Z\sim F_0}\left[\left(Z-\mathbb{E}[Z|Z\in E]\right)^2|Z\in E\right]\leq \frac{\sigma^2}{\mathbb{P}(E)},
\]
we can obtain an upper bound for the conditional variance of $Z_i$. Denote $F_0(\hat{I})$ to be the probability that $Z_i$ is distributed according to $F_0$ and lies in $\hat{I}$. Then, we have with probability at least $1-\delta$

\begin{equation}\label{t2a}
    T_{2a} \leq \sqrt{\frac{2\sigma^2\log(2/\delta)}{F_0(\hat{I})|\hat{I}_{F_0}|}} + \frac{4\sigma}{\sqrt{\varepsilon}}\frac{\log(2/\delta)}{3|\hat{I}_{F_0}|},
\end{equation}
by the Bernstein inequality. 

For $T_{2b}$, we first notice 
\begin{align*}
   \left|\mathbb{E}_{Z\sim F_0}[Z|Z \not\in \hat{I}]\right| &= \frac{\mathbb{E}_{Z\sim F_0}[Z \mathds{1}_{Z \not\in \hat{I}}]}{F_0(\hat{I}^c)} \\
    & \leq \frac{\sqrt{\mathbb{E}_{Z\sim F_0}[Z^2]F_0(\hat{I}^c)}}{F_0(\hat{I}^c)} \\
    &= \frac{\sigma}{\sqrt{F_0(\hat{I}^c)}},
\end{align*}
where $F_0(\hat{I}^c)$ is the probability that $Z$ is distributed according to $F_0$ but does not lie in $\hat{I}$ and we use the Cauchy-Schwarz inequality in the second line. Combining with fact that 
\[
\left|\mathbb{E}\left[Z|Z\in \hat{I}\right]\right|F_0(\hat{I}) = F_0(\hat{I}^c)\left|\mathbb{E}\left[Z|Z\not\in \hat{I}\right]\right|,
\]
and assuming $F_0(\hat{I}) \geq 1/2$, we have 
\begin{equation}\label{t2b}
    T_{2b} \leq 2\sigma \sqrt{F_0(\hat{I}^c)}.
\end{equation}
 
Combining (\ref{t1}), (\ref{t2a}), and (\ref{t2b}), we get with probability at least $1-3\delta$

\begin{equation}\label{t1t2}
    |\mathrm{RUME}-\mu| \leq C_1 \sigma\sqrt{\varepsilon}+\sqrt{\frac{2\sigma^2\log(2/\delta)}{F_0(\hat{I})|\hat{I}_{F_0}|}} + \frac{4\sigma}{\sqrt{\varepsilon}}\frac{\log(2/\delta)}{3|\hat{I}_{F_0}|}+2\sigma \sqrt{F_0(\hat{I}^c)}.
\end{equation}

To get the claimed bound, we need to study $F_0(\hat{I})$ and $F_0^h(\hat{I}^c) = \frac{|\hat{I}_{F_0}|}{\sum_{Z_i \in \mathcal{Z}'}\mathds{1}_{Z_i \sim F_0}}$. Note that $F_0^h(\hat{I})$ is a sample version of $F_0(\hat{I})$. Let $|\hat{h}_{H}|$ denotes the number of points in $\mathcal{Z}$ which are \textbf{not} drawn from $F_0$ and lie in $\hat{I}$, and $|\hat{h}_{F_0}|$ denotes the number of points in $\mathcal{Z}$ which are drawn from $F_0$ and lie in $\hat{I}$. Note that $|\hat{h}_{H}|$ and $|\hat{I}_{H}|$ have the same distribution, therefore using ($\ref{H0bound}$), we have with probability at least $1-\delta$
\[
\frac{|\hat{h}_{H}|}{h} \leq g_h(\varepsilon,\delta).
\]
Since $|\hat{h}_{H}|+|\hat{h}_{F_0}| = h(1-g_h(2\varepsilon,\delta))$, we have with probability at least $1-\delta$
\begin{equation}\label{pn}
    |\hat{h}_{F_0}| \geq h\left(1-g_h(2\varepsilon,\delta)-g_h(\varepsilon,\delta)\right).
\end{equation}
Note that $|\hat{I}_{F_0}|$ and $|\hat{h}_{F_0}|$ also have the same distribution. Therefore, with probability $1-\delta$, 
\begin{equation}\label{hatI}
    |\hat{I}_{F_0}| \geq h(1-g_h(2\varepsilon,\delta)-g_h(\varepsilon,\delta)) = C_2 h
\end{equation}
for some constant $C_2$, where we require $\varepsilon \gtrsim \log(1/\delta)/h$. Equation (\ref{pn}) implies that 
\[
F_0^h(\hat{I}^c) = \frac{|\hat{I}_{F_0}|}{\sum_{Z_i \in \mathcal{Z}'}\mathds{1}_{Z_i \sim F_0}} \geq 1-g_h(2\varepsilon,\delta)-g_h(\varepsilon,\delta).
\]
Consequently, we have 
\begin{equation}
    F_0^h(\hat{I^c})\leq g_h(2\varepsilon,\delta)+g_h(\varepsilon,\delta) = 3\varepsilon + (\sqrt{\varepsilon(1-\varepsilon)}+\sqrt{2\varepsilon(1-2\varepsilon)})\sqrt{\frac{2\log(1/\delta)}{h}} + \frac{4\log(1/\delta)}{3h}.
\end{equation}
Provided that $\varepsilon \lesssim \log(1/\delta)/h$, we have 
\begin{equation}\label{epl}
    F_0^h(\hat{I^c})\leq 3\varepsilon + C_3 \frac{\log(1/\delta)}{h}.
\end{equation}
Using the relative deviation lemma from empirical process theory \citep[e.g.\ Theorem 7 in][]{bousquet2003introduction}, we can finally bound $F_0(\hat{I}^c)$ as 
\begin{equation}\label{vc}
    F_0(\hat{I}^c) \leq F_0^h(\hat{I}^c)+2\sqrt{F_0^h(\hat{I}^c)\frac{\log(S_{\mathcal{F}}(2h))+\log(4/\delta)}{h}}+4 \frac{\log(S_{\mathcal{F}}(2h))+\log(4/\delta)}{h},
\end{equation}
with probability at least $1-\delta$. Since the VC dimension for intervals in $\mathbb{R}$ is 2, we have $S_{\mathcal{F}}(2h) \leq (2h+1)^2$ by the Sauer-Shelah Lemma (e.g.~Theorem 8.3.16 in \cite{vershynin2018high}). Substituting the upper bound of $S_{\mathcal{F}}(2h)$ and $F_0^h(\hat{I^c})$ into equation (\ref{vc}) and using the fact that $\sqrt{ab}\leq a+b$ for any $a,b\geq 0$, we get with probability at least $1-2\delta$
\begin{equation}\label{empi}
    F_0(\hat{I}^c) \leq 9\varepsilon + C_6 \frac{\log(1/\delta)}{h}+12\frac{\log(2h+1)}{h} = 9\varepsilon + C_6 \frac{\log(1/\delta)}{h}+C_7\frac{\log(h)}{h}.
\end{equation}

Combining (\ref{t1t2}), (\ref{hatI}), and (\ref{empi}), we have that with probability at least $1-5\delta$
\[
|\mathrm{RUME}-\mu|\leq C_1 \sigma\sqrt{\varepsilon}+C_2 \sigma\sqrt{\frac{\log(h)}{h}}+C_3 \sqrt{\frac{\log(1/\delta)}{h}}+C_4\sigma\frac{\log(1/\delta)}{\sqrt{\varepsilon}h}.
\]

The claimed bound follows by letting $$\varepsilon' = \max\left\{\varepsilon, \frac{\log(1/\delta)}{h}\right\}$$ and choosing $\delta \lesssim 1/h$. 
\end{proof}

\section{Proofs of the results in Section \ref{lowerboundsection}}\label{appendix1}

In this section, we prove Lemmas~\ref{lemma1} and \ref{lowerbound3}. Throughout this section, we use $\delta(x)$ to denote the Dirac measure at point $x$ and consider $\varepsilon_1 = \dotsc = \varepsilon_n = \varepsilon$ in Assumption \ref{modelasmmp}. We prove Lemma \ref{lemma1} by considering three sub-problems in Lemmas~\ref{B1}, \ref{sublemma} and \ref{lowerbound1}, respectively. Note that Lemma~\ref{B1} is from~\cite{wang2020} and it quantifies the difficulty of the change point detection problem without any contamination. The construction in the proof of Lemma \ref{lowerbound1} also appears in \cite{Lai_2016}, but we formulate it here formally in Le Cam's framework. Lemma~\ref{lowerbound3} is proved by considering two sub-problems in Lemmas~\ref{B3} and \ref{B4}.

     \begin{proof}[Proof of Lemma \ref{lemma1}] 
     
     Let $s = \max\left\{8\varepsilon, \log(n)(1-2\varepsilon), 4\varepsilon(1-2\varepsilon)L\right\}$. To prove Lemma \ref{lemma1}, it is sufficient to prove that the claim 
     \begin{equation}\label{B.1}
         \inf_{\hat{\eta}} \sup_{P\in \mathcal{P}} \mathbb{E}_P(d_H(\hat{\bm{\eta}},\,\bm{\eta}(P))) \geq \frac{n}{8}
     \end{equation}
     holds in three cases. First, note that when $s = \log(n)(1-2\varepsilon)$, the claim (\ref{B.1}) follows from Lemma \ref{B1} below. Next, we show that the claim (\ref{B.1}) holds when $s = 8\varepsilon$ in Lemma \ref{sublemma} below. To conclude the proof, we show that the claim (\ref{B.1}) holds when $s = 4\varepsilon(1-2\varepsilon)L$ in Lemma  \ref{lowerbound1} below.
   \end{proof}
   
   \begin{lemma}[Lemma 1 in \cite{wang2020}]\label{B1}
   Let $\{Y_i\}_{i=1}^n$ satisfy Assumption \ref{modelasmmp} with $\varepsilon = 0$ and $F_1,\dotsc,F_n$ being sub-Gaussian random variables. Let $P_{\kappa,\sigma,L}^n$ denote the corresponding joint distribution. For any $0<c<1$, consider the class of distributions
   \begin{equation*}
       \mathcal{P}^n_c = \left\{P_{\kappa,\sigma,L}^n: L = \min\left\{c\frac{\log(n)}{\kappa^2/\sigma^2},\,\frac{n}{4}\right\} \right\}.
   \end{equation*}
   Then, there exists a $n(c)$, which depends on $c$, such that, for all $n$ larger than $n(c)$, it holds that 
   \begin{equation*}
       \inf_{\hat{\bm{\eta}}} \sup_{P\in \mathcal{P}^n} \mathbb{E}_P(d_H(\hat{\bm{\eta}},\,\bm{\eta}(P))) \geq \frac{n}{8}
   \end{equation*}
   where the infimum is over all estimators $\bm{\hat{\eta}}$ of the change point locations and $\bm{\eta}(P)$ denotes the change point location of $P\in \mathcal{P}$.
\end{lemma}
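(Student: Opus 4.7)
The plan is to prove this minimax lower bound via the standard reduction from estimation to hypothesis testing, namely Le Cam's two-point method supplemented by Fano's inequality where needed. Because Lemma~\ref{B1} is lifted verbatim from \cite{wang2020}, the natural route in the paper is simply to cite it; below I outline how I would reproduce the argument from scratch.

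First I would construct a family of distributions in $\mathcal{P}^n_c$. Take Gaussian noise $\mathcal{N}(0, \sigma^2)$ (which is sub-Gaussian and minimises the KL cost per sample). For each $j = 1, \ldots, M$ with $M \asymp n/L$, let $P_j$ have a single change point at $\eta_j = jL$, with pre-change mean $0$ and post-change mean $\kappa$. Every $P_j$ lies in $\mathcal{P}^n_c$ because the minimum segment length is $L$ and $L \leq n/4$. The pairwise KL divergence is explicit: since the two product measures agree outside the interval between the two change points and differ by a mean shift of $\kappa$ on that interval, $\mathrm{KL}(P_j \| P_k) = |\eta_j - \eta_k|\,\kappa^2/(2\sigma^2)$.

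Next I would convert this KL bound into a lower bound on the minimax Hausdorff error. A bare two-point construction is not enough, because the KL constraint caps the hypothesis separation at $|\eta_0 - \eta_1| \leq O(\sigma^2/\kappa^2)$, which in this regime can be much smaller than $n$. The remedy is a Fano/Assouad argument: with $M \asymp n/L$ candidate change points, $\log M \asymp \log(n/L) \asymp \log n$, and the assumption $L\kappa^2/\sigma^2 \leq c\log n$ controls $I(\theta;Y)$ by a small constant multiple of $\log M$, so Fano delivers $\inf_{\hat{\eta}} \max_j \mathbb{P}_{P_j}(\hat{\eta} \neq \eta_j) \geq 1/2$. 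Because misidentifying which cluster of $\eta_j$'s occurred forces Hausdorff error of order $n$ (after placing the candidates to span $[0,n]$), Markov's inequality then upgrades this to $\mathbb{E}_P[d_H] \geq n/8$.

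The main obstacle is engineering the construction so that three competing constraints hold simultaneously: the pairwise KLs must be bounded by a constant multiple of $\log M$, any incorrect identification must force Hausdorff distance $\geq n/8$, and the minimum spacing $L \leq n/4$ must be respected throughout. Controlling $I(\theta; Y)$ by the \emph{average} rather than the maximum pairwise KL (or via a Birg\'e-type argument) is typically required to get the sharp $\log n$ dependence. The boundary case $L = n/4$ in the definition of $\mathcal{P}^n_c$ must also be handled separately, since then only a few distinct hypotheses fit in $[0,n]$; a bespoke two-point argument with carefully chosen $\kappa$ suffices there.
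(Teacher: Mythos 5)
The paper does not prove Lemma~\ref{B1}: it is lifted verbatim from \cite{wang2020} and cited. Your proposal acknowledges this, so the question is whether your reconstruction is sound.

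Your Fano-based reconstruction contains a genuine gap. To obtain the $n/8$ lower bound on the Hausdorff loss, the competing hypotheses must be separated by $\Theta(n)$ in $d_H$. But in your construction the $M\asymp n/L$ candidate change points are placed at $jL$, so consecutive hypotheses are only $L$ apart; a misidentification between neighbours costs $\approx L$, not $\approx n$. Fano as stated would then deliver a lower bound of order $L$, which in general is far smaller than $n/8$. You gesture at ``misidentifying which cluster of $\eta_j$'s occurred'' to force error $\asymp n$, but if you collapse the $M$ hypotheses into $O(1)$ clusters with $\Theta(n)$ separation, then $\log M = O(1)$ and Fano cannot absorb the pairwise KL of order $c\log n$; the method degenerates and gives nothing the two-point bound doesn't already give. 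There is no regime in which a packing-set/Fano argument improves on Le Cam here, precisely because the target rate is $\Theta(n)$ rather than the localisation rate $\Theta(\sigma^2/\kappa^2)$.

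The argument you relegate to the ``boundary case $L=n/4$'' is in fact the entire argument. The class $\mathcal{P}^n_c$ permits any $\kappa>0$ with $L$ determined by $\kappa$, so for the minimax lower bound one is free to pick a single hard value of $\kappa$. Choose $\kappa^2/\sigma^2 = 2/n$; then $c\log(n)/(\kappa^2/\sigma^2) = cn\log(n)/2 > n/4$ for $n$ large, so $L = n/4$. Take $P_0$ with a single change point at $L = n/4$ and $P_1$ with a single change point at $n-L = 3n/4$; both are admissible since each segment has length at least $L$. Then $\mathrm{KL}(P_0\,\|\,P_1) = (n-2L)\,\kappa^2/(2\sigma^2) = 1/2$, so by Pinsker $\mathrm{TV}(P_0,P_1)\le 1/2$, and $d_H(\eta(P_0),\eta(P_1)) = n/2$. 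Le Cam's two-point bound then gives
\begin{equation*}
\inf_{\hat{\bm{\eta}}}\sup_{P\in\mathcal{P}^n_c}\mathbb{E}_P\bigl[d_H(\hat{\bm{\eta}},\bm{\eta}(P))\bigr]
\;\ge\; \frac{d_H(\eta(P_0),\eta(P_1))}{2}\bigl(1-\mathrm{TV}(P_0,P_1)\bigr)
\;\ge\; \frac{n}{4}\cdot\frac{1}{2} \;=\; \frac{n}{8}.
\end{equation*}
This matches the two-point Le Cam structure the paper uses for the companion bounds in Lemmas~\ref{sublemma}, \ref{lowerbound1}, \ref{B3} and \ref{B4}, and is the route taken in \cite{wang2020}. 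In short: drop the Fano machinery, and promote your ``bespoke two-point argument'' from a special case to the whole proof.
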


   \begin{lemma}\label{sublemma}
    Let $Y_1 \dotsc,Y_n$ be a time series satisfying Assumption 1 with only one change point and let $P^n_{\kappa,L,\sigma,\varepsilon}$ denote the corresponding joint distribution. Consider the class of distribution 
    \[
    \mathcal{P} = \left\{P^n_{\kappa,L,\sigma,\varepsilon} : \frac{\kappa^2 L}{\sigma^2} < \frac{8\varepsilon}{1-2\varepsilon}, L \leq \floor*{\frac{n}{4}}\right\},
    \]
    then  \begin{equation*}
       \inf_{\hat{\eta}} \sup_{P\in \mathcal{P}} \mathbb{E}_P(d_H(\hat{\bm{\eta}},\,\bm{\eta}(P))) \geq \frac{n}{4}
   \end{equation*}
   where the infimum is over all estimators $\hat{\eta}$ of the change point locations and $\bm{\eta}(P)$ is the true change point of $P \in \mathcal{P}$.
    \end{lemma}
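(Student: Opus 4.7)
The plan is to invoke Le Cam's two-point method with two distributions $P_0, P_1 \in \mathcal{P}$ whose change points differ by at least $n/2$ but whose joint observational laws are identical. The key trick is to exploit the freedom in the contamination distributions $H_i$ to absorb the mean shift between the clean signals $F_i^{(0)}$ and $F_i^{(1)}$, so that the observed mixtures agree at every time point.

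Fix $L \leq \floor*{n/4}$ and choose $\kappa > 0$ small enough to satisfy both $\kappa^2 L/\sigma^2 < 8\varepsilon/(1-2\varepsilon)$ (class membership) and $\kappa^2(1-2\varepsilon)/\varepsilon \leq \sigma^2$ (the variance budget used below); both are compatible for $\kappa$ sufficiently small. Set $\eta = \varepsilon/(1-\varepsilon)$ and define signal distributions $F^- = \delta(0)$ (mean $0$, variance $0$) and $F^+ = (1-\eta)\delta(0) + \eta\delta(\kappa/\eta)$, which has mean $\kappa$ and variance $\kappa^2(1-2\varepsilon)/\varepsilon \leq \sigma^2$. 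Under $P_0$, place the change point at $\eta_0 = L$ with $F_i = F^-$ for $i \leq L$ and $F_i = F^+$ for $i > L$; under $P_1$, place it at $\eta_1 = n-L$ with the analogous piecewise assignment. Both belong to $\mathcal{P}$ and $|\eta_0-\eta_1| = n-2L \geq n/2$.

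Next, specify the contamination laws so that the per-time-point observational distributions coincide under $P_0$ and $P_1$. On $i \leq L$ and $i > n-L$ the signal distributions already agree, so take $H_i^{(0)} = H_i^{(1)}$ arbitrarily. For $L < i \leq n-L$, where $F_i^{(0)} = F^+$ and $F_i^{(1)} = F^-$, take $H_i^{(0)} = \delta(0)$ and $H_i^{(1)} = \delta(\kappa/\eta)$. The identity $(1-\varepsilon)\eta = \varepsilon$ yields
\[
(1-\varepsilon) F^+ + \varepsilon\delta(0) = (1-\varepsilon)\delta(0) + \varepsilon\delta(\kappa/\eta) = (1-\varepsilon) F^- + \varepsilon\delta(\kappa/\eta),
\]
so all marginals match; by independence across $i$, the full joint laws satisfy $\mathrm{TV}(P_0, P_1) = 0$. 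Because $P_0 = P_1$ as distributions on $\{Y_i\}_{i=1}^n$, any estimator $\hat\eta$ has the same law under both; combined with the triangle inequality $d_H(\hat\eta, \{\eta_0\}) + d_H(\hat\eta, \{\eta_1\}) \geq |\eta_0-\eta_1|$, taking expectation gives
\[
\max\{\mathbb{E}_{P_0}[d_H(\hat\eta, \{\eta_0\})],\, \mathbb{E}_{P_1}[d_H(\hat\eta, \{\eta_1\})]\} \geq \tfrac{1}{2}|\eta_0-\eta_1| \geq n/4,
\]
and passing to the supremum over $\mathcal{P}$ yields the claim.

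The main obstacle is the moment-matching design: one needs signal distributions with prescribed means $0$ and $\kappa$ whose total variation distance is at most $\varepsilon/(1-\varepsilon)$ so the mixtures can be made identical, while keeping both variances $\leq \sigma^2$. The extremal choice $\eta = \varepsilon/(1-\varepsilon)$ concentrates the entire TV budget into a single atom placed at $\kappa/\eta$, and the corresponding variance constraint $\kappa^2(1-2\varepsilon)/\varepsilon \leq \sigma^2$ is precisely the scale $\kappa^2 \lesssim \sigma^2\varepsilon/(1-2\varepsilon)$ that dictates the threshold appearing in Lemma~\ref{sublemma}.
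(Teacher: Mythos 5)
Your proof is correct, and it takes a genuinely different route from the paper's. The paper uses Gaussian $F_i$'s and a \emph{joint} contamination distribution over $L$ consecutive observations, whose density is $\tfrac{1-\varepsilon}{\varepsilon}(\phi_\kappa-\phi_0)\mathds{1}_{\phi_\kappa>\phi_0}$ (and its mirror), then exhibits the coincidence $\widetilde P=\widetilde Q$ via the identity $\phi_0+(\phi_\kappa-\phi_0)\mathds{1}_{\phi_\kappa>\phi_0}=\phi_\kappa+(\phi_0-\phi_\kappa)\mathds{1}_{\phi_0>\phi_\kappa}$, and verifies class membership by bounding TV below by squared Hellinger, which produces the factor $\kappa^2 L/\sigma^2 \le 8\log(1+\varepsilon/(1-2\varepsilon))$. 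Your construction instead uses two-point and Dirac signal distributions with per-observation Dirac contamination; it is more elementary, avoids the Hellinger calculation entirely, and is in spirit very close to the paper's own proof of Lemma~\ref{lowerbound1} (the third sub-lemma), which uses essentially the same moment-matching idea. The trade-off is that your per-observation contamination yields the variance budget $\kappa^2/\sigma^2 \le \varepsilon/(1-2\varepsilon)$ with no dependence on $L$, whereas the paper's \emph{coordinated} contamination across $L$ observations is precisely what makes the threshold $\kappa^2 L/\sigma^2 \lesssim \varepsilon/(1-2\varepsilon)$ appear, and thus "explains" the $L$ in the stated constant. For the lower bound itself this does not matter: both constructions exhibit identical distributions $P_0=P_1\in\mathcal{P}$ with change points $n-2L\ge n/2$ apart, and Le Cam gives $n/4$. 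Your remark that both constraints are compatible by taking $\kappa$ small is exactly the right observation to make the argument self-contained; note also that for $L>8$ the class constraint already implies your variance bound, so the shrinking of $\kappa$ is only needed for small $L$. One stylistic nit: the presentation would read more cleanly if you explicitly verified, as you did implicitly, that $(1-\varepsilon)(1-\eta)+\varepsilon = 1-\varepsilon$, since both atoms of the observed mixture need to match, not just the one at $\kappa/\eta$.
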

    \begin{proof}[Proof of Lemma \ref{sublemma}]
      Without loss of generality, suppose $n/L = c(n)$, where $c(n)$ is an integer that is allowed to depend on $n$. Denote the density of $\mathcal{N}(u_0,\,\sigma^2 I)$ by $\phi_0$ and the density of $\mathcal{N}(u_k,\,\sigma^2I)$ by $\phi_\kappa$, where $u_0 \in \mathbb{R}^L$ is a vector with all entries being $0$, $u_\kappa \in \mathbb{R}^L$ is a vector with all entries being $\kappa$, and $I$ is the identity matrix of dimension $L \times L$.  Let $\mathds{1}_{\phi_\kappa>\phi_0}$ be the indicator function, i.e. $\mathds{1}_{\phi_\kappa>\phi_0}(x) = 1$ if $\phi_\kappa(x)>\phi_0(x)$ and $\mathds{1}_{\phi_2>\phi_1}(x) = 0$ otherwise. Let $\widetilde{P}$ denote the joint distribution of $\{Y_i\}_{i=1}^n$ with one change point at $L \leq n/4$ such that 
    \[
    Y_i \sim (1-\varepsilon)F_i+ \varepsilon H_i
    \]
    where $F_1 = F_2=\dotsc =F_L= \mathcal{N}(0,\,\sigma^2)$ and $F_{L+1}=\dotsc = F_n = \mathcal{N}(\kappa,\,\sigma^2)$. For the contamination distributions, we choose the joint distribution of $\{H_{i}\}_{i=1}^L$ in $\widetilde{P}$ to have the density 
    $$\frac{1-\varepsilon}{\varepsilon}(\phi_\kappa-\phi_0)\mathds{1}_{\phi_\kappa>\phi_0},$$
    and $\{H_{i}\}_{i=jL+1}^{(j+1)L}$, in $\widetilde{P}$ to have joint distribution with density 
    $$\frac{1-\varepsilon}{\varepsilon}(\phi_0-\phi_\kappa)\mathds{1}_{\phi_0>\phi_\kappa},$$
    for $j = 1, \dotsc, c(n)-1$. 
    
    Similarly, let $\widetilde{Q}$ denote the joint distribution of $\{Y_i'\}_{i=1}^n$ with one change point at $L'  = n-L$ such that 
    \[
    Y_i \sim (1-\varepsilon)F'_i+ \varepsilon H'_i
    \]
    where $F'_1 = F'_2=\dotsc =F'_{L'}= \mathcal{N}(0,\,\sigma^2)$ and $F'_{L'+1}=\dotsc = F'_n = \mathcal{N}(\kappa,\,\sigma^2)$. For contamination distributions, we choose $\{H_i'\}_{i=1}^{L'}$ in the same way as $\{H_i\}_{i = L+1}^n$, and $\{H_i'\}_{i=L'+1}^n$ in the same way as $\{H_i\}_{i = 1}^L$
    
    Finally, choose the contamination proportion $\varepsilon$ such that  
    \[
    \mathrm{TV}\left(\mathcal{N}(u_0,\,\sigma^2 I),\, \mathcal{N}(u_\kappa,\,\sigma^2I)\right) = \frac{\varepsilon}{1-\varepsilon}.
    \]
    The described joint distributions $\widetilde{P}$ and $\widetilde{Q}$ are indeed belong to $\mathcal{P}$. This can be checked by using the Hellinger distance $H(\cdot,\,\cdot)$, as a lower bound for the total variation distance  
    \begin{align*}
        \mathrm{TV}\left(\mathcal{N}(u_0,\,\sigma^2 I),\,\mathcal{N}(u_k,\,\sigma^2I)\right) &\geq H^2\left(\mathcal{N}(u_0,\,\sigma^2 I),\,\mathcal{N}(u_k,\,\sigma^2I)\right) \\
        & = 1- \int_{\mathbb{R}^L} \sqrt{\phi_0(x)\phi_k(x)}\, \mathrm{dx}\\
        & = 1- \exp\left(-\frac{1}{8\sigma^2}(u_\kappa-u_0)^T(u_\kappa-u_0)\right)\\
        & = 1-\exp\left(-\frac{1}{8}\frac{\kappa^2L}{\sigma^2}\right)
    \end{align*}
    Therefore, we have 
    \[
     1-\exp\left(-\frac{1}{8}\frac{\kappa^2L}{\sigma^2}\right) \leq \frac{\varepsilon}{1-\varepsilon},
    \]
    which is equivalent to 
    \begin{equation}\label{left}
         \frac{\kappa^2 L}{\sigma^2} \leq 8\log \left(1+\frac{\varepsilon}{1-2\varepsilon}\right) \leq \frac{8 \varepsilon}{1-2\varepsilon}.
    \end{equation}

    Finally, we check that under the construction of $\widetilde{P}$ and $\widetilde{Q}$ we have $\widetilde{Q} = \widetilde{P}$. Notice the identity that 
    \begin{equation}\label{indenti}
         \phi_0+(\phi_\kappa-\phi_0)\mathds{1}_{\phi_\kappa>\phi_0} = \phi_\kappa+(\phi_0-\phi_\kappa)\mathds{1}_{\phi_0>\phi_\kappa}.
    \end{equation}
    The joint distribution of $Y_1,\dotsc,Y_L$ has density 
    \begin{equation}\label{d1}
        (1-\varepsilon)\phi_0+(1-\varepsilon)(\phi_\kappa-\phi_0)\mathds{1}_{\phi_\kappa>\phi_0},
    \end{equation}
    while the joint distribution of $Y'_1,\dotsc,Y'_L$ has density
    \begin{equation}\label{d2}
        (1-\varepsilon)\phi_\kappa+(1-\varepsilon)(\phi_0-\phi_\kappa)\mathds{1}_{\phi_0>\phi_\kappa}.
    \end{equation}
    Therefore, the two joint distributions agree as a result of equation (\ref{indenti}). Similarly, the joint distribution of $Y_{jL+1},\dotsc, Y_{(j+1)L}$ has density as in equation (\ref{d2}), for $j=1,\dotsc,c(n)-1$, and the joint distribution of $Y'_{jL+1},\dotsc, Y'_{(j+1)L}$ has density as in equation (\ref{d1}), for $j=1,\dotsc,c(n)-1$. As a result, we can conclude $\widetilde{Q} = \widetilde{P} \in \mathcal{P}$.

    Note that $d_H(\bm{\eta}(\widetilde{P}),\,\bm{\eta}(\widetilde{Q})) \geq n/2$ by construction, therefore by the Le Cam Lemma \citep[e.g.][]{yu1997assouad}, we have 
    
   \[ \inf_{\hat{\bm{\eta}}} \sup_{P\in \mathcal{P}} \mathbb{E}_P(d_H(\hat{\bm{\eta}},\,\bm{\eta}(P))) \geq \frac{n}{4}\{1-\text{TV}(\widetilde{P},\widetilde{Q})\} = \frac{n}{4}.
   \]
    \end{proof}
 
 \begin{lemma}\label{lowerbound1}
    Let $Y_1 \dotsc,Y_n$ be a time series satisfying Assumption \ref{modelasmmp} with only one change point and let $P^n_{\kappa,L,\sigma,\varepsilon}$ denote the corresponding joint distribution. Consider the class of distribution 
    \[
    \mathcal{P} = \left\{P^n_{\kappa,L,\sigma,\varepsilon} : \frac{\kappa}{\sigma} < 2\sqrt{\varepsilon}, L \leq \floor*{\frac{n}{4}} \right\},
    \]
    then  \begin{equation*}
       \inf_{\hat{\eta}} \sup_{P\in \mathcal{P}} \mathbb{E}_P(d_H(\hat{\bm{\eta}},\,\bm{\eta}(P))) \geq \frac{n}{4}
   \end{equation*}
   where the infimum is over all estimators $\hat{\bm{\eta}}$ of the change point locations and $\bm{\eta}(P)$ is the true change point of $P \in \mathcal{P}$.
    \end{lemma}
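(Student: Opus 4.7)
The plan is to apply Le Cam's two-point method, in the same spirit as the proof of Lemma~\ref{sublemma}. I would take $\widetilde{P}$ to have a single change point at $L$ and $\widetilde{Q}$ to have a single change point at $n - L$, so that $d_H(\bm{\eta}(\widetilde{P}), \bm{\eta}(\widetilde{Q})) = n - 2L \geq n/2$ by the hypothesis $L \leq \floor*{n/4}$, and then rig the contamination distributions $\{H_i\}$ and $\{H'_i\}$ so that the joint laws $\widetilde{P}$ and $\widetilde{Q}$ of $(Y_1,\ldots,Y_n)$ coincide. Once this is achieved, the standard Le Cam inequality gives
\[
\inf_{\widehat{\bm{\eta}}}\sup_{P \in \mathcal{P}} \mathbb{E}_P\bigl[d_H(\widehat{\bm{\eta}}, \bm{\eta}(P))\bigr] \;\geq\; \frac{1}{2}\,d_H(\bm{\eta}(\widetilde{P}), \bm{\eta}(\widetilde{Q})) \bigl(1 - \mathrm{TV}(\widetilde{P}, \widetilde{Q})\bigr) \;\geq\; n/4.
\]

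The crux is exhibiting a pair of one-dimensional distributions $F_0$ (mean $0$) and $F_1$ (mean $\kappa$), both with variance at most $\sigma^2$, and with $\mathrm{TV}(F_0, F_1) < \varepsilon$. Gaussian marginals as used in Lemma~\ref{sublemma} fall short here, since $\mathrm{TV}(\mathcal{N}(0,\sigma^2), \mathcal{N}(\kappa, \sigma^2))$ is of order $\kappa/\sigma \asymp \sqrt{\varepsilon}$, which can substantially exceed $\varepsilon$. Instead I would use two-point distributions sharing a common atom at $\kappa/2$,
\[
F_0 = (1-t)\delta(\kappa/2) + t\,\delta\bigl(-(1-t)\kappa/(2t)\bigr), \quad F_1 = (1-t)\delta(\kappa/2) + t\,\delta\bigl((1+t)\kappa/(2t)\bigr),
\]
which have the correct means and, by symmetry, equal variances $(1-t)\kappa^2/(4t)$. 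Choosing $t = \kappa^2/(\kappa^2 + 4\sigma^2)$ saturates the constraint $\mathrm{Var} \leq \sigma^2$ and yields $\mathrm{TV}(F_0, F_1) = t < \kappa^2/(4\sigma^2) < \varepsilon$ by the hypothesis $\kappa/\sigma < 2\sqrt{\varepsilon}$. In particular $(1 - \varepsilon)\,\mathrm{TV}(F_0, F_1) < \varepsilon$, which is exactly the slack needed to absorb the mismatch into the contamination budget.

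With $F_0, F_1$ in hand, I would define $\widetilde{P}$ by setting $F_i = F_0$ for $i \leq L$ and $F_i = F_1$ for $i > L$; and $\widetilde{Q}$ by setting $F'_i = F_0$ for $i \leq n - L$ and $F'_i = F_1$ for $i > n-L$. For indices $i$ outside the middle block $\{L+1, \ldots, n-L\}$ the two models already agree on $F_i$, so I take $H_i = H'_i$ to be any fixed distribution. For $i$ in the middle block, writing $(F_0 - F_1)_+$ and $(F_0 - F_1)_-$ for the positive and negative parts of the Jordan decomposition (each of total mass $t$), I would set
\[
H_i \;=\; \frac{1-\varepsilon}{\varepsilon}(F_0 - F_1)_+ + R, \qquad H'_i \;=\; \frac{1-\varepsilon}{\varepsilon}(F_0 - F_1)_- + R,
\]
where $R$ is any non-negative measure of mass $1 - (1-\varepsilon)t/\varepsilon > 0$. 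Then $H_i, H'_i$ are valid probability measures, and an elementary check using $F_0 - F_1 = (F_0 - F_1)_+ - (F_0 - F_1)_-$ yields $(1-\varepsilon)F_1 + \varepsilon H_i = (1-\varepsilon)F_0 + \varepsilon H'_i$, so the marginal distribution of $Y_i$ agrees under $\widetilde{P}$ and $\widetilde{Q}$. Independence of the $Y_i$'s (guaranteed by Assumption~\ref{modelasmmp}) then gives $\widetilde{P} = \widetilde{Q}$, closing the Le Cam step.

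The main obstacle is the construction of $F_0, F_1$ with sufficiently small TV: the Hellinger-type bound used in Lemma~\ref{sublemma} is lossy by a factor of $\sqrt{\varepsilon}$ for single marginals, which forces one to move away from Gaussian $F_i$'s and exploit the full flexibility granted by Assumption~\ref{modelasmmp} by picking discrete distributions that saturate the variance bound. This is the key departure from the proof of Lemma~\ref{sublemma}, where the $L$-fold product structure already gave a factor-$L$ gain in the Hellinger bound.
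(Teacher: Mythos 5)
Your proof is correct, and it follows the same overall Le Cam two-point strategy as the paper's argument for this lemma: construct two models $\widetilde{P},\widetilde{Q}$ with far-apart change points whose joint laws coincide exactly, so that $\mathrm{TV}(\widetilde{P},\widetilde{Q})=0$ and the Le Cam bound gives $n/4$. The interesting difference is in how you make the per-observation marginals match within the contamination budget. The paper's construction is asymmetric and extremal: it takes $F_0 = \delta(0)$ (variance $0$) and $F_1 = (1-\varepsilon)\delta(0) + \varepsilon\,\delta(\kappa/\varepsilon)$, so that $\mathrm{TV}(F_0,F_1)=\varepsilon$ exactly and $F_1 - (1-\varepsilon)F_0 = \varepsilon\,\delta(\kappa/\varepsilon)$ is itself a valid contamination; with $H_i = \delta(\kappa/\varepsilon)$ pre-change and $H_i = F_1$ post-change, every $Y_i$ becomes marginally $F_1$, so the whole sequence is i.i.d.\ under both models. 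Your construction is symmetric: two discrete laws sharing an atom at $\kappa/2$ with common mass $1-t$, each saturating the variance constraint, so that $\mathrm{TV}(F_0,F_1)=t=\kappa^2/(\kappa^2+4\sigma^2) < \varepsilon$, and then the Jordan-decomposition trick (the same one the paper uses in its Lemma for the $8\varepsilon/(1-2\varepsilon)$ case) absorbs the $t$-mass discrepancy into the contamination. Both are valid; the paper's choice saves the Jordan-decomposition step and makes the ``joint laws coincide'' claim completely transparent, while yours is a more systematic recipe that cleanly exposes the key bound $\mathrm{TV}(F_0,F_1)<\varepsilon$ and would extend more readily to other moment constraints. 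Your remark explaining why Gaussian marginals fail at the single-observation level (TV of order $\kappa/\sigma\asymp\sqrt{\varepsilon}\gg\varepsilon$) is also correct and is exactly the reason the paper switches to discrete distributions in this lemma.
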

    
\begin{proof}[Proof of Lemma \ref{lowerbound1}]
    Let $\widetilde{P}$ denote the joint distribution of $\{Y_i\}_{i=1}^n$ with one change point at $L \leq n/4$ such that
    \[
    Y_i \sim (1-\varepsilon)F_i+ \varepsilon H_i
    \]
    where $F_1 = F_2=\dotsc =F_L=\delta(0)$ and $F_{L+1}=\dotsc = F_n$ have the following distribution 
    
    \[   \left\{
\begin{array}{ll}
      F_i = \delta(0) \quad \text{with probability} \quad 1-\varepsilon \\
      F_i = \delta(\kappa/\varepsilon) \quad \text{with probability} \quad \varepsilon \\
\end{array} 
\right. \]
for $i = L+1,\dotsc,n$. Under this setting, we have $f_1 = f_2 = \dotsc = f_L =0$ and $f_{L+1} = \dotsc = f_n = \kappa$. For the outlier distributions, we choose $H_1 = \dotsc = H_L = \delta(\kappa/\varepsilon)$ while $H_{L+1} = \dotsc = H_n$ have the same distribution as $F_n$. Note that by this construction, we have $\{Y_i\}_{i=1}^n$ are independent identically distributed. 

Similarly, let $\widetilde{Q}$ denote the joint distribution of $\{Y'_i\}_{i=1}^n$ with one change point at $L' \geq 3n/4$ such that 
 \[
    Y_i' \sim (1-\varepsilon)F_i'+ \varepsilon H_i'
 \]
where we choose $F'_1 = \dotsc = F'_{L}$ to be the same as $F_1$ and $F'_{L+1} = \dotsc = F'_{n}$ to be the same as $F_n$. For the outlier distributions, we choose $H_1' = \dotsc = H_{L'}'$ to be the same as $H_1$ while $H_{L'+1}' = \dotsc, H_n'$ to be the same as $H_n$.  

     Note that under this construction, we have $\widetilde{P} = \widetilde{Q} \in \mathcal{P}$, since $\{F_i\}_{i=1}^L$ have variance $0$, and $\{F_i\}_{i=L+1}^{n}$ have variance 
     \[
     \kappa^2(1/\varepsilon-1),
     \]
     which is less equal to $\sigma^2$ under 
     \[
     \frac{\kappa}{\sigma} \leq \sqrt{\frac{2\varepsilon}{1-\varepsilon}} < 2\sqrt{\varepsilon}
     \]
    Since $d_H(\bm{\eta}(\widetilde{P}),\,\bm{\eta}(\widetilde{Q}))\geq n/2$ by construction, therefore by the Le Cam lemma  \citep[e.g.][]{yu1997assouad}, we have 
    
   \[ \inf_{\hat{\eta}} \sup_{P\in \mathcal{P}} \mathbb{E}_P(d_H(\hat{\bm{\eta}},\,\bm{\eta}(P))) \geq \frac{n}{4}\{1-\text{TV}(\widetilde{P},\widetilde{Q})\} = \frac{n}{4}.
   \]
    \end{proof}
    
 \begin{proof}[Proof of Lemma \ref{lowerbound3}] 
 To prove Lemma \ref{lowerbound3}, we consider two classes of distributions
  \begin{gather*}
        \mathcal{P}_1 = \left\{P^n_{\kappa,L,\sigma,\varepsilon} : \frac{\kappa^2 L}{\sigma^2} \geq \zeta_n,\;\; L < \frac{n}{2}\right\},\\
        \mathcal{P}_2 = \left\{P^n_{\kappa,L,\sigma,\varepsilon} : (1-2\varepsilon)\log\left(\frac{1-\varepsilon}{\varepsilon}\right)L\geq \zeta_n, \;\;L < \frac{n}{2}\right\}
    \end{gather*}
 Notice that $\mathcal{P} = \mathcal{P}_1 \cap \mathcal{P}_2$. Therefore, the proof can be completed in two steps. First, we show in Lemma \ref{B3} below that 
 \[
 \inf_{\hat{\bm{\eta}}} \sup_{P\in \mathcal{P}_1} \mathbb{E}_P(d_H(\hat{\bm{\eta}},\,\bm{\eta}(P))) \geq \frac{\sigma^2}{\kappa^2}\frac{e^{-1}}{1-\varepsilon}
 \]
 for all $n$ large enough. Then, we show in Lemma \ref{B4} below that 
 \[
  \inf_{\hat{\bm{\eta}}} \sup_{P\in \mathcal{P}_2} \mathbb{E}_P(d_H(\hat{\bm{\eta}},\,\bm{\eta}(P))) \geq \frac{1}{2(1-2\varepsilon)}\frac{e^{-1}}{\log((1-\varepsilon)/\varepsilon)}. 
 \]
  for all $n$ large enough. Thus, the claim follows. 
 \end{proof}
 
 \begin{lemma}\label{B3}
    Let $Y_1 \dotsc,Y_n$ be a time series satisfying Assumption \ref{modelasmmp} with only one change point and let $P^n_{\kappa,L,\sigma,\varepsilon}$ denote the corresponding joint distribution. Consider the class of distribution 
    \[
    \mathcal{P} = \left\{P^n_{\kappa,L,\sigma,\varepsilon} : \frac{\kappa^2 L}{\sigma^2} \geq \zeta_n, L < \frac{n}{2}\right\},
    \]
    for any sequence $\{\zeta_n\}$ such that $\lim_{n\rightarrow\infty}\zeta_n = \infty$. Then for all $n$ large enough, it holds that
    \begin{equation*}
       \inf_{\hat{\bm{\eta}}} \sup_{P\in \mathcal{P}} \mathbb{E}_P(d_H(\hat{\bm{\eta}},\,\bm{\eta}(P))) \geq \frac{\sigma^2}{\kappa^2}\frac{e^{-1}}{1-\varepsilon}
   \end{equation*}
   where the infimum is over all estimators $\hat{\bm{\eta}}$ of the change point locations and $\bm{\eta}(P)$ is the true change point of $P \in \mathcal{P}$.
\end{lemma}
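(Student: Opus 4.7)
The plan is to apply Le Cam's two-point method, exploiting the mixture structure of the Huber contamination model to reduce the effective Kullback--Leibler divergence between two candidate joint laws by a factor of $1-\varepsilon$. I would place two candidate change points $\eta^{(1)}<\eta^{(2)}=\eta^{(1)}+\Delta$ in $\{1,\dots,n\}$, chosen so that both $\eta^{(1)}$ and $n-\eta^{(2)}$ exceed $\zeta_n\sigma^2/\kappa^2$; this is feasible for $n$ large enough because non-emptiness of $\mathcal{P}$ already forces $\zeta_n\sigma^2/\kappa^2<n/2$. Under $\widetilde P$, take $Y_i\sim(1-\varepsilon)\mathcal{N}(0,\sigma^2)+\varepsilon H$ for $i\le\eta^{(1)}$ and $Y_i\sim(1-\varepsilon)\mathcal{N}(\kappa,\sigma^2)+\varepsilon H$ otherwise; define $\widetilde Q$ analogously with change point $\eta^{(2)}$. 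Here $H$ is any fixed distribution used identically at every index under both constructions, and $\Delta$ is a positive integer to be optimised.

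The two joint laws $\widetilde P,\widetilde Q$ agree on all but the $\Delta$ indices in $(\eta^{(1)},\eta^{(2)}]$, so by independence across time and the joint convexity of $(p,q)\mapsto \mathrm{KL}(p\|q)$,
\[
    \mathrm{KL}(\widetilde P\|\widetilde Q) = \Delta\cdot \mathrm{KL}\!\left((1-\varepsilon)\mathcal{N}(\kappa,\sigma^2)+\varepsilon H\,\|\,(1-\varepsilon)\mathcal{N}(0,\sigma^2)+\varepsilon H\right)\le (1-\varepsilon)\Delta\cdot\frac{\kappa^2}{2\sigma^2}.
\]
This $(1-\varepsilon)$ improvement relative to the uncontaminated case is exactly what produces the $(1-\varepsilon)^{-1}$ factor in the target bound. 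Combining it with the Bretagnolle--Huber inequality $\mathrm{TV}(\widetilde P,\widetilde Q)\le \sqrt{1-e^{-\mathrm{KL}(\widetilde P\|\widetilde Q)}}$, the elementary inequality $1-\sqrt{1-x}\ge x/2$, and Le Cam's two-point lower bound
\[
    \inf_{\hat\eta}\sup_{P\in\mathcal{P}}\mathbb{E}_P[d_H(\hat\eta,\eta(P))]\ge \frac{\Delta}{2}\bigl(1-\mathrm{TV}(\widetilde P,\widetilde Q)\bigr)
\]
(the Hausdorff distance between the singleton change-point sets is $\Delta$) yields an explicit lower bound of the order $\Delta\exp\{-(1-\varepsilon)\Delta\kappa^2/(2\sigma^2)\}$.

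Maximising this over $\Delta$ yields the optimum $\Delta^\star = 2\sigma^2/\{(1-\varepsilon)\kappa^2\}$, at which the exponent equals $-1$ and the lower bound matches $\sigma^2 e^{-1}/\{(1-\varepsilon)\kappa^2\}$ up to an absolute constant. The main obstacle is checking that the two constructed laws actually lie in $\mathcal{P}$ after rounding $\Delta^\star$ to a positive integer: one must verify both $\kappa^2 L/\sigma^2\ge\zeta_n$ and $L<n/2$ for the minimum spacings of $\widetilde P$ and $\widetilde Q$. Since $\Delta^\star$ is a fixed constant for fixed $\kappa,\sigma,\varepsilon$ while the spacings $\eta^{(1)}$ and $n-\eta^{(2)}$ can be taken of order $n$, this constraint is preserved for all $n$ sufficiently large, which is precisely where the hypothesis ``for all $n$ large enough'' is used.
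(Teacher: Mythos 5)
Your proposal follows the paper's proof almost verbatim: the same two-point Le Cam construction with one Gaussian change point shifted by $\Delta$ and the contamination distribution $H$ held fixed across time and across both alternatives, the same use of joint convexity of the KL divergence to extract the $(1-\varepsilon)$ factor, and the same optimisation $\Delta^\star = 2\sigma^2/\{(1-\varepsilon)\kappa^2\}$ subject to the feasibility constraint $\Delta^\star \le n-1-L$. The only cosmetic differences are that you derive $1-\mathrm{TV}\ge\tfrac12 e^{-\mathrm{KL}}$ from Bretagnolle--Huber plus $1-\sqrt{1-x}\ge x/2$ rather than citing Tsybakov's Lemma 2.6, and your version of the two-point lemma carries an extra factor of $1/2$ so you recover the stated bound only up to an absolute constant rather than exactly.
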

    \begin{proof}[Proof of Lemma \ref{B3}]
    Let $P_0$ denote the joint distribution of independent random variables $\{Y_i\}_{i=1}^n$ where each $Y_i$ has distribution
    \[
     (1-\varepsilon) F^0_i+\varepsilon H^0_i.
    \]
    Let 
    \[
    F_1^0 = F_2^0 = \dotsc = F_L^0 = \mathcal{N}(0,\,\sigma^2) \quad \text{and} \quad F_{L+1}^0= F_{L+2}^0 = \dotsc = F_n^0 = \mathcal{N}(\kappa,\,\sigma^2).
    \]
    Similarly, let $P_1$ be the joint distribution of independent random variables $\{Z_i\}_{i=1}^n$ where $Z_i$ has distribution
    \[
    (1-\varepsilon) F^1_i+\varepsilon H^1_i.
    \]
    Let
     \[
    F_1^1 = F_2^1 = \dotsc = F_{L+\Delta}^1 = \mathcal{N}(0,\sigma^2) \quad \text{and} \quad F_{L+\Delta+1}^1= F_{L+2}^1 = \dotsc = F_n^1 = \mathcal{N}(\kappa,\sigma^2),
    \]
    where $\Delta$ is an integer no larger than $n-1-L$. For the adversarial noise distribution, we choose $H^0_1 = \dotsc = H^0_n = H^1_{1} = \dotsc= H^1_n$, i.e.\ the contamination distribution is the same across time and is the same for $P_0$ and $P_1$.
    
    By the Le Cam Lemma \citep[e.g.][]{yu1997assouad} and Lemma 2.6 in \cite{tsybakov2008introduction}, we have 
    \[
    \inf_{\hat{\eta}} \sup_{P\in \mathcal{P}} \mathbb{E}_P(d_H(\hat{\bm{\eta}},\,\bm{\eta}(P))) \geq \Delta(1-\mathrm{TV}(P_0,P_1)) \geq \frac{\Delta}{2}\exp(-\mathrm{KL}(P_0||P_1)).
    \]
    Since both $P_0$ and $P_1$ are product measures, it holds that 
    \[
    \mathrm{KL}(P_0||P_1) = \sum_{i=L+1}^{L+\Delta}\mathrm{KL}((1-\varepsilon) F^0_i+\varepsilon H^0_i||(1-\varepsilon) F^1_i+\varepsilon H^1_i).
    \]
    Using convexity of the KL divergence (e.g.\ Lemma 1 in \cite{do2003fast}), we have 
    \begin{align*}
        \mathrm{KL}(P_0||P_1) &\leq  \sum_{i=L+1}^{L+\Delta} \left((1-\varepsilon)\mathrm{KL}\left(\mathcal{N}(\kappa,\sigma^2)||\mathcal{N}(0,\sigma^2)\right)+\varepsilon\mathrm{KL}\left(H^0_i||H^1_i\right)\right) \\
        &=\Delta (1-\varepsilon)\frac{\kappa^2}{2\sigma^2}+ \varepsilon\sum_{i=L+1}^{L+\Delta}\mathrm{KL}\left(H^0_i||H^1_i\right) \\
        & = \Delta (1-\varepsilon)\frac{\kappa^2}{2\sigma^2}.
    \end{align*}
    since $H^0_i = H^1_i$, for $i = L+1,\dotsc,L+\Delta$. Hence, we have 
    \begin{equation}\label{kllower}
        \inf_{\hat{\eta}} \sup_{P\in \mathcal{P}} \mathbb{E}_P(d_H(\hat{\bm{\eta}},\,\bm{\eta}(P))) \geq \frac{\Delta}{2}\exp\left(-\Delta (1-\varepsilon)\frac{\kappa^2}{2\sigma^2}\right).
    \end{equation}
    
    Next, set $\Delta = \min\{{2\sigma^2/(1-\varepsilon)\kappa^2}, n-1-L\}$. Using the assumption that 
    \[
    \frac{\kappa^2L}{\sigma^2} \geq \zeta_n.
    \]
    where $\zeta_n$ is a diverging sequence, and
    \[
    \zeta_n > \frac{n/2}{n-1-L} \geq \frac{L}{n-1-L},
    \]
    for all $n$ large enough, we have 
    \[
    \frac{(1-\varepsilon)\kappa^2L}{2\sigma^2} > \frac{\kappa^2L}{4\sigma^2} > \frac{L}{n-1-L},
    \]
    for all $n$ large enough. Therefore, it must hold that $\Delta = {2\sigma^2/(1-\varepsilon)\kappa^2}$ for $n$ large enough and the claimed bound follows from (\ref{kllower}).
      \end{proof}

 \begin{lemma}\label{B4}
    Let $Y_1 \dotsc,Y_n$ be a time series satisfying Assumption 1 with only one change point and let $P^n_{\kappa,L,\sigma,\varepsilon}$ denote the corresponding joint distribution. Consider the class of distribution 
    \[
    \mathcal{P} = \left\{P^n_{\kappa,\delta,\sigma,\varepsilon} : (1-2\varepsilon)\log\left(\frac{1-\varepsilon}{\varepsilon}\right)L\geq \zeta_n, L < \frac{n}{2}\right\},
    \]
    for any sequence $\{\zeta_n\}$ such that $\lim_{n\rightarrow\infty}\zeta_n = \infty$. Then for all $n$ large enough, it holds that
    \begin{equation*}
       \inf_{\hat{\eta}} \sup_{P\in \mathcal{P}} \mathbb{E}_P(d_H(\hat{\bm{\eta}},\,\bm{\eta}(P))) \geq \frac{1}{2(1-2\varepsilon)} \frac{e^{-1}}{\log((1-\varepsilon)/\varepsilon)}
   \end{equation*}
   where the infimum is over all estimators $\hat{\bm{\eta}}$ of the change point locations and $\bm{\eta}(P)$ is the true change point of $P \in \mathcal{P}$.
\end{lemma}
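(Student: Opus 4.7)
The plan is to prove Lemma~\ref{B4} via Le Cam's two-point method, paralleling the proof of Lemma~\ref{B3} but with a qualitatively different construction. Instead of Gaussian signals whose per-coordinate KL scales as $\kappa^2/\sigma^2$, we use Dirac signals together with carefully matched Dirac contamination so that the per-coordinate KL becomes the Bernoulli KL $c := (1-2\varepsilon)\log((1-\varepsilon)/\varepsilon)$. In particular, the contamination — rather than the signal — is the primary source of indistinguishability between the two hypotheses about the change point location.

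For a positive integer $\Delta$ to be chosen later, I construct two distributions $P_0, P_1 \in \mathcal{P}$ as follows. Under $P_0$, place the single change point at $L$: take $F_i = \delta(0)$ and $H_i = \delta(\kappa)$ for $i \leq L$, and $F_i = \delta(\kappa)$ and $H_i = \delta(0)$ for $i > L$. Under $P_1$, place the single change point at $L+\Delta$ and define $F_i, H_i$ analogously, with the swap happening at index $L+\Delta$ rather than $L$. Both laws lie in $\mathcal{P}$ since each $F_i$ is Dirac (variance $0 \leq \sigma^2$), each coordinate has contamination level $\varepsilon$, and each distribution has exactly one change point of jump size $\kappa$. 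By construction the marginal law of $Y_i$ is identical under $P_0$ and $P_1$ for every $i \notin \{L+1,\ldots,L+\Delta\}$; inside this ambiguous window the marginals are the two swapped Bernoulli-type distributions $(1-\varepsilon)\delta(\kappa)+\varepsilon\delta(0)$ and $(1-\varepsilon)\delta(0)+\varepsilon\delta(\kappa)$, whose pairwise KL divergence is exactly $c$. Since both joint laws factorise and agree outside the window,
$$\mathrm{KL}(P_0\,\|\,P_1) = \Delta\, c.$$

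Following exactly the argument used in Lemma~\ref{B3}, the Le Cam lemma combined with Lemma~2.6 of \cite{tsybakov2008introduction} then yields
$$\inf_{\hat\eta}\sup_{P\in\mathcal{P}} \mathbb{E}_P[d_H(\hat\eta,\eta(P))] \geq \frac{\Delta}{2}\exp(-\Delta c).$$
Treating $\Delta$ as continuous, the right-hand side is maximised at $\Delta^\star = 1/c$ with value $e^{-1}/(2c)$, which is exactly the claim. One then takes $\Delta$ to be the nearest positive integer to $\Delta^\star$; the hypothesis $cL \geq \zeta_n \to \infty$ together with $L < n/2$ ensures $\Delta \leq L/\zeta_n \ll n-L$, so $L+\Delta$ is a legitimate change point position for all $n$ sufficiently large.

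The main obstacle is spotting the construction — specifically, placing the contamination at the \emph{opposite} signal level from $F_i$ so that the ambiguous-window marginals collapse to two swapped Bernoullis with KL exactly $c$; once that is in place, verifying membership in $\mathcal{P}$, factorising the KL, and optimising over $\Delta$ are all routine. A secondary technical point is the integer rounding of $\Delta^\star$ (and the edge case $1/c < 1$), which can be absorbed into the implicit constant of the stated bound.
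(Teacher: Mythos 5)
Your construction is exactly the paper's: Dirac signals with Dirac contamination placed at the opposite level, so that each coordinate in the ambiguous window contributes the Bernoulli KL $(1-2\varepsilon)\log((1-\varepsilon)/\varepsilon)$, followed by Le Cam's lemma and $\Delta\asymp 1/c$. This matches the paper's argument step for step.
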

    
    \begin{proof}[Proof of Lemma \ref{B4}]
    Let $P_0$ denote the joint distribution of $\{Y_i\}_{i=1}^n$ where each $Y_i$ has distribution 
    \[
    (1-\varepsilon) F^0_i+\varepsilon H^0_i.
    \]
    Let 
    \[
    F_1^0 = F_2^0 = \dotsc = F_L^0 = \delta(0) \quad \text{and} \quad F_{L+1}^0= F_{L+2}^0 = \dotsc = F_n^0 = \delta(\kappa).
    \]
    The outlier distributions are chosen as 
    \[
    H_1^0 = H_2^0 = \dotsc = H_L^0 = \delta(\kappa) \quad \text{and} \quad H_{L+1}^0= H_{L+2}^0 = \dotsc = H_n^0 = \delta(0)
    \]
    Similarly, let $P_1$ be the joint distribution of random variables $\{Z_i\}_{i=1}^n$ where each $Z_i$ has distribution
    \[
    (1-\varepsilon) F^1_i+\varepsilon H^1_i.
    \]
    Let
     \[
    F_1^1 = F_2^1 = \dotsc = F_{L+\Delta}^1 = \delta(0) \quad \text{and} \quad F_{L+\Delta+1}^1= F_{L+2}^1 = \dotsc = F_n^1 = \delta(\kappa),
    \]
    where $\Delta$ is an integer no larger than $n-1-L$. The outlier distributions are chosen as 
    \[
    H_1^0 = H_2^0 = \dotsc = H_{L+\Delta}^0 = \delta(\kappa) \quad \text{and} \quad H_{L+\Delta+1}^0= H_{L+2}^0 = \dotsc = H_n^0 = \delta(0)
    \]
     
    By the Le Cam Lemma \citep[e.g.][]{yu1997assouad} and Lemma 2.6 in \cite{tsybakov2008introduction}, we have 
    \[
    \inf_{\hat{\eta}} \sup_{P\in \mathcal{P}} \mathbb{E}_P(|\hat{\eta}-\eta(P)|) \geq \frac{\Delta}{2}(1-\mathrm{TV}(P_0,P_1)) \geq \frac{\Delta}{2}\exp(-\mathrm{KL}(P_0,P_1)).
    \]
    Note that, for $i = L+1,\dotsc,L+\Delta$, we have
    \begin{align*}
        \mathrm{KL}\left((1-\varepsilon) F^0_i+\varepsilon H^0_i, (1-\varepsilon) F^1_i+\varepsilon H^1_i\right) &= (1-\varepsilon)\log\left(\frac{1-\varepsilon}{\varepsilon}\right) + \varepsilon \log\left(\frac{\varepsilon}{1-\varepsilon}\right) \\
        &= (1-2\varepsilon)\log\left(\frac{1-\varepsilon}{\varepsilon}\right).
    \end{align*}
    Since both $P_0$ and $P_1$ are product measures, it holds that
    \begin{align*}
         \mathrm{KL}(P_0||P_1) &= \sum_{i=L+1}^{L+\Delta}\mathrm{KL}((1-\varepsilon) F^0_i+\varepsilon H^0_i||(1-\varepsilon) F^1_i+\varepsilon H^1_i). \\
         & = \Delta(1-2\varepsilon)\log\left(\frac{1-\varepsilon}{\varepsilon}\right)
    \end{align*}
    Hence, we have 
     \begin{equation}\label{kllower2}
        \inf_{\hat{\eta}} \sup_{P\in \mathcal{P}} \mathbb{E}_P(d_H(\hat{\bm{\eta}},\,\bm{\eta}(P))) \geq \frac{\Delta}{2}\exp\left(-\Delta(1-2\varepsilon)\log\left(\frac{1-\varepsilon}{\varepsilon}\right)\right).
    \end{equation}
    Next, set $\Delta = \min\left\{\frac{1}{(1-2\varepsilon)\log\left(\frac{1-\varepsilon}{\varepsilon}\right)}, n-1-L\right\}$. Using our assumption that 
    \[
    (1-2\varepsilon)\log\left(\frac{1-\varepsilon}{\varepsilon}\right)L \geq \zeta_n.
    \]
    where $\zeta_n$ is a diverging sequence, and
    \[
    \zeta_n > \frac{n/2}{n-1-L} \geq \frac{L}{n-1-L}.
    \]
    for $n$ large enough, we must have 
    \[
    (1-2\varepsilon)\log\left(\frac{1-\varepsilon}{\varepsilon}\right)L > \frac{L}{n-1-L}
    \]
    for $n$ large enough. 
    Therefore, we have $\Delta = \frac{1}{(1-2\varepsilon)\log\left(\frac{1-\varepsilon}{\varepsilon}\right)}$ for $n$ large enough, and the claimed bound follows from (\ref{kllower2}).
      \end{proof}
      
\section{Proofs of the results in Section \ref{section3}}\label{proofsection3}

In this section, we prove Theorem \ref{scp2}, utilising Proposition \ref{rume} and ideas in \cite{niu2012screening}. Corollary \ref{nsp} follows straightforwardly from the proof of Theorem \ref{scp2}. Lastly, Proposition \ref{prop2} considers the range of $h$ that satisfies the assumptions in Theorem \ref{scp2}.

\begin{proof}[Proof of Theorem \ref{scp2}]\label{pthm1}
Denote all points that are  more than $2h$ away from any true change point by $\mathcal{F}$, i.e. $\mathcal{F} = \{x:|x-\eta_k|>2h, \,\forall k=1,\dotsc,K\}$. In the first step, we show that under assumptions (i)-(iii), it holds that for $\forall x \in \mathcal{F}$, 
\begin{equation}\label{thm21}
    \left|\left(\text{RUME}\left(\{Y_i\}_{i=x+1}^{x+2h}\right) - \mathbb{E}[F_{x}]\right) - \left(\text{RUME}\left(\{Y_i\}_{i=x-2h+1}^{x}\right)-\mathbb{E}[F_x]\right)\right| \leq \lambda
\end{equation}
with probability at least $1-10\delta$. Note that $x \in \mathcal{F}$ is equivalent to say there is no change point in the interval $[x-2h, \,x+2h]$. Therefore, we have $Y_{x-2h+1},\dotsc,Y_{x+2h}$ are independent random variables with distribution 
\[
(1-\varepsilon_i)F_x+\varepsilon_i H_i,
\]
for $i = x-2h+1,\dotsc,x+2h$.
Without loss of generality, we can assume $\eta_k < x < \eta_{k+1}$, then $\mathbb{E}[F_x] = f_{\eta_{k+1}}$. Under the assumption (ii), we can apply Proposition \ref{rume} and a union bound to get for sufficiently large $C_\lambda$ and the choice of $\lambda = C_\lambda \sigma \sqrt{\varepsilon'}$
\[
\left|\left(\text{RUME}\left(\{Y_i\}_{i=x+1}^{x+2h}\right) - f_{\eta_{k+1}}\right)\right| \leq \frac{\lambda}{2} \quad \text{and}  \quad \left|\left(\text{RUME}\left(\{Y_i\}_{i=x-2h+1}^{x}\right)-f_{\eta_{k+1}} \right)\right|\leq \frac{\lambda}{2}
\]
with probability at least $1-10\delta$. Consequently, equation (\ref{thm21}) follows from the triangle inequality. 

Next, we use similar arguments to show that with probability at least $1-10\delta$
\begin{equation}\label{thm22}
 \left|\left(\text{RUME}\left(\{Y_i\}_{i=\eta_k+1}^{\eta_k+2h}\right) \right) - \left(\text{RUME}\left(\{Y_i\}_{i=\eta_k-2h+1}^{\eta_k}\right)\right) \right| > \lambda.
\end{equation}
for $\forall k = 1,2,\dotsc,K$. Note that the assumption $L>8h>4h$ guarantees that $Y_{\eta_k+1},\dotsc, Y_{\eta_k+2h}$ are independent random variables with distribution $(1-\varepsilon)F_{\eta_{k+1}}+\varepsilon H_{i}$ for $i = \eta_k+1, \dotsc, \eta_k+2h$ where $\mathbb{E}[F_{\eta_{k+1}}] = f_{\eta_{k+1}}$ and $Y_{\eta_k-2h+1}, \dotsc,Y_{\eta_k}$ are independent random variables with distribution $(1-\varepsilon_j)F_{\eta_{k}}+\varepsilon_j H_{j}$ for $j = \eta_k-2h+1, \dotsc, \eta_k$, where $\mathbb{E}[F_{\eta_{k}}] = f_{\eta_k}$. We take square of the left hand side of equation (\ref{thm22}) and rewrite it as 
\begin{align*}
     &\left|\left(\text{RUME}\left(\{Y_i\}_{i=\eta_k+1}^{\eta_k+2h}\right) - f_{\eta_{k+1}}\right) - \left(\text{RUME}\left(\{Y_i\}_{i=\eta_k-2h+1}^{\eta_{k}}\right)-f_{\eta_{k}}\right) + (f_{\eta_k+1}-f_{\eta_{k}})\right|^2 \\
    & \geq \kappa^2/2-\left|\left(\text{RUME}\left(\{Y_i\}_{i=\eta_k+1}^{\eta_k+2h}\right) - f_{\eta_{k+1}}\right) - \left(\text{RUME}\left(\{Y_i\}_{i=\eta_k-2h+1}^{\eta_{k}}\right)-f_{\eta_{k}}\right)\right|^2,
\end{align*}
where the inequality follows from the observation that $(x+y)^2 \geq x^2/2 - y^2$ for any $x,y \in \mathbb{R}$. Using Proposition \ref{rume} and triangle inequality, we have 
\[
\left|\left(\text{RUME}\left(\{Y_i\}_{i=\eta_k+1}^{\eta_k+2h}\right) - f_{\eta_{k+1}}\right) - \left(\text{RUME}\left(\{Y_i\}_{i=\eta_k-2h+1}^{\eta_{k}}\right)-f_{\eta_{k}}\right)\right|^2 \leq \lambda^2
\]
with probability at least $1-10\delta$. Together with the assumption that $\kappa>2\lambda$, we have 
\[
 \left|\left(\text{RUME}\left(\{Y_i\}_{i=\eta_k+1}^{\eta_k+2h}\right) \right) - \left(\text{RUME}\left(\{Y_i\}_{i=\eta_k-2h+1}^{\eta_k}\right)\right) \right|^2 \geq \kappa^2/2-\lambda^2 > \lambda^2.
\]

In the second step, we consider the following events 
\begin{align*}
       B_x &= \{|D_h(x)|<\lambda\}  \\
A_{\eta_k}  &= \{|D_h{(\eta_k)}|>\lambda\} \\
\mathcal{E}_n &= \left(\cap_{k=1}^K A_{\eta_k}\right) \cap \left(\cap_{\substack{x\in \mathcal{F}}}B_x\right)
\end{align*}
and argue that on the event $\mathcal{E}_n$, we have 
\[
\hat{K}=K \quad \text{and} \quad \max_{k=1,\dotsc,\hat{K}}|\hat{\eta}_k-\eta_k|\leq 2h.
\]

Note that it is sufficient to show that on the event $\mathcal{E}_n$ it holds that i) for any estimated change point $ \hat{\eta}_k, \;k = 1,2\dotsc, \hat{K}$, there is a unique true change point $\eta_k$ lying in the interval $(\hat{\eta}_k-2h,\,\hat{\eta}_k+2h) $ and ii) for each true change point $\eta_k$, $k = 1,2,\dotsc,K$, there is a unique estimated change point located in the interval $(\eta_k-2h,\,\eta_k+2h)$. 

For i), we notice that $\hat{\eta}_k \in \mathcal{F}^c$ for all $k = 1,2,\dotsc,\hat{K}$, according to the definition of event $\mathcal{E}_n$. Therefore, there is at least one true change point in the interval $(\hat{\eta}_k-2h,\,\hat{\eta}_k+2h)$. Using the assumption $L>8h>4h$, we see there is at most one true change point in $(\hat{\eta}_k-2h,\,\hat{\eta}_k+2h)$. Therefore i) holds. For ii), using the assumption $L>8h$, we know every point in the intervals $(\eta_k+2h,\eta_k+6h)$ and $(\eta_k-2h,\eta_k-6h)$ belong to $\mathcal{F}$. This means $|D_h(x)|< \lambda$ for all x in the aforementioned two intervals. Therefore the $4h$ local maximizers of $|D_h(x)|$ for $x \in (\eta_k -2h,\eta_k+2h)$ correspond to the unique local maximizer $\eta^*$ of $|D_h(x)|$ for $x \in (\eta_k -2h,\eta_k+2h)$, and we have $|D_h(\eta^*)|\geq |D_h(\eta_k)|>\lambda$.  

In the last step, we show that $\mathbb{P}(\mathcal{E}_n^c) \rightarrow$ 0 using (\ref{thm21}) and (\ref{thm22}) from step 1. Note that using union bound, we have
\begin{equation} \label{complement}
    \mathbb{P}(\mathcal{E}_n^c) \leq \mathbb{P}(\cup_{k=1}^K A_{\eta_k}^c)+\mathbb{P}(\cup_{\substack{x\in \mathcal{F}}}B_x^c) \leq n\max_{k}\mathbb{P}(A_{\eta_k}^c)+n\max_{x\in \mathcal{F}}\mathbb{P}(B_x^c).
\end{equation}
Using (\ref{thm22}), we have
\[
\max_{k}\mathbb{P}(A_{\eta_k}^c) = \max_{k}\mathbb{P}\left(\left|\left(\text{RUME}\left(\{Y_i\}_{i=\eta_k+1}^{\eta_k+2h}\right) \right) - \left(\text{RUME}\left(\{Y_i\}_{i=\eta_k-2h+1}^{\eta_k}\right)\right) \right| <\lambda. \right)\leq 10\delta
\]
Using (\ref{thm21}), we have 
\begin{align*}
\max_{x\in \mathcal{F}}\mathbb{P}(B_x^c) = \mP\left( \left|\left(\text{RUME}\left(\{Y_i\}_{i=x+1}^{x+2h}\right) - \mathbb{E}[Y_{x+1}]\right) - \left(\text{RUME}\left(\{Y_i\}_{i=x-2h+1}^{x}\right)-\mathbb{E}[Y_x] \right)\right| > \lambda \right) \\
\leq 10 \delta.
\end{align*}

Combining the upper bounds for $\mathbb{P}(A_{\eta_k}^c)$ and $\mathbb{P}(B_{x}^c)$, we can conclude that
\[
\mP(\mathcal{E}_n^c) \leq 20n\delta = 20n^{1-C'} \rightarrow 0 
\]
under the choice that $\delta = n^{-C'}$ for some constant $C'>1$.
\end{proof}

\begin{proof}[Proof of Corollary \ref{nsp}]
Since $\kappa = 0$, we have $Y_1, \dotsc, Y_n$ are independent random variables with distribution 
\[
(1-\varepsilon_i)F_i+\varepsilon_i H_i
\]
for $i = 1,\dotsc,n$, where $\mathbb{E}[F_i] = f_1$. Using (\ref{thm21}) from the Proof of Theorem \ref{scp2}, we have for $x = 2h,\dotsc, n-2h$, it holds that $D_h(x) < \lambda$ with probability at least $1-10\delta$. Using the notation from the Proof of Theorem \ref{scp2}, we have on the event $\cap_{x = 2h}^{n-2h}B_x$, it holds that $\hat{K} = 0$. It follows from a union bound that 
\[
\mP\left(\bigcup_{x = 2h}^{n-2h}B_x^c\right) \leq 10n\delta,
\]
Therefore, the claim follows by choosing $\delta = n^{-C'}$ for some constant $C'>1$.
\end{proof}

\begin{prop}\label{prop2} Under the same notation as in Theorem \ref{scp2}, the following choices of $h$ can guarantee that assumptions (i)-(iii) holds
\begin{gather*}
    h>\max\left\{10, \,4C_\lambda^2\frac{\sigma^2}{\kappa^2}\right\}C'\log(n), \qquad \text{if}\quad \varepsilon < 0.1, \\
    h>  h(\varepsilon)C'\log(n) \qquad \text{if} \quad 0.1<\varepsilon<\frac{1}{4}\min{\left\{1, \,\frac{\kappa^2}{\sigma^2C_\lambda^2}\right\}},
\end{gather*}
where $$h(\theta) = \frac{1}{0.5-\sqrt{2\theta(1-2\theta)}}.$$
\end{prop}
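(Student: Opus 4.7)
The plan is to verify conditions (i), (ii), and (iii) of Theorem~\ref{scp2} by case analysis on $\varepsilon$, with (iv) being merely the definition of $\lambda$ and (iii), the upper bound $h<L/8$, compatible with the lower bounds exhibited below since they are $o(n)$. Throughout I would set $a := C'\log(n)/h$ and $\varepsilon' = \max\{\varepsilon, a\}$; in this notation condition (ii) reads $2\varepsilon' + 2\sqrt{\varepsilon' a} + a < 1/2$, while condition (i) splits into the parameter constraint $\kappa^2 > C_\lambda^2 \sigma^2 \varepsilon$ (which must hold for (i) to be satisfiable by any $h$, and so is taken as implicit) and the $h$-constraint $h > C_\lambda^2 \sigma^2 \log(n)/\kappa^2$.

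For the small-$\varepsilon$ case ($\varepsilon \leq 0.1$), the choice $h > \max\{10,\, 4 C_\lambda^2 \sigma^2/\kappa^2\} C' \log(n)$ forces $a<1/10$ and hence $\varepsilon' \leq 1/10$, so the left-hand side of condition (ii) is bounded above by $2/10 + 2\sqrt{1/100} + 1/10 = 1/2$, with strictness inherited from strict inequalities in the hypotheses. The $h$-half of (i) follows because $4C'>1$ gives $h > C_\lambda^2 \sigma^2 \log(n)/\kappa^2$.

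For the large-$\varepsilon$ case ($0.1 < \varepsilon < (1/4)\min\{1,\, \kappa^2/(\sigma^2 C_\lambda^2)\}$), the hypothesis $\varepsilon < \kappa^2/(4 C_\lambda^2 \sigma^2)$ immediately yields the parameter half of (i), and $h > h(\varepsilon) C' \log(n)$ forces $a < 1/h(\varepsilon) \leq 1/10 \leq \varepsilon$ since $h(\varepsilon) \geq h(0.1) = 10$ on $(0.1, 1/4)$ (as $2\varepsilon(1-2\varepsilon)$ is increasing there), so $\varepsilon' = \varepsilon$. The hard part is condition (ii): writing $b = \sqrt{a}$ turns it into the quadratic $b^2 + 2\sqrt{\varepsilon}\, b + (2\varepsilon - 1/2) < 0$, whose positive-root upper bound is $b < \sqrt{1/2 - \varepsilon} - \sqrt{\varepsilon}$, squaring to $a < 1/2 - 2\sqrt{\varepsilon(1/2-\varepsilon)} = 1/2 - \sqrt{2\varepsilon(1-2\varepsilon)} = 1/h(\varepsilon)$, which matches the chosen threshold. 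The $h$-half of (i) then follows from $h(\varepsilon) C' > 10 > 1/(4\varepsilon) > C_\lambda^2 \sigma^2/\kappa^2$ by the case hypothesis. The blow-up $h(\varepsilon) \to \infty$ as $\varepsilon \to 1/4$ mirrors the constraint $\sqrt{2\varepsilon(1-2\varepsilon)} < 1/2$, i.e.\ $\varepsilon < 1/4$, beyond which no $h$ can make condition (ii) hold with $\varepsilon' = \varepsilon$; this is exactly the upper bound appearing in the large-$\varepsilon$ hypothesis, and the completion of the square that produces $h(\theta)$ is the main calculation I expect to be non-routine.
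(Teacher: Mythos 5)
Your proof is correct and matches the paper's in its core calculation: the key step, reducing condition (ii) (when $\varepsilon' = \varepsilon$) to a quadratic in $\sqrt{a}$ and completing the square to recover $a < 1/2 - \sqrt{2\varepsilon(1-2\varepsilon)} = 1/h(\varepsilon)$, is exactly the paper's, as is the observation that $h(\varepsilon)\geq 10$ for $\varepsilon>0.1$ and that $\varepsilon < \kappa^2/(4C_\lambda^2\sigma^2)$ supplies the parameter half of (i). Where you differ is the organisation of the case split: the paper splits on whether $\varepsilon' = \varepsilon$ or $\varepsilon' = C'\log(n)/h$ (equivalently whether $h$ exceeds $C'\log(n)/\varepsilon$), so that for the small-$\varepsilon$ case it writes a \emph{two-sided} window $\max\{10, 4C_\lambda^2\sigma^2/\kappa^2\}C'\log(n) < h < C'\log(n)/\varepsilon$ and uses non-emptiness to deduce the $\varepsilon < 0.1$ constraint. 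You instead split directly on the value of $\varepsilon$ and, in the small-$\varepsilon$ branch, bound $\varepsilon'\le 0.1$ and $a<0.1$ simultaneously and feed these into the left-hand side of (ii) without ever needing the upper bound on $h$; this is a genuine (if small) improvement in tightness, since it verifies the proposition's stated one-sided lower bound for \emph{all} $h$ above the threshold, whereas the paper's proof as written only explicitly treats the regime $h < C'\log(n)/\varepsilon$. You also correctly flag that $\kappa^2 > C_\lambda^2\sigma^2\varepsilon$ is an implicit standing hypothesis (it is part of condition (i) and cannot be produced by any choice of $h$); the paper leaves this unsaid for the small-$\varepsilon$ case.
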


\begin{proof}[Proof of Proposition \ref{prop2}]
We consider two separate cases:
\begin{enumerate}
    \item When $\varepsilon' = \varepsilon$, which is equivalent to 
    \[
    h >\frac{C'\log(n)}{\varepsilon},
    \]
    assumption (ii) can be simplified as 
    \begin{gather*}
        2\varepsilon+2\sqrt{\varepsilon\frac{C'\log(n)}{h}}+\frac{C'\log(n)}{h} < \frac{1}{2}, \\
        \sqrt{\varepsilon}+\sqrt{\frac{C'\log(n)}{h}}<\sqrt{\frac{1}{2}-\varepsilon} \\
        \mbox{and} \quad h> \left(\frac{\sqrt{C'\log(n)}}{\sqrt{1/2-\varepsilon}-\sqrt{\varepsilon}}\right)^2 = h(\varepsilon) C'\log(n).
    \end{gather*}
    Note that $h(\varepsilon) \geq 1/\varepsilon $ for $0.1<\varepsilon<0.25$. Combining with assumption (i), we have if $0.1<\varepsilon<\frac{1}{4}\min{\left\{1,\frac{\kappa^2}{\sigma^2C_\lambda^2}\right\}}$, then $h>h(\varepsilon)\log(n)$ can satisfy the assumptions for Theorem \ref{scp2}.
    \item When $\varepsilon' = C'\log(n)/h$, which is equivalent to 
    \[
    h < \frac{C'\log(n)}{\varepsilon},
    \]
    the assumption (i) requires 
    \[
    h > 4C_\lambda^2 C' \frac{\sigma^2}{\kappa^2}\log(n),
    \]
    and assumption (ii) requires 
    \[
    h> 10C' \log(n).
    \]
    Combining these three conditions, we have if $h$ satisfies  
    \begin{equation}\label{3.1}
        \max\left\{10, \,4C_\lambda^2\frac{\sigma^2}{\kappa^2}\right\}C'\log(n)< h < \frac{C'\log(n)}{\varepsilon},
    \end{equation}
    then assumptions in Theorem \ref{scp2} are satisfied. To ensure that (\ref{3.1}) is not an empty set, it is sufficient to require $\varepsilon < 0.1$.
\end{enumerate}
\end{proof}

\section{Further details of the numerical results in Section \ref{numerical}}\label{detailnum}

\subsection{Tuning parameter selection in simulations}\label{tuningsele}

 For \textsc{Biweight}~\cite{fearnhead2019}, we choose (using their notation) the default value as $K = 3\sigma$ and $\beta = 2\sigma^2\log(n) = 17\sigma^2$, where $\sigma$ is the standard deviation of the uncontaminated distribution $F_i$. We denote this choice of tuning paramter as \textsc{Biweight}(2). It is noted in \cite{fearnhead2019} that such choice is not guaranteed to ensure consistency. Therefore, we also consider a stronger penalty value $\beta = 5\sigma^2\log(n) = 42.6\sigma^2$ in the simulation (denoted as \textsc{Biweight(5)}). For the \textsc{R\_cusum} procedure, we combine it with the wild binary segmentation framework \cite{fryzlewicz2014wild} for multiple change point scenarios. We generate 500 random intervals with threshold set to $5\sigma^2\log(n) = 42.6\sigma^2$. 

For the robust $U$-statistics test, we only consider the univariate version of their proposed statistic $T_n$ here and the distribution of $T_n$ under the null (no change points) is approximated by the Gaussian multiplier bootstrap. Combining with the backward detection algorithm (c.f.~Algorithm 1 in~\cite{yu2019robust}), this robust test can be used to perform multiple change point detection. 

For a\textsc{arc}, the value $\varepsilon$ is chosen via the tournament procedure proposed by \cite{chen2016general}.  We include the details here for completeness.

We use the first $T = 300$ of each simulated data set as the training set and 
obtain a sequence of estimates $\{\hat{\theta}_1,\dotsc,\hat{\theta}_m\}$ returned by the RUME with input $\{\varepsilon_1,\dotsc,\varepsilon_m\}$ which is an equally spaced set from $0$ to $0.25$ with size $m = 201$. Consider the following pairwise test function:
\begin{align*}
    \phi_{jk} &= \mathds{1}\left\{\Bigg|\frac{1}{T}\sum_{i=1}^T \mathds{1}\left\{p_{\hat{\theta}_j}(Y_i)>p_{\hat{\theta}_k}(Y_i)\right\} - P_{\hat{\theta}_j}\left(Y > \frac{\hat{\theta}_j + \hat{\theta}_k}{2}\right) \Bigg| > \right.\\
    & \qquad \quad \left.\Bigg| \frac{1}{T}\sum_{i=1}^T \mathds{1}\left\{p_{\hat{\theta}_j}(Y_i)>p_{\hat{\theta}_k}(Y_i)\right\} - P_{\hat{\theta}_k}\left(Y > \frac{\hat{\theta}_j + \hat{\theta}_k}{2}\right) \Bigg|\right\},
\end{align*}
where $p_{\hat{\theta}_j}$ is the probability density of $P_{\hat{\theta}_j} = \mathcal{N}(\hat{\theta}_j,\,\sigma^2)$. When $\phi_{jk} = 1$, then $\hat{\theta}_k$ is favoured over $\hat{\theta}_j$, and when $\phi_{jk} = 0$, then $\hat{\theta}_k$ is favoured over $\hat{\theta}_j$. We select $\varepsilon_{j^*}$ that corresponds to the estimate $\hat{\theta}_{j^*}$ where 
\begin{equation}\label{tuning}
    j^* = \argmin_{j = 1,\dotsc,m} \sum_{k \neq j}\phi_{jk}.
\end{equation}

It is shown in \cite{chen2016general} that the above procedure would pick a $j^*$ such that $P_{\hat{\theta}_{j^*}}$ is close to $\mathcal{N}(\theta,\,\sigma^2)$ in total variance metric provided that the training data are independent samples from the Huber's $\varepsilon$-contamination model (\ref{huber0}) with $F = \mathcal{N}(\theta,\,\sigma^2)$ and fixed contamination distribution $H$. Note that we consider here specifically the case when $F$, the uncontaminated distribution, is Gaussian in the Huber $\varepsilon$-contamination model (\ref{huber0}). Other classes of distributions of $F$ can also be considered by using the corresponding density function. 

\subsection{Further simulation results}\label{furthurnumerical}

\subsubsection{Adversarial settings (i) and (ii)}
In this subsection, we provide the complete simulation results for the two adversarial attack settings that are considered in our paper. Tables~\ref{spurious} and \ref{table2} correspond to scenario (i) and (ii), respectively. The last column in Table \ref{spurious} shows the mean error in estimating the number of change points and if $\hat{K} = 2\Delta-1$, it means the algorithm detects the number of spurious change points created by the adversarial noise. The last column in Table \ref{table2} shows the median (rescaled) Hausdorff distance and if $\hat{K} = 0$, it means the algorithm cannot detect the true change points in the presence of contamination.

\begin{longtable}{c c c c c c c}
            \caption{Estimated number of change points for various competing methods over 100 simulations when the adversarial noise tries to create spurious change points. The number of change points in terms of $f_i$ is $K = 0$ while the number of change points in terms of $\mathbb{E}[Y_i]$ is $2\Delta-1$. Bold: methods with the lowest mean error for estimating the number of change point $K$.}\label{spurious} \\

			\toprule
			
			&&&&\multicolumn{2}{c}{Number of detected change points} \\
			$\varepsilon$ & $\Delta$ & $\sigma$ & Method & $ \hat{K} = K$ & $\hat{K} = 2\Delta-1$ &  $(\hat{K}-K)/100$  \\
			
			\midrule
			
			0 & 1 & 1 & \textsc{pelt} & 100 & 0 & \textbf{\textbf{0.00}} \\
			&&& \textsc{arc} & 100 & 0 & \textbf{\textbf{0.00}} \\
			& & & a\textsc{arc} & 77 & 0 & 0.30 \\
			 &  & & \textsc{Biweight}(2) & 100 & 0 & \textbf{\textbf{0.00}} \\
			 &  & & \textsc{Biweight}(5) & 100 & 0 & \textbf{\textbf{0.00}} \\
			 &  &  & \textsc{R\_cusum} & 100 & 0 & \textbf{\textbf{0.00}} \\
			\vspace{1em}
			 & &  & \textsc{R\_UStat} & 100 & 0 & \textbf{\textbf{0.00}} \\
			
			0.05&1& 1& \textsc{pelt} & 0 & 100 & 1.00 \\
			& & &  \textsc{arc} & 85 & 15 & 0.15  \\ 
				& & & a\textsc{arc} & 72 & 10 & 0.56 \\
			&& & \textsc{Biweight}(2) & 0 & 72 & 1.50\\
			&& & \textsc{Biweight}(5) & 1 & 99 & 0.99 \\
			&& & \textsc{R\_cusum} &  97 & 3 & \textbf{0.03}  \\
			&& & \textsc{R\_UStat} &  0 &  88 & 1.15  \\
			
			0.05&1& 5& \textsc{pelt} & 95 & 5 & 0.05 \\
			& & &  \textsc{arc} & 83 & 16 & 0.18  \\ 
				& & & a\textsc{arc} & 85 & 14 & 0.16 \\
			&& & \textsc{Biweight}(2) & 95 & 5 & 0.05\\
			&& & \textsc{Biweight}(5) & 100 & 0 & \textbf{\textbf{0.00}} \\
			&& & \textsc{R\_cusum} &  100 & 0 & \textbf{\textbf{0.00}}  \\
			&& & \textsc{R\_UStat} &  35 &  55 & 0.79  \\
			
			0.05&1& 20& \textsc{pelt} & 100 & 0 & \textbf{\textbf{0.00}} \\
			& & &  \textsc{arc} & 98 & 2 & 0.02  \\ 
				& & & a\textsc{arc} & 97 & 3 & 0.03 \\
			&& & \textsc{Biweight}(2) & 98 & 2 & 0.02\\
			&& & \textsc{Biweight}(5) & 100 & 0 & \textbf{\textbf{0.00}} \\
			&& & \textsc{R\_cusum} &  100 & 0 & \textbf{\textbf{0.00}}  \\
			&& & \textsc{R\_UStat} &  76 &  15 & 0.34  \\
			
			0.05&5&1 & \textsc{pelt} & 0 & 98 & 0.02 \\
			& & &  \textsc{arc} & 89 & 0 & 0.15  \\ 
				& & & a\textsc{arc} & 77 & 0 & 0.39 \\
			&& & \textsc{Biweight}(2) & 9 & 4 & 4.75\\
			&& & \textsc{Biweight}(5) & 96 & 0 & 0.04 \\
			&& & \textsc{R\_cusum} &  100 & 0 & \textbf{\textbf{0.00}} \\
			&& & \textsc{R\_UStat} &  50 &  6 & 2.14  \\
			
			0.05&5&5 & \textsc{pelt} & 100 & 0 & \textbf{\textbf{0.00}} \\
			& & &  \textsc{arc} & 73 & 0 & 0.36  \\ 
				& & & a\textsc{arc} & 67 & 0 & 0.53 \\
			&& & \textsc{Biweight}(2) & 100 & 0 & \textbf{\textbf{0.00}}\\
			&& & \textsc{Biweight}(5) & 100 & 0 & \textbf{\textbf{0.00}} \\
			&& & \textsc{R\_cusum} &  100 & 0 & \textbf{\textbf{0.00}}  \\
			&& & \textsc{R\_UStat} &  87 &  0 & 0.24  \\
			
			0.05&5&20 & \textsc{pelt} & 100 & 0 & \textbf{\textbf{0.00}} \\
			&&& \textsc{arc} & 95 & 0 & 0.05\\
				& & & a\textsc{arc} & 89 & 0 & 0.12 \\
			&& & \textsc{Biweight}(2) & 100 & 0 & \textbf{\textbf{0.00}}\\
			&& & \textsc{Biweight}(5) & 100 & 0 & \textbf{\textbf{0.00}}\\
			&& & \textsc{R\_cusum} &  100 & 0 & \textbf{\textbf{0.00}}  \\
			\vspace{1em}
			&& & \textsc{R\_UStat} &  95 & 0 &0.12      \\
			
			0.1&1&1 & \textsc{pelt} & 0 & 100 & 1.00 \\
			& & &  \textsc{arc} & 83 & 12 & $\bm{0.24}$   \\ 
				& & & a\textsc{arc} & 84 & 4 & 0.47 \\
			&& & \textsc{Biweight}(2) & 0 & 37 & 2.96\\
			&& & \textsc{Biweight}(5) & 0&100&1.00 \\
			&& & \textsc{R\_cusum} &  0 & 100 & 1.00  \\
			&& & \textsc{R\_UStat} &  0 &  79 & 1.29  \\
			
			0.1&1&5 & \textsc{pelt} & 98 & 2 & \textbf{0.02}\\
			&&&  \textsc{arc} &  72& 25 &0.31  \\ 
				& & & a\textsc{arc} & 80 & 20 & 0.20 \\
			&& & \textsc{Biweight}(2) &   27 &70 &0.76 \\
			&&& \textsc{Biweight}(5) & 94 & 6 & 0.06 \\
			&& & \textsc{R\_cusum} & 97&3&0.03   \\
			&& & \textsc{R\_UStat} & 0 & 82 & 1.23   \\
			
			0.1&1&20 & \textsc{pelt} & 100 & 0 & \textbf{\textbf{0.00}} \\
			&&&  \textsc{arc} & 96 & 3 & 0.05 \\
				& & & a\textsc{arc} & 99 & 1 & 0.01 \\
			&& & \textsc{Biweight}(2) & 100 & 0 & \textbf{\textbf{0.00}} \\
			&& & Biwerght(5) & 100 & 0& \textbf{\textbf{0.00}} \\
			&& & \textsc{R\_cusum} &  100&0& \textbf{\textbf{0.00}}\\
			&& & \textsc{R\_UStat} &  51&41&0.57  \\
			
			0.1&2&1 & \textsc{pelt} & 0 & 99 & 3.01\\
			&&&  \textsc{arc} & 86 & 0 & \textbf{\textbf{0.17}} \\
				& & & a\textsc{arc} & 80 & 3 & 0.61 \\
			&& & \textsc{Biweight}(2) & 0 & 28 & 5.23 \\
			&& & Biwerght(5) & 0 & 100 & 3 \\
			&& & \textsc{R\_cusum} &  30 & 20 & 1.43\\
			&& & \textsc{R\_UStat} &  1 & 28 & 2.63  \\
			
			0.1&5&1 & \textsc{pelt} & 2 & 80 & 8.07 \\
			&&& \textsc{arc} & 85 & 0 & 0.19\\ 
			& & & a\textsc{arc} & 85 &0 & 0.54 \\
			&& & \textsc{Biweight}(2) & 0 & 18 & 11.67 \\
			&&&\textsc{Biweight}(5) & 6 & 10 & 4.42 \\
			&& & \textsc{R\_cusum} & 99 & 0 & \textbf{0.02}           \\
			&& & \textsc{R\_UStat} & 0 & 61 & 9.40\\
			
			0.1&5&5 & \textsc{pelt} & 100 & 0 & \textbf{\textbf{0.00}} \\
			&&& \textsc{arc} & 74 & 0 & 0.29\\ 
			& & & a\textsc{arc} & 52 &0 & 1.07 \\
			&& & \textsc{Biweight}(2) & 99 & 0 &0.01 \\
			&&&\textsc{Biweight}(5) & 100 & 0 & \textbf{\textbf{0.00}}\\
			&& & \textsc{R\_cusum} & 100 & 0 & \textbf{\textbf{0.00}} \\
			&& & \textsc{R\_UStat} & 58 & 3 & 1.50\\
			
			0.1 & 5 & 20 & \textsc{pelt} & 100 & 0 & \textbf{\textbf{0.00}} \\
			&&& \textsc{arc} & 90 & 0 & 0.14\\
				& & & a\textsc{arc} & 97 & 0 & 0.03 \\
				&  &  & \textsc{Biweight}(2) & 100 & 0 & \textbf{\textbf{0.00}}\\
				&  &  & \textsc{Biweight}(5) & 100 & 0 & \textbf{\textbf{0.00}}\\
			&& & \textsc{R\_cusum} & 85&0& 0.29           \\
			\vspace{1em}
		
			&& & \textsc{R\_UStat} & 63&0&0.61 \\
			
			0.2&1&1 & \textsc{pelt} & 0 & 100 & 1.00\\
			&&&  \textsc{arc} &  62& 28 & \textbf{0.50}  \\ 
				& & & a\textsc{arc} & 56 & 21 & 0.77 \\
			&& & \textsc{Biweight}(2) &   0 &3 &9.48 \\
			&&& \textsc{Biweight}(5) & 0 & 100 & 1.00 \\
			&& & \textsc{R\_cusum} & 0&100&1.00   \\
			&& & \textsc{R\_UStat} & 0 & 82 & 1.25   \\
			
			0.2&1&5 & \textsc{pelt} & 1 & 99 & 0.99\\
			&&& \textsc{arc} &  75& 25 &0.25  \\ 
			& & & a\textsc{arc} & 99 & 1 & \textbf{0.01} \\
	&& & \textsc{Biweight}(2) &   0 &100 &1.00 \\
			&&& \textsc{Biweight}(5) & 1 & 99 & 0.99 \\
			&& & \textsc{R\_cusum} & 0&100&1.00   \\
			&& & \textsc{R\_UStat} & 0 & 79 & 1.28   \\
			
		0.2&1&20 & \textsc{pelt} & 100 & 0 & \textbf{\textbf{0.00}}\\
			&&&  \textsc{arc} &  100 & 0 & \textbf{\textbf{0.00}}  \\ 
				& & & a\textsc{arc} & 100 & 0 & \textbf{\textbf{0.00}} \\
			&& & \textsc{Biweight}(2) &   97 &3 &0.03 \\
			&&& \textsc{Biweight}(5) & 100 & 0 & \textbf{\textbf{0.00}} \\
			&& & \textsc{R\_cusum} & 100 &0& \textbf{\textbf{0.00}}   \\
			&& & \textsc{R\_UStat} & 0 & 80 & 1.27   \\
			
			0.2&2&1 & \textsc{pelt} & 0 & 100 & 3.00 \\
			&&&  \textsc{arc} &  57& 2 & \textbf{0.53}  \\ 
				& & & a\textsc{arc} & 59 & 7 & 0.84 \\
			&& & \textsc{Biweight}(2) &   0 &3 &11.64 \\
			&&& \textsc{Biweight}(5) & 0 & 100 & 3.00 \\
			&& & \textsc{R\_cusum} & 0&100& 3.00   \\
			&& & \textsc{R\_UStat} & 0 & 33 & 3.01   \\
			
			0.2&5&1 & \textsc{pelt} & 0 & 100 & 9.00 \\
			&&&  \textsc{arc} &  74& 0 & \textbf{0.36}  \\ 
				& & & a\textsc{arc} & 71 & 0 & 0.48 \\
			&& & \textsc{Biweight}(2) &   0 &5 &18.48 \\
			&&& \textsc{Biweight}(5) & 0 & 99 & 9.01 \\
			&& & \textsc{R\_cusum} & 0&71&8.36   \\
			&& & \textsc{R\_UStat} & 0 & 70 & 9.52   \\
			
			0.2&5&5 & \textsc{pelt} & 100 & 0 & \textbf{\textbf{0.00}}\\
			&&&  \textsc{arc} &  67 & 0 &1.56  \\ 
				& & & a\textsc{arc} & 93 & 0 & 0.12 \\
			&& & \textsc{Biweight}(2) &   69 & 0 &0.54 \\
			&&& \textsc{Biweight}(5) & 100 & 0 & \textbf{\textbf{0.00}} \\
			&& & \textsc{R\_cusum} & 100 & 0 & \textbf{\textbf{0.00}}   \\
			&& & \textsc{R\_UStat} & 0 & 51 & 9.26   \\
			
			0.2&5&20 & \textsc{pelt} & 100 & 0 & \textbf{\textbf{0.00}} \\
			&&&  \textsc{arc} & 100 & 0 & \textbf{\textbf{0.00}}  \\ 
				& & & a\textsc{arc} & 100 & 0 & \textbf{\textbf{0.00}} \\
			&& & \textsc{Biweight}(2) &   99 &0 &0.01 \\
			&&& \textsc{Biweight}(5) & 100 & 0 & \textbf{\textbf{0.00}} \\
			&& & \textsc{R\_cusum} & 100 & 0 & \textbf{\textbf{0.00}}   \\
			&& & \textsc{R\_UStat} & 73 & 1 & 0.99   \\
			
			 \bottomrule

\end{longtable}

\begin{longtable}{c c c c c c c H c H H c H}
\caption{Estimated number of change points for various competing methods over 100 simulations when the adversarial noise tries to hide change points. The number of change points in terms of $f_i$ is $K = 2\Delta-1$ while there is no change points in terms of $\mathbb{E}[Y_i]$. Also the median Hausdorff distance divided by sample size and the number of repetitions that the Hausdorff distance is less than 2$h$. Bold: methods with the smallest and second smallest mean error for estimating the number of change point $K$ and median Hausdorff distance divided by sample size.} 
\label{table2}
			\\ \toprule
			&&&&\multicolumn{4}{c}{Number of detected change points} &\multicolumn{5}{c}{$d_H(\hat{\bm{\eta}},\,\bm{\eta})$ }\\
			$\varepsilon$ & $\Delta$ & $\kappa$ & Method & $\hat{K} = 0$ & $\hat{K} = K$ & $|\hat{K}-K|/100$ & Otherwise & $\leq 2h$ & $\leq h$ & $>2h$ & median & mean \\
			\midrule 
			0&1&0.6 & \textsc{pelt} & 0 & 100& \textbf{0.00} & nothing & \textbf{100} &0 & nothing & \textbf{0.00} & \textbf{0.00} \\
			&&& \textsc{Biweight}(2) &0 & 100 & \textbf{0.00} &0 & \textbf{100} & 100 & 0 & \textbf{0.00}  \\
			&& &  \textsc{Biweight}(5) & 0 & 100 & \textbf{0.00} & 0 & \textbf{100} & 100 & 0 & \textbf{0.00}  \\ 
			&& & \textsc{arc} & 0 & 100 & \textbf{0.00} & 0 & \textbf{100}& 100 & 0 & \textbf{0.00}  \\
				&& & a\textsc{arc} & 9 & 78 & 0.22 & \textbf{0.00} & 75 & 85 & 13 & \textbf{0.01} \\
			& & & \textsc{R\_cusum} &   0 & 98 & \textbf{0.00} & 0 &\textbf{99} & 99& 1 & \textbf{0.00}\\
				& & & \textsc{R\_UStat} & 0 & 82 & 0.30& 12 & 82 & 0 & 93 &\textbf{0.00}\\ 
				
				0&2&0.6 & \textsc{pelt} & 0 & 100& \textbf{0.00} & nothing & \textbf{100} &0 & nothing & \textbf{0.00} & 5000 \\
				
			&&	& \textsc{Biweight}(2) & 0 & 100 & \textbf{0.00} &0 & \textbf{100} &100 & 0 & \textbf{0.00} \\
			&& &  \textsc{Biweight}(5) & 0 & 100 & \textbf{0.00} &0 & \textbf{100} &100 & 0 & \textbf{0.00} \\ 
			&& & \textsc{arc} & 0 & 100 & \textbf{0.00} & \textbf{0.00} & \textbf{100} & 100 & 0 & \textbf{0.00}  \\
			&& & a\textsc{arc} & 0 & 88 & 0.12 & \textbf{0.00} & \textbf{87} & 85 & 13 & \textbf{0.01} \\
			& & & \textsc{R\_cusum} &   0 & 99 & \textbf{0.01} & 0 &\textbf{100} & 100& 0 & \textbf{0.00}\\
				
			\vspace{1em}
			& & & \textsc{R\_UStat} & 0 & 70 & 0.44 & 70 & 70 & 0 & 93 & \textbf{0.00}\\ 
			
			0.1 & 1 & 0.6 & \textsc{pelt} & 100 & 0& 1.00 & nothing & 0 &0 & nothing & 1.00 & 5000 \\
			& && \textsc{Biweight}(2) & 39 & 2 & 1.57 &nothing& 0 & 0 &nothing& 0.58 & \textbf{2420.5} \\ 
			& & & \textsc{Biweight}(5)  & 100 & 0& 1.00 & Nothing & 0 &0 & nothing & 1.00 & 5000 \\
			 & & &  \textsc{arc} & 46 & 36 & 0.70 & NOTHING & 33 & 29 & Nothing & 0.31 & 2572.66 \\ 
			 && & a\textsc{arc} & 46 & 32 & 0.96 & Nothing & 32 & \textbf{32} & Nothing & \textbf{0.33} & 2621.66 \\
			& & & \textsc{R\_cusum} &   60 & 40 & \textbf{0.60} & Nothing & \textbf{40} & \textbf{33} & Nothing & 0.50 & \textbf{1529}\\
			& & & \textsc{R\_UStat} & 0 & 91 & \textbf{0.12} & Nothing & \textbf{90} & 3 & Nothing & \textbf{0.00} &3665\\ 
			0.1 & 1 & 0.66 & \textsc{pelt} & 100 & 0& 1.00 & nothing & 0 &0 & nothing & 1.00 & 5000 \\
			& && \textsc{Biweight}(2) & 0 & 9 & 4.85 &nothing& 17 & 14 &nothing& 0.37 & 1558.45 \\ 
			& & & \textsc{Biweight}(5)  & 0 & 100 & \textbf{\textbf{0.00}} & Nothing & \textbf{100} & \textbf{97} & nothing & \textbf{\textbf{0.00}} & \textbf{39.98} \\
			 & & &  \textsc{arc} & 7 & 93 & \textbf{0.07} & NOTHING & \textbf{92} & 50 & Nothing & \textbf{0.00} & 2100.64 \\ 
			 && & a\textsc{arc} & 39 & 52 & 0.59 & Nothing & 52 & 51 & Nothing & 0.04 & 2086.47 \\
			& & & \textsc{R\_cusum} &   8 & 92 & 0.08 & Nothing &91 & \textbf{85} & Nothing & \textbf{0.01} & \textbf{247.72}\\
			& & & \textsc{R\_UStat} & 0& 84 & 0.28 & Nothing & 84 & 3 & Nothing & \textbf{0.00} &3127.5\\ 
				0.1 & 1 & 1 & \textsc{pelt} & 70 & 2 & 1.05 & nothing & 1 &0 & nothing & 1.00 & 3983.96 \\
			& && \textsc{Biweight}(2) & 0 & 48 & 1.28 &nothing& 52 & 50 &nothing& \textbf{0.03} & 826.7 \\ 
			& & & \textsc{Biweight}(5)  & 0 & 100 & \textbf{\textbf{0.00}} & Nothing & \textbf{100} & 100 & nothing & \textbf{\textbf{0.00}} & \textbf{4.19} \\
			 & & &  \textsc{arc} & 3 & 87 & 0.15 & NOTHING & 87 & 87 & Nothing & \textbf{\textbf{0.00}} & 306.64 \\ 
			 && & a\textsc{arc} & 2 & 93 & \textbf{0.08} & Nothing & \textbf{93} & 92 & Nothing & \textbf{\textbf{0.00}} & 197.83 \\
			& & & \textsc{R\_cusum} &   0 & 100 & \textbf{\textbf{0.00}} & Nothing & \textbf{100} & \textbf{100} & Nothing & \textbf{\textbf{0.00}} & \textbf{14.11} \\
			& & & \textsc{R\_UStat} & 0 & 82 & 0.25 & Nothing & 82 & 0 & Nothing & \textbf{0.00}&3072.5\\ 
			0.1 & 2 & 0.6  & \textsc{pelt} & 100 & 0& 3.00 & nothing & 0 &0 & nothing & 1.00 & 5000 \\
			& && \textsc{Biweight}(2) & 58 & 2 & 2.39 &nothing& 0 & 0 &nothing& 1.00 & 3807.29 \\ 
			& & & \textsc{Biweight}(5)  & 100 & 0& 3.00 & Nothing & 0 &0 & nothing & 1.00 & 5000 \\
			 & & &  \textsc{arc} & 35 & 32 & \textbf{1.55} & NOTHING & \textbf{32} & \textbf{29} & Nothing & 0.36 & 2342.85 \\ 
			 && & a\textsc{arc} & 29 & 17 & 1.71 & Nothing & 15 & 12 & Nothing & \textbf{0.25} & \textbf{2206.24} \\
			& & & \textsc{R\_cusum} &   99 & 0 & 2.99 & Nothing & 0 & 0 & Nothing & 0.75 & 3738.9\\
			& & & \textsc{R\_UStat} & 15 & 62 & \textbf{0.80} & Nothing & \textbf{54} & 0 & Nothing & \textbf{0.05} & 3190\\ 
			0.1 & 2 & 0.66  & \textsc{pelt} & 100 & 0& 3.00 & nothing & 0 &0 & nothing & 1.00 & 5000 \\
			& && \textsc{Biweight}(2) & 0 & 6 & 5.55 &nothing& 21 & 12 &nothing& 0.12 & \textbf{661.81} \\ 
			& & & \textsc{Biweight}(5)  & 0 & 100 & \textbf{\textbf{0.00}} & Nothing & \textbf{96} & \textbf{81} & nothing & \textbf{0.02} & \textbf{118.61} \\
			 & & &  \textsc{arc} & 0 & 77 & 0.81 & NOTHING & 78 & \textbf{38} & Nothing & \textbf{0.01} & 1687.48 \\ 
			 && & a\textsc{arc} & 14 & 44 & 1.09 & Nothing & 42 & 35 & Nothing & 0.23 & 1489.71 \\
			& & & \textsc{R\_cusum} &   77 & 4 & 2.6 & Nothing & 4 & 4 & Nothing & 0.75 & 3214.88\\
			& & & \textsc{R\_UStat} & 3 & 64 & \textbf{0.62} & Nothing & \textbf{59} & 0 & Nothing & 0.05 & 2700\\ 
				0.1 & 2 & 1 & \textsc{pelt} & 73 & 1 & 2.55 & nothing & 0 &0 & nothing & 1.00 & 4245.65 \\
			& && \textsc{Biweight}(2) & 0 & 57 & 1.47 &nothing& 75 & 75 &nothing& \textbf{\textbf{0.00}} & 225.36 \\ 
			& & & \textsc{Biweight}(5)  & 0 & 100 & \textbf{\textbf{0.00}} & Nothing & \textbf{100} & \textbf{100} & nothing & \textbf{\textbf{0.00}} & \textbf{8.64} \\
			 & & &  \textsc{arc} & 0 & 83 & 0.2 & NOTHING & 83 & 83 & Nothing & \textbf{0.01} & 252.44 \\ 
			 && & a\textsc{arc} & 0 & 89 & \textbf{0.12} & Nothing & \textbf{89} & \textbf{89} & Nothing & \textbf{0.01} & 180.73 \\
			& & & \textsc{R\_cusum} &   0 & 100 & \textbf{\textbf{0.00}} & Nothing & \textbf{100} & \textbf{100} & Nothing & \textbf{\textbf{0.00}} & \textbf{28.78} \\

			\vspace{1em}
			& & & \textsc{R\_UStat} & 0 & 70 & 0.43 & Nothing & 70 & 0 & Nothing & \textbf{0.00} &1032.5\\

	0.2 & 1 & 1.2  & \textsc{pelt} & 100 & 0& 1.00 & nothing & 0 &0 & nothing & 1.00 & 5000 \\
			& && \textsc{Biweight}(2) & 3 & 0 & 8.01 &nothing& 1 & 1 &nothing& 0.43 & 2088.54 \\ 
			& & & \textsc{Biweight}(5)  & 99 & 1 & 0.99 & Nothing & 0 & 0 & nothing & 1.00 &4974.61 \\
			 & & &  \textsc{arc} & 9 & 47 & 0.69 & NOTHING & 42 & \textbf{21} & Nothing & 0.13 & \textbf{1103.67} \\ 
			 && & a\textsc{arc} & 15 & 41 & 1.03 & Nothing & 33 & 17 & Nothing & 0.24 & 1503.22 \\
			& & & \textsc{R\_cusum} &   6 & 94 & \textbf{0.06} & Nothing & \textbf{92} & \textbf{81} & Nothing & \textbf{0.01} & \textbf{217.1} \\
			& & & \textsc{R\_UStat} & 18 & 68 & \textbf{0.38} & Nothing & \textbf{68} & 3 & Nothing & \textbf{0.00} & 3597.5\\

			0.2&1&1.6 & \textsc{pelt} & 94 & 0& 1.00 & nothing & 0 &0 & nothing & 1.00 & 4773.49 \\
			& && \textsc{Biweight}(2) & 0 & 0 & 27.61 &nothing& 0 & 0 &nothing& 0.47 & 2325.96 \\ 
			& & & \textsc{Biweight}(5)  & 0 & 98 & 0.04 & Nothing & \textbf{98} & \textbf{98} & nothing & \textbf{\textbf{0.00}} & \textbf{30.63} \\
			 & & &  \textsc{arc} & 3 & 97 & \textbf{0.03} & NOTHING & 97 & 93 & Nothing & \textbf{0.01} & 186.51 \\ 
			 && & a\textsc{arc} & 1 & 94 & 0.08 & Nothing & 94 &90 & Nothing & \textbf{0.01} & 164.7 \\
			& & & \textsc{R\_cusum} &   0 & 100 & \textbf{\textbf{0.00}} & Nothing & \textbf{100} & \textbf{100} & Nothing & \textbf{\textbf{0.00}} &\textbf{16.18}\\
			& & & \textsc{R\_UStat} & 0& 84 & 0.20 & Nothing & 84 & 4 & Nothing & \textbf{0.00} & 3152.5\\

			0.2 & 2 & 1.2 & \textsc{pelt} & 100 & 0& 3.00 & nothing & 0 &0 & nothing & 1.00 & 5000 \\
			& && \textsc{Biweight}(2) & 7 & 4 & 6.32 &nothing& 0 & 0 &nothing& 0.24 & 1653.19\\ 
			& & & \textsc{Biweight}(5)  & 100 & 0 & 3.00 & Nothing & 0 & 0 & nothing & 1.00 & 5000 \\
			 & & &  \textsc{arc} & 0 & 29 & \textbf{0.88} & NOTHING & \textbf{43} & \textbf{2} & Nothing & \textbf{0.09} & \textbf{722.8} \\ 
			 && & a\textsc{arc} & 3 & 29 & \textbf{1.09} & Nothing & \textbf{23} & 1 & Nothing & \textbf{0.12} & \textbf{1011.52} \\
			& & & \textsc{R\_cusum} &   52 & 15 & 2.04 & Nothing & 15 & \textbf{12} & Nothing & 0.75 & 5000\\
			& & & \textsc{R\_UStat} & 56 & 23 & 2.03 & Nothing & 18 & 0 & Nothing & 1.00 & 3770\\

			0.2 & 2 & 1.6 & \textsc{pelt} & 94 & 0& 2.89 & nothing & 0 &0 & nothing & 1.00 & 4821.4 \\
			& && \textsc{Biweight}(2) & 0 & 0 & 28.27 &nothing& 1 &0 &nothing& 0.23 & 1106.68 \\ 
			& & & \textsc{Biweight}(5)  & 0 & 98 & 0.06 & Nothing & \textbf{99} & \textbf{98} & nothing & \textbf{\textbf{0.00}} &\textbf{20.07} \\
			 & & &  \textsc{arc} & 0 & 98 & \textbf{0.02} & NOTHING & 98 & 92 & Nothing & 0.01 & 90.98 \\ 
			 && & a\textsc{arc} & 0 & 95 & 0.05 & Nothing & 97 & 79 & Nothing & 0.01 & 109.88 \\
			& & & \textsc{R\_cusum} &  0& 100 & \textbf{\textbf{0.00}} & Nothing & \textbf{100} & \textbf{99}
			& Nothing & \textbf{\textbf{0.00}} & \textbf{31.54}\\
			& & & \textsc{R\_UStat} & 2 & 71 & 0.46 & Nothing & 69 & 0 & Nothing & 0.03 & 2427.5\\

			 \bottomrule
\end{longtable}

\subsubsection{Sensitivity of the choice of $h$}\label{sensitivity}
In the simulations above, we use a fixed $h$ for different choices of $\Delta$, which serves the purpose of testing the sensitivity of $h$.  To be complete, we also directly test the sensitivity of our methods with respect to the choice $h$ on three settings that are considered in the adversarial setting (i) and (ii) from the previous section. We consider five choices of window width $2h$ from the set $\{10\log(n), 20\log(n), 30\log(n), 40\log(n), 60\log(n)\} = \{85, 170, 255, 340, 511\}$. The results in the tables below are obtained by averaging over 100 repetitions and the numbers in the brackets indicate standard errors. The results suggest that a range of choices of the window width can achieve the best performance, whereas if $h$ is chosen to be too small or large relative to $L$, then the assumptions in our Theorem \ref{scp2} would be violated and lead to poor performance of the algorithms.

\begin{enumerate}
    \item Scenario (ii) with $\kappa = 1, \varepsilon = 0.1$ and $\Delta = 2$, which corresponds to $K = 3$ and $L = 1250$.
    \begin{table}[ht]
\centering
\begin{tabular}{@{}llll@{}}
\toprule
Algorithm & Choice of $2h$         & scaled Hausdorff distance & Number of change points \\ \midrule
\multirow{5}{*}{\begin{tabular}[c]{@{}l@{}}ARC \end{tabular}} &85 & 0.11 (0.10)                    & 3.42 (1.10)                   \\ 
&170& 0.05 (0.05)                    & 3.59 (0.89)                   \\ 
&255   & 0.01 (0.02)                    & 3.05 (0.22)                  \\ 
&340       & 0.01 (0.01)                   & 3.00(0.00)                   \\ 
&511      & 0.01 (0.00)                   & 3.00 (0.00)                  \\ \midrule
\multirow{5}{*}{\begin{tabular}[c]{@{}l@{}}aARC\end{tabular}}& 85 & 0.12 (0.09)                    & 4.79 (2.20)                   \\ 
&170& 0.06 (0.08)                    & 3.51 (1.19)                   \\ 
&255   & 0.02 (0.05)                    & 3.03 (0.41)                  \\ 
&340       & 0.02 (0.06)                   & 2.95 (0.26)                   \\ 
&511      & 0.03 (0.06)                   & 2.94 (0.24)                  \\ \bottomrule
\end{tabular}
\end{table}

\item Scenario (ii) with $\kappa = 1, \varepsilon = 0.1$ and $\Delta = 3$, which corresponds to $K = 5$ and $L = 833$.
     \begin{table}[ht]
\centering
\begin{tabular}{@{}llll@{}}
\toprule
Algorithm & Choice of $2h$         & scaled Hausdorff distance & Number of change points \\ \midrule
\multirow{5}{*}{\begin{tabular}[c]{@{}l@{}}ARC \end{tabular}} &85 & 0.09 (0.08)                    & 5.17 (1.21)                   \\
&170& 0.02 (0.03)                    & 5.32 (0.57)                   \\ 
&255   & 0.01 (0.00)                    & 5.00 (0.00)                  \\ 
&340       & 0.01 (0.02)                   & 4.99(0.10)                   \\ 
&511      & 0.17 (0.01)                   & 3.01 (0.10)                  \\ \midrule
\multirow{5}{*}{\begin{tabular}[c]{@{}l@{}}aARC\end{tabular}}& 85 & 0.09 (0.07)                    & 5.98 (1.96)                   \\ 
&170& 0.09 (0.04)                    & 5.27 (0.90)                   \\ 
&255   & 0.03 (0.06)                    & 4.83 (0.49)                  \\ 
&340       & 0.02 (0.03)                   & 4.95 (0.21)                   \\ 
&511      & $\infty$ (NaN)                   & 2.88 (0.41)                  \\ \bottomrule
\end{tabular}
\end{table}

\item Scenario (i) with $\sigma = 1, \varepsilon = 0.1$ and $ \Delta = 5$, which corresponds to $K = 0$ and $L = 5000$.
     \begin{table}[ht]
\centering
\begin{tabular}{@{}llll@{}}
\toprule
Algorithm & Choice of $2h$         & Number of change points \\ \midrule
\multirow{5}{*}{\begin{tabular}[c]{@{}l@{}}ARC \end{tabular}}  &85                    & 6.72 (3.34)                  \\
&170                  & 2.46 (1.92)                   \\ 
&255                     & 0.27 (0.65)                  \\ 
&340                       & 0.03(0.22)                   \\ 
&511                         & 0.00 (0.00)                  \\ \midrule
\multirow{5}{*}{\begin{tabular}[c]{@{}l@{}}aARC\end{tabular}} & 85                 & 7.08 (4.29)                   \\ 
&170                   & 2.15 (3.53)                   \\ 
&255                      & 1.10 (2.11)                  \\ 
&340                          & 0.78 (1.74)                   \\ 
&511                       & 0.18 (0.76)                  \\ 
\bottomrule
\end{tabular}
\end{table}
\end{enumerate}

\subsubsection{Less adversarial attack}\label{lessadv}
In below, we further consider two specific attacks which do not use any knowledge about the true change points and therefore less adversarial in nature. 

\begin{enumerate}
    \item We consider the contamination distributions to be the Normal distributions with means $2\sin(10\pi t/T)$ for $t = 1, ..., T$ and $\epsilon = 0.2$, where the sample size T = 3000. Three true change points are equally spaced and located at $750, 1500$ and $ 2250$. We fixed the signal-to-noise ratio to be $1.2$ with $\kappa = 1.2$ and $\sigma = 1$. Note that the relatively high frequency of the sin function creates the effect of spurious change points on a segment without true change points. We obtain the following results over 100 repetition and the numbers in the brackets indicate standard errors. 
    
    \begin{table}[ht]
\centering
\begin{tabular}{@{}lll@{}}
\toprule
            & scaled Hausdorff distance & Number of change points \\ \midrule
Biweight(2) & 0.16 (0.03)                    & 9.37 (2.40)                   \\ 
Biweight(5) & 0.03 (0.05)                    & 3.24 (0.52)                   \\ 
R\_cusum    & 0.01 (0.02)                    & 3.02 (0.14)                  \\ 
ARC         & 0.05 (0.07)                   & 2.90 (0.30)                   \\
aARC        & 0.06 (0.09)                   & 2.88 (0.46)                  \\ 
PELT        & $\infty$ (NaN)               & 0.00 (0.00)                   \\ 
\bottomrule
\end{tabular}
\end{table}
    
    \item  We consider the contamination distributions to be Cauchy distribution with scale parameter $10$. The experiment set-up is the same as case 1 above. This heavy-tailed contamination has been considered by \cite{fearnhead2019} and the Biweight algorithm is designed and proven to be effective in this setting. Therefore, it is not surprised that they outperform other methods in the results obtained below. 
        \begin{table}[ht]
\centering
\begin{tabular}{@{}lll@{}}
\toprule
            & scaled Hausdorff distance & Number of change points \\ \midrule
Biweight(2) & 0.00 (0.00)                    & 3.08 (0.27)                   \\ 
Biweight(5) & 0.00 (0.00)                    & 3.00 (0.00)                   \\ 
R\_cusum    & 0.01 (0.00)                    & 3.03 (0.14)                  \\ 
ARC         & 0.01 (0.02)                   & 3.02 (0.14)                   \\ 
aARC        & 0.01 (0.01)                   & 3.00 (0.00)                  \\ 
PELT        & 0.24 (0.01)               & 81.74 (12.45)                   \\ 
\bottomrule
\end{tabular}
\end{table}
\end{enumerate}

Based on the results above, we note that the comparison is unfair for the PELT algorithm as it is not a robust algorithm. The R\_cusum algorithm perform competitively but this combination of wild binary segmentation and robust testing procedure has not been studied before (theoretically or empirically). The performances of our algorithms (ARC and aARC) under these less adversarial scenarios are also not bad.

\subsection{Details of Section \ref{realdata}}\label{realdatasupp}

Throughout the real data experiments, we consider choices of $h$ that are smaller than the one used in simulation due suspected short minimal segment lengths and adapt $\lambda = \max\{1.2\sigma\sqrt{5\log(n)h^{-1}}, 8\sigma\varepsilon\}$ to account for the inflated estimation error caused by using a smaller $h$, where $n$ always refers the sample size of the data. Also, we observed that it is not necessary to search for $4h$ local maximisers, which often leads to under estimation of the number of change points. Instead, we search for $2h$ local maximisers in the real data experiments. When implementing a\textsc{arc}, we specify the part of data that we use for selecting~$\varepsilon$ according to the strategy described in Section \ref{tuningsele}. When implementing \textsc{arc}, we manually input the value $\varepsilon$ and adapt $\lambda = \max\{1.2\sigma\sqrt{5\log(n)h^{-1}}, 8\sigma\sqrt{\varepsilon}\}$ to account for the larger asymptotic bias when assuming $F_i$'s are heavy-tailed. \textsc{R\_UStat} is not considered in real data analysis since its extension to multiple change point detection is still preliminary and in particular, no method for choosing the initial block size is available. 

\subsubsection{Well-log data}

\citet{burg2020evaluation} performed a systematic study on the performance of different change point detection algorithms on a selection of real world data sets, which includes the well-log data. Five human annotations were collected for each data set and the covering metric (see Definition \ref{covermetric}) is used to evaluate the distance between the estimated change points and human annotations.  

Given the suspected short minimal segment length in the data, we choose $2h = 10\log(n)$ and adjust $\lambda$ accordingly while using the 3500th to 4000th data point to select $\varepsilon$. Running the a\textsc{arc} algorithm 100 times, we achieve an average score of 0.807 under the covering metric, which is better than the result obtained by fine-tuning the \textsc{Biweight} algorithm as reported in \cite{burg2020evaluation}. We note that \textsc{R\_cusum} implemented with wild binary segmentation also performs competitively achieving a score of 0.849 with 500 intervals and a BIC-type threshold.

\subsubsection{PM2.5 index data}
 For both datasets, we choose $2h = 15\log(n)$ and adapt $\lambda$ to account for both the choice of $h$ and the difference between \textsc{arc} and a\textsc{arc}. We would like to emphasise that the variability of \textsc{arc} and a\textsc{arc} are mainly due the sample splitting step in the \textsc{RUME}, where we randomly split the $2h$ data points into two sets of size $h$. This step is useful both in terms of theoretical analysis and numerical performance. The variability of the algorithms can be stabilised when the change points possess a stronger signal strength and/or the sample size is large.  

\begin{table}[hbt]
\centering
\caption{Frequency of the number of change points detected by a\textsc{arc} and \textsc{arc} over 1000 repetitions}
\label{realtable_arc}
\vspace{1em}
\begin{tabular}{@{}lllll@{}}
\toprule
                          Data sets                                       & Method & $\hat{K} = 2$ & $\hat{K} = 3$ & $\hat{K} = 4$  \\ \midrule
\multirow{2}{*}{\begin{tabular}[c]{@{}l@{}}London PM2.5\end{tabular}} & \textsc{arc}    &  0  &  350 &  463 \\
                                                                      & a\textsc{arc}   &  0  &  790 & 206 \\[3pt]
Beijing PM2.5 (corrupted)                                             & a\textsc{arc}   &  753  & 126  &  33 \\ \bottomrule
\end{tabular}
\end{table}

For the London PM2.5 data, we run a\textsc{arc} and \textsc{arc} 1000 times and use the 1000th to 1500th data points to select $\varepsilon$ in a\textsc{arc}. The result is shown in Table \ref{realtable_arc}. Out of the 1000 repetitions, the three change points corresponding to the a\textsc{arc} result denoted in Figure \ref{realdataplot} are detected 790 times. One additional change point is detected in the first quarter of the data 206 times. While for \textsc{arc} with input $\varepsilon = 0.01$, four change points, which corresponds to the \textsc{arc} result denoted in Figure \ref{realdataplot}, are detected 463 times. \textsc{Biweight}(2) and \textsc{Biweight}(3) seem to detect spurious change points caused by the large variability in the data set, as shown in Figure \ref{london_supp}.

For the Beijing PM2.5 data set, we run a\textsc{arc} 1000 times on the corrupted data set and use 500th to 1000th data points to select $\varepsilon$. The result is shown in Table \ref{realtable_arc}. The original two change points that were detected on the original data set (without contamination) are detected 753 times on the corrupted the data set and only one of them is detected 86 times. There are 126 times when the spurious change point is detected additionally. 

\begin{figure}[hbt]
    \centering
    \includegraphics[width = \linewidth,height = 1.5in]{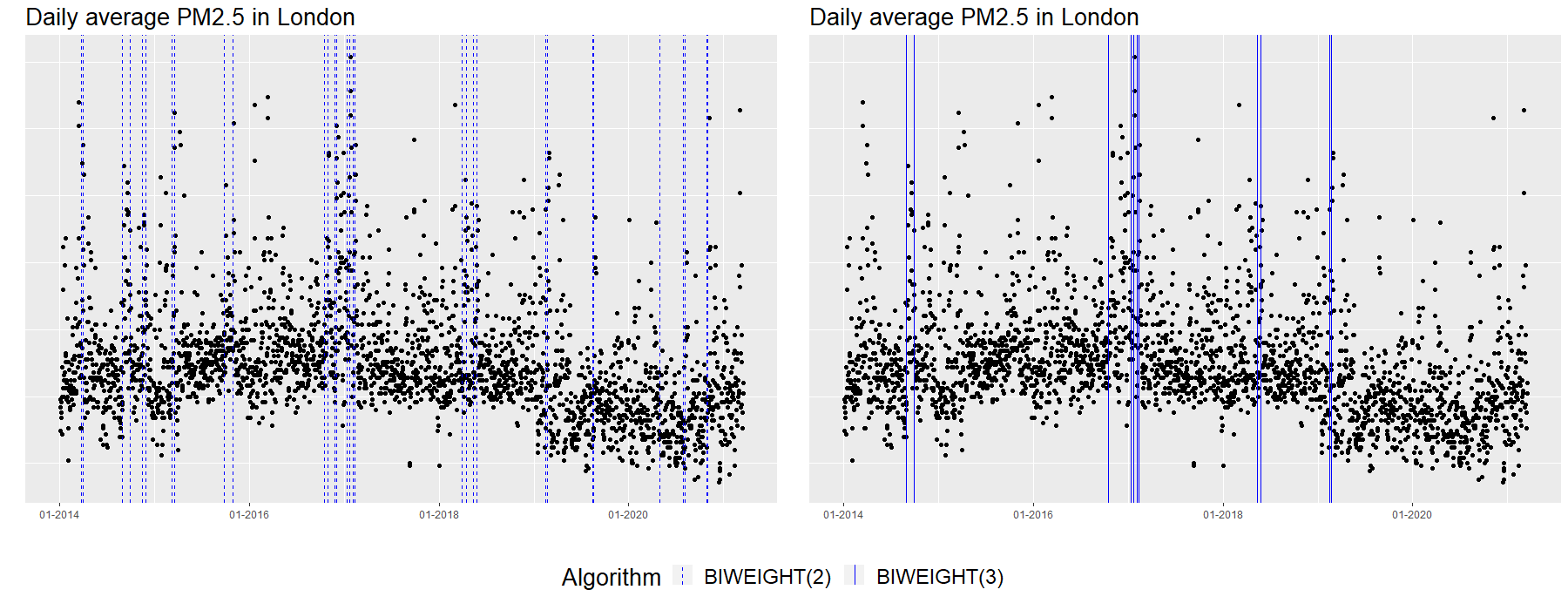}
    \caption{\textsc{Biweight} method on London PM 2.5 data with different penalty values}
    \label{london_supp}
\end{figure}

An inspection of the plot of the diagnostic statistics on the corrupted and clean data sets in Figure \ref{cusum} reveals that the original signal is restored in the presence of adversarial attack due to the local nature of the scanning method, while the fake signal created by the adversary will also exceed the threshold occasionally. We also consider the \textsc{Biweight} method with three different penalty values on the Beijing PM2.5 data. The result is shown in Figure \ref{biweightsupp}. Although different penalty values lead to different segmentation of the original data set, they all detect the spurious change point created by the adversarial noise after the data is contaminated. We use the result of \textsc{Biweight}(3) for illustration in Figure \ref{teaser} in Section \ref{sec-intro}, but we remove the last detected change point for clarity as it seems to be a spurious point near the endpoint of the data set.

\begin{figure}
\centering

\begin{subfigure}{0.45\textwidth}
  \centering
  \includegraphics[width=\linewidth]{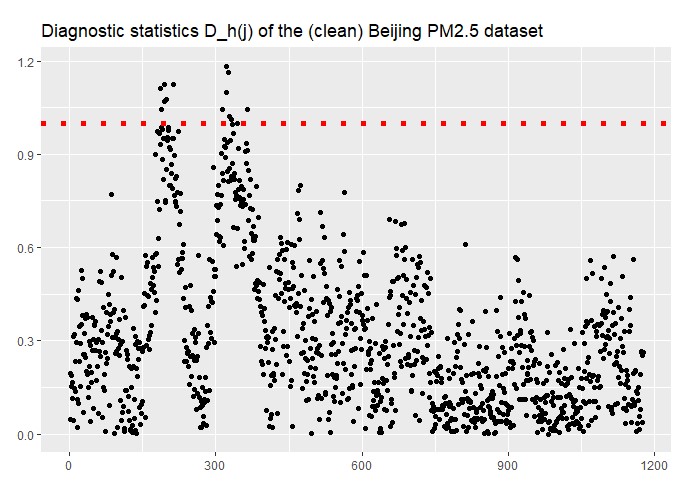}
\end{subfigure}
\begin{subfigure}{0.45\textwidth}
  \centering
  \includegraphics[width=\linewidth]{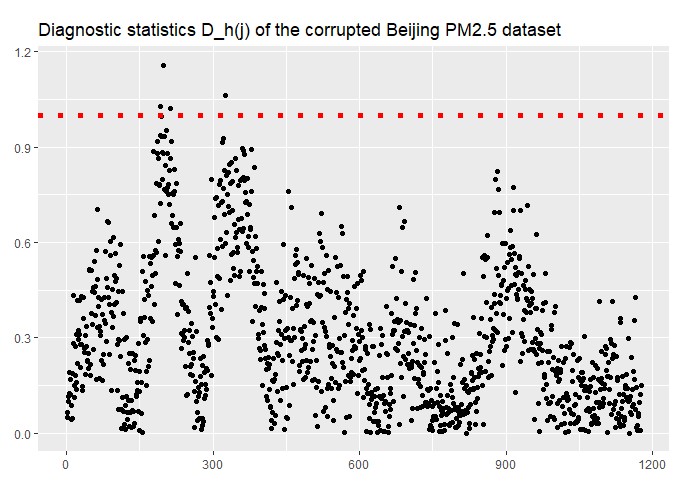}
\end{subfigure}
\caption{Plot of the diagnostic statistics $D_h(j)$ on the clean and corrupted Beijing PM2.5 index data}

\label{cusum}
\end{figure}

\begin{figure}[hbt]
    \centering
    \includegraphics[width=\linewidth]{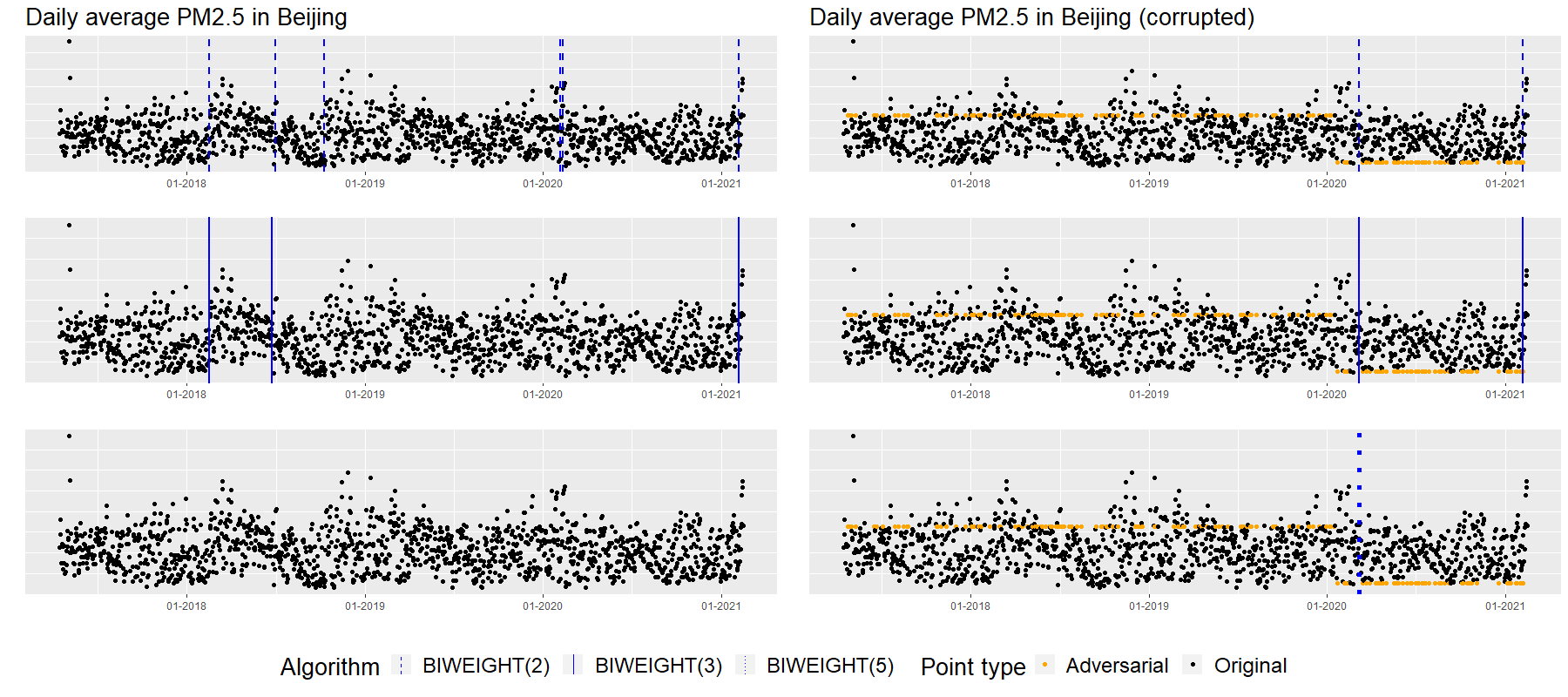}
    \caption{\textsc{Biweight} method on Beijing PM2.5 data with different penalty values}
    \label{biweightsupp}
\end{figure}

\end{document}